\newif\ifabstract
\newif\iffull
\let\realbfseries=\bfseries
\def\bfseries{\realbfseries\boldmath}
\def\defn#1{\textit{\textbf{\boldmath #1}}}
 \gdef\customlabel#1#2{{%
   \protected@write\@auxout{}{\string\newlabel{#2}{{#1}{\thepage}{#1}{#2}{}} }%
  \hypertarget{#2}{}%
}}}
\newcommand{\eps}{{\varepsilon}}
\let\epsilon=\varepsilon
\newcommand{\changed}[1]{{\color{black}#1}}
\newtheorem{theorem}{Theorem}
\newtheorem{claim}{Claim}
\newtheorem{definition}{Definition}
\newtheorem{lemma}{Lemma}
\newtheorem{remark}{Remark}
\newtheorem{observation}{Observation}
\newtheorem{corollary}{Corollary}
\newcounter{section-preserve}
\newcounter{theorem-preserve}
\newcommand{\blank}[1]{}
\newtoks\magicAppendix
\newtoks\magictoks
\newif\iflater
\long\def\later#1{\magictoks={#1}%
	\edef\magictodo{\noexpand\magicAppendix={\the\magicAppendix %
			\the\magictoks%
	}}
	\magictodo}
\long\def\both#1{\magictoks={#1}%
	\edef\magictodo{\noexpand\magicAppendix={\the\magicAppendix %
			\noexpand\setcounter{theorem-preserve}{\noexpand\arabic{theorem}}%
			\noexpand\setcounter{theorem}{\arabic{theorem}}%
			\noexpand\setcounter{section-preserve}{\noexpand\arabic{section}}%
			\noexpand\setcounter{section}{\arabic{section}}%
			\noexpand\let\noexpand\oldsection=\noexpand\thesection
			\noexpand\def\noexpand\thesection{\thesection}
			\noexpand\let\noexpand\oldlabel=\noexpand\label
			\noexpand\let\noexpand\label=\noexpand\blank
			\the\magictoks%
			\noexpand\setcounter{theorem}{\noexpand\arabic{theorem-preserve}}%
			\noexpand\setcounter{section}{\noexpand\arabic{section-preserve}}%
			\noexpand\let\noexpand\thesection=\noexpand\oldsection
			\noexpand\let\noexpand\label=\noexpand\oldlabel
	}}
	\magictodo
	\the\magictoks}
\def\magicappendix{\latertrue \the\magicAppendix}
\long\def\both#1{#1}
\let\later=\both
\def\magicappendix{}
\author{Oswin Aichholzer \thanks{University of Technology Graz, Graz, Austria, \protect\url{oaich@ist.tugraz.at}}\and
  Erik D. Demaine%
    \thanks{Computer Science and Artificial Intelligence Laboratory,
      Massachusetts Institute of Technology, 32 Vassar St.,
      Cambridge, MA 02139, USA,
      \protect\url{{edemaine,virgi}@mit.edu}}
\and
  Matias Korman%
    \thanks{Siemens Electronic Design Automation, OR, USA,
      \protect\url{matias_korman@mentor.com}}
      \and
  Anna Lubiw%
     \thanks{Cheriton School of Computer Science, University of Waterloo,  Waterloo, ON, Canada, \protect\url{jaysonl@mit.edu,alubiw@uwaterloo.ca}. Supported by the Natural Sciences and Engineering Research Council of Canada (NSERC)}
\and
  Jayson Lynch\footnotemark[4]
\and
  Zuzana Mas\'arov\'a%
    \thanks{IST Austria, Am Campus 1, Klosterneuburg 3400, Austria, \protect\url{zuzana.masarova@ist.ac.at}}
    \thanks{Supported by Wittgenstein Prize, Austrian Science Fund (FWF), grant no.~Z 342-N31}
\and
  Mikhail Rudoy%
    \thanks{LeapYear Technologies, \protect\url{mrudoy@gmail.com}}
\and
  Virginia Vassilevska Williams\footnotemark[2] $^,$
    \thanks{Supported by an NSF CAREER Award, NSF Grants CCF-1528078, CCF-1514339 and CCF-1909429, a BSF Grant BSF:2012338, a Google Research Fellowship and a Sloan Research Fellowship.}
\and
  Nicole Wein\thanks{DIMACS, Rutgers University, \protect\url{nicole.wein@rutgers.edu}. Supported by a grant to DIMACS from the Simons Foundation (820931). This work was done while the author was at MIT.}}
\title{Hardness of Token Swapping on Trees}
\begin{document}

\date{}
\maketitle

\begin{abstract}
Given a graph where every vertex has exactly one labeled token, how can we most quickly execute a given permutation on the tokens? In \defn{(sequential) token swapping}, the goal is to use the shortest
possible sequence of \defn{swaps}, each of which exchanges the tokens
at the two endpoints of an edge of the graph. In \defn{parallel token swapping}, the goal is to use the fewest \defn{rounds}, each of which consists of one or more swaps 
on the edges of a matching.
We prove that both of these problems remain NP-hard when the graph is restricted to be a tree.

These token swapping problems have been studied by disparate groups of
researchers in discrete mathematics, theoretical computer science, robot
motion planning, game theory, and engineering. Previous work establishes NP-completeness on general graphs (for both problems), 
constant-factor approximation algorithms, and some poly-time exact algorithms for simple graph classes
such as cliques, stars, paths, and cycles.
%
Sequential and parallel token swapping on \emph{trees} were
first studied over thirty years ago (as ``sorting with a transposition tree'')
and over twenty-five years ago (as ``routing permutations via matchings''),
yet their complexities were previously unknown.

We also show limitations on approximation of sequential token swapping on
trees: we identify a broad class of algorithms that encompass all three known polynomial-time algorithms that achieve the best known approximation factor (which is $2$) and show that no such algorithm can achieve an approximation factor less than $2$.

\end{abstract}

\later{

}
\clearpage

\section{Introduction}
Imagine $n$ distinctly labeled tokens placed without collisions
on the $n$ vertices of a graph~$G$. 
For example, these $n$ tokens might represent (densely packed) movable
\defn{agents}---robots, people, packages, shipping containers,
data packets, etc.---while the $n$ vertices represent possible agent locations.
Now suppose we want to move the tokens/agents around, for example,
to bring certain shipping containers to the loading side of a cargo ship.
In particular, we can suppose every token has a given start vertex and
destination vertex, and the goal is to move every token to its desired
destination.
Because every vertex has a token (agents are densely packed), a natural
reconfiguration operation is to \defn{swap} two adjacent tokens/agents,
that is, to exchange the tokens on the two endpoints of a given edge in~$G$.
In this paper, we study token reconfiguration by swaps from a given start
configuration to a given destination configuration with two natural objective
functions:

\begin{enumerate}
\item \textbf{(Sequential) Token Swapping}
  (a.k.a.\ ``sorting with a transposition graph''~\cite{akers1989group}):
Minimize the number of swaps, i.e., the total work required to reconfigure. 
\item \textbf{Parallel Token Swapping}
  (a.k.a.\ ``routing permutations via matchings''~\cite{alon1994routing}):
Minimize the number of rounds of simultaneous swaps (where the edges
defining the swaps form a matching, so avoid conflicting shared endpoints),
i.e., the total execution time or makespan required to reconfigure.
\end{enumerate}

These reconfiguration problems can be cast in terms of the symmetric group.
Each possible reconfiguration step---swapping along one edge in the
sequential problem, or swapping along every edge of a matching in the
parallel problem---is a particular permutation on the $n$ tokens
(an element of the symmetric group~$S_n$).
Assuming the graph is connected, these permutations generate~$S_n$,
defining a \defn{Cayley graph} $C$~\cite{cayley-graph}
where each node $\pi$ in $C$ corresponds to a permutation $\pi$
of the tokens (a collision-free placement of the tokens)
and an undirected edge connects two nodes $\pi_1,\pi_2$ in $C$
if there is a reconfiguration step (swapping an edge or matching in~$G$)
that transforms between the two corresponding permutations $\pi_1,\pi_2$.
Minimizing the number of reconfiguration steps between two configurations of
the tokens (sequential/parallel token swapping) is equivalent to finding the
shortest path in the Cayley graph between two given nodes corresponding to two
given permutations in~$S_n$.
In fact, sequential token swapping was first studied by Cayley in 1849~\cite{cayley} who (before inventing the Cayley graph) solved the problem
on a clique, i.e., without any constraint on which tokens can be swapped.

Since its introduction, token swapping has been studied by many researchers in
many disparate fields,
from discrete mathematics \cite{cayley,portier-vaughan1990star,vaughan1991bounds,vaughan1995algorithm,pak1999reduced,vaughan1999broom,kraft2015diameters}
and theoretical computer science \cite{jerrum1985complexity,knuth1998art3,ganesan2012efficient,yamanaka2015swapping,yasui2015swapping,AmirP15,miltzow2016approximation,bonnet2017complexity,yamanaka2018colored,chitturi2018sorting,kawahara2019time,chitturi2019sorting,tree-token-swapping}
to more applied fields including
network engineering as mentioned earlier \cite{akers1989group},
robot motion planning \cite{MotionPlanning_SoCG2018,surynek2019multi}, and
game theory \cite{gourves2017object}.

What is the complexity of token swapping?
In general, it is PSPACE-complete to find a shortest path between two given
nodes in a Cayley graph defined by given generators \cite{jerrum1985complexity}.
But when the generators include transpositions (single-swap permutations)
as in both sequential and parallel token swapping,
$O(n^2)$ swaps always suffice \cite{yamanaka2015swapping},
so the token-swapping problems are in NP.
Both sequential token swapping \cite{miltzow2016approximation}
and parallel token swapping \cite{banerjee2017new,kawahara2019time}
are known to be NP-complete on a general graph.

Sequential token swapping on general graphs is also known to be APX-hard \cite{miltzow2016approximation}, and even W[1]-hard with respect to the number of swaps \cite{bonnet2017complexity}. 
For the special case of graphs with constant treewidth and constant diameter, sequential token swapping is also known to be NP-hard \cite{bonnet2017complexity}. Additionally, for the special case of trees, but for the variant where the tokens have ``weights'' and ``colors'' sequential token swapping is known to be NP-hard \cite{tree-token-swapping}. From the algorithms side, there is 
a $4$-approximation
for sequential token swapping in general graphs \cite{miltzow2016approximation}.
Polynomial-time exact algorithms are known for a number of special classes of graphs including
cliques \cite{cayley}, paths \cite{knuth1998art3},
cycles \cite{jerrum1985complexity},
stars \cite{portier-vaughan1990star,pak1999reduced},
brooms \cite{vaughan1999broom,kawahara2019time,tree-token-swapping},
complete bipartite graphs \cite{yamanaka2015swapping}, and
complete split graphs \cite{yasui2015swapping}. The problem is also known to be fixed parameter tractable (where the parameter is the number of swaps) on nowhere dense graphs, which includes planar graphs and graphs
of bounded treewidth \cite{bonnet2017complexity}.
See also the surveys by Kim \cite{kim2016sorting} and
Biniaz et al.~\cite{tree-token-swapping}.

In this paper, we study the special case when the underlying graph is a tree. Sequential token swapping on a tree was first studied over thirty years ago, even before the problem was studied on general graphs.
Akers and Krishnamurthy \cite{akers1989group} studied the problem
in the context of interconnection networks.
Specifically, they proposed connecting processors together in a network defined
by a Cayley graph, in particular a Cayley graph of transpositions corresponding
to edges of a tree (what they call a \defn{transposition tree}),
so the shortest-path problem naturally arises when routing network messages.
They gave an algorithm for finding short (but not necessarily shortest) paths
in the resulting Cayley graphs, and characterized the diameter of
the Cayley graph (and thus found optimal paths \emph{in the worst case}
over possible start/destination pairs of vertices) when the tree is a star.
Follow-up work along this line attains tighter upper bounds on the diameter
of the Cayley graph in this situation when the graph is a tree
\cite{vaughan1991bounds,ganesan2012efficient,kraft2015diameters,chitturi2018sorting}
and develops exponential algorithms to compute the exact diameter of the
Cayley graph of a transposition tree \cite{chitturi2019sorting},
though the complexity of the latter problem remains open.

Sequential token swapping on a tree is the 
\changed{related}
problem
of computing the shortest-path distance
between two given nodes in the Cayley graph of a transposition tree.
For sequential token swapping on a tree, the literature exhibits a curious phenomenon whereby there are three 2-approximation algorithms that were all developed independently and all use completely different techniques. These algorithms are by Akers and Krishnamurthy \cite{akers1989group} in 1989, Vaughan and Portier \cite{vaughan1995algorithm} in 1995, and Yamanaka et al.~\cite{yamanaka2015swapping} in 2015. No better approximation factor than 2 is known. 

Parallel token swapping was also introduced in the context of network routing:
in 1994, Alon, Chung, and Graham \cite{alon1994routing} called the problem
``routing permutations via matchings''.
They focused on worst-case bounds for a given graph (the diameter of the
Cayley graph); in particular, they proved that any $n$-vertex tree
(and thus any $n$-vertex connected graph) admits a 
\changed{solution with less than $3n$ rounds,}
a bound later improved to $\frac{3}{2} n + O(\log n)$ \cite{zhang1999optimal}.
Like sequential token swapping, computing the exact diameter of the Cayley
graph of a given tree remains open. 

Parallel token swapping on a tree is the 
\changed{related} 
problem of computing the shortest-path distance between
two given nodes in such a Cayley graph.
Parallel token swapping is known
to be NP-complete in bipartite maximum-degree-$3$ graphs,
NP-complete even when restricted to just three rounds,
but polynomial-time when restricted to one or two rounds,
but NP-complete again for ``colored'' tokens restricted to two rounds
\cite{banerjee2017new,kawahara2019time}.
Two approximation results are known: an additive approximation for paths
which uses only one extra round \cite{kawahara2019time},
and a multiplicative $O(1)$-approximation for the $n \times n$ grid graph
\cite{MotionPlanning_SoCG2018}.%
\footnote{The results of \cite{MotionPlanning_SoCG2018} are phrased in terms
  of motion planning for robots, and in terms of a model where an arbitrary
  disjoint collection of cycles can rotate one step in a round.  However, the
  techniques quickly reduce to the model of swapping disjoint pairs of robots,
  so they apply to parallel motion planning as well.  They show that there is
  always a solution within a constant factor of the obvious lower bound on
  the number of rounds: the maximum distance between any token's start and
  destination.}
For other special graph classes, there are tighter worst-case bounds
on the diameter of the Cayley graph
\cite{alon1994routing,li2010routing,banerjee2017new}.
See Section \ref{sec:related} for additional related work.

\subsection{Our Results}
There have been many attempts to understand token swapping on a tree, but all have fallen short of determining its actual complexity. To summarize the previously stated results for sequential token swapping on a tree, there are three known 2-approximation algorithms, and no better approximation known. There are also exact algorithms for several special cases of trees, with the most general case being a broom (a path attached to a star). From the hardness side, attempts to prove that the problem is NP-complete have led to NP-completeness proofs for more general cases. In particular, token swapping on graphs of constant treewidth and diameter is NP-hard \cite{bonnet2017complexity}, and the ``weighted, colored'' variant of token swapping on trees is NP-hard \cite{tree-token-swapping}. This leads to our first main question: 

\begin{quote}
\emph{Question: Is sequential token swapping on a tree NP-complete?}
\end{quote}

This question has been implicit since sequential token swapping on a tree was first studied over 30 years ago, and the question has been explicitly stated
\changed{by Biniaz et al.~\cite{tree-token-swapping}
and by Bonnet et al.~\cite{bonnet2017complexity} who conjectured that the answer is yes.}

We resolve this question in the affirmative by providing a proof that sequential token swapping on a tree is NP-complete.

Next, we turn to the approximability of token swapping on a tree. The fact that there were three independently discovered 2-approximation algorithms, and nothing better is known, suggests that perhaps there is some barrier at approximation factor 2. This leads to our second main question: 

\begin{quote}
\emph{Question: Is there an inherent barrier to obtaining a $(2-\epsilon)$-approximation for sequential token swapping on trees?} 
\end{quote}

We address this question by showing that there is indeed a restriction on the \emph{types} of algorithms that can achieve approximation factor better than 2. To motivate the class of algorithms we rule out, it helps to examine known algorithms. Specifically, it was previously known that neither Akers and Krishnamurthy's ``happy swap'' algorithm \cite{akers1989group} nor Yamanaka et al.'s cycle algorithm \cite{yamanaka2015swapping} can possibly achieve an approximation ratio better than 2 \cite{tree-token-swapping}. 
These two algorithms share a natural property: every token $t$ always
remains within distance $1$ of the shortest path from $t$'s start vertex to
$t$'s destination vertex. A natural question is, can a better-than-$2$ approximation be achieved if one allows tokens to deviate from their shortest paths more, say to distance $10$ or $100$?

Motivated by this question, we define an \defn{$\ell$-straying} algorithm as an algorithm that never moves a token a distance more than $\ell$ from its shortest path. We prove a surprisingly strong limitation on $\ell$-straying algorithms: any less-than-$2$-approximation algorithm for sequential token swapping on trees must in general bring a token
\emph{arbitrarily far}---an $\Omega(n^{1-\epsilon})$ distance away---from its shortest path. That is, no $\ell$-straying algorithm for $\ell=o(n^{1-\epsilon})$ can achieve better than a 2-approximation.




The other known 2-approximation algorithm (besides \cite{akers1989group} and \cite{yamanaka2015swapping}), is the Vaughan-Portier algorithm
\cite{vaughan1995algorithm}, which in fact \emph{does} move tokens arbitrarily far from their shortest paths. That is, our result on $\ell$-straying algorithms does not imply a limitation on the Vaughan-Portier algorithm. To address this,
we also obtain the first proof that the Vaughan-Portier algorithm
\cite{vaughan1995algorithm} is no better than a $2$-approximation;
the best previous lower bound for its approximation factor was $\frac{4}{3}$
\cite{tree-token-swapping}. Thus, none of the known algorithms or even their generalizations can improve upon the approximation factor of $2$. 

For parallel token swapping on a tree, less is known than for the sequential version. In particular, there is no known approximation algorithm nor is there any known hardness for tree-like graphs. Thus, the complexity of this problem is completely unclear. This leads to our third main question: 

\begin{quote}
\emph{Question: What is the complexity of parallel token swapping on a tree?} 
\end{quote}

We address this question by showing that parallel token swapping on a tree is NP-hard.

In summary, our results are as follows:

\begin{enumerate}
    \item Sequential token swapping is NP-complete on trees.
    \item Parallel token swapping is NP-complete on trees, even on subdivided stars.
    \item Limitations on known techniques for approximating sequential token swapping on trees:
    \begin{enumerate}[label=\alph*)]
        \item No $\ell$-straying algorithm for any $\ell=O(n^{1-\eps})$ can achieve better than a 2-approximation.
        \item The Vaughan-Portier algorithm does not achieve better than a 2-approximation.
    \end{enumerate}
\end{enumerate}

\subsection{Our Techniques}\label{sec:techniques}


\paragraph*{NP-hardness of sequential token swapping on trees}

Our NP-hardness proof for sequential token swapping on trees is our
most technical and conceptually difficult result. Prior work has built towards this result by providing NP-hardness for generalizations of the problem, but there appear to be barriers against extending these techniques. In the following, we briefly review this prior work and compare it to our own.

Token swapping on trees is known to be NP-hard for the variant where tokens have weights as well as ``colors'' \cite{tree-token-swapping}. 
However, the use of weights and colors appears to be crucial to the reduction. Token swapping is also known to be NP-hard on graphs with treewidth 2 and diameter 6 \cite{bonnet2017complexity}. In particular, the graph in this construction is almost a tree in the sense that if you remove a single vertex the remaining graph is a forest. However, this single vertex has very high degree and is crucial to the construction. Given the apparent barriers against extending these known approaches to token swapping on trees, we take a completely different approach. 

We reduce from the
\defn{permutation generation} problem in Garey and Johnson~\cite[MS6]{Garey-Johnson} (also called the ``word problem for products of symmetric groups'' (WPPSG) in~\cite{garey1980complexity}). In comparison, the above prior work \cite{tree-token-swapping,bonnet2017complexity} reduces from the vertex cover problem, and the 3-dimensional matching problem, respectively. We observe that the permutation generation problem has a similar feel to token swapping, as it can be recast in terms of a token-swapping reachability
problem 
as follows:
%
\begin{quote}
\textbf{Star Subsequence Token-Swapping Reachability (Star STS):}
Given a star graph with center vertex $0$ and leaves $1, 2, \ldots, m$,
where vertex $i$ initially has a token $i$;
given a target permutation $\pi$ of the tokens;
and given a sequence of swaps $s_1, s_2, \ldots, s_n$,
where $s_j \in \{1, \ldots, m\}$ indicates a swap on edge $(0, s_j)$,
is there a subsequence of the given swaps that realizes~$\pi$?
\end{quote}
This recasting of the problem is a technicality that is not conceptually important. Appendix~\ref{appendixA} gives the straightforward reduction from
permutation generation to Star STS. 

As a first step towards reducing from Star STS to token swapping on trees, we reduce to 
\emph{weighted} token swapping on trees,
where each token has a non-negative integer \emph{weight},
and the cost of a swap is the sum of the weights of the two tokens being
swapped. Our reduction contains only tokens of weight 0 or 1. That is, the tokens of weight 0 are free to move, while the tokens of weight 1 cost to move. Our reduction from Star STS to 0/1-weighted token swapping on trees is quite simple. 
It is presented in Section \ref{sec:NP-hard-weighted}.

The situation becomes much more complicated when we extend this result from the 0/1-weighted setting to the unweighted setting. Now, we need to \emph{simulate} the weight-0 tokens using unweighted tokens. This introduces several complications. 

First, we will describe why weight-0 tokens are integral to our reduction from Star STS to 0/1-weighted token swapping on trees. The Star STS problem asks whether there \emph{exists} a subsequence of swaps that realizes the target permutation $\pi$. This subsequence could contain any number of swaps. In our reduction to 0/1-weighted token swapping, the swaps from this subsequence are represented using tokens of weight 0. This way, if there is a solution to the Star STS instance, then the cost of the 0/1-weighted token swapping is the same \emph{regardless} of how many swaps occurred in the solution to the Star STS instance. 
This introduces a challenge for unweighted token swapping for the following reason.
For 0/1 weighted token swapping, we prove a statement of the form ``if the token swapping cost is exactly $K$ then there is a solution to the Star STS instance'', while for unweighted token swapping, we prove a statement of the form ``if the token swapping cost is within a particular \emph{range} then there is a solution to the Star STS instance''. The second statement is much more difficult to prove because we need to argue that the additional swaps in this range do not allow the tokens to move around in a clever way to admit a solution even when there is no Star STS solution. In fact, as we discuss next, natural modifications of the weighted construction \emph{do} admit such clever ways to create counterexamples. 

The most basic first attempt to remove the weights from the weighted construction is simply to replace all weight-0 tokens with unweighted tokens. This construction admits a straightforward counterexample due to the increased cost of swapping these formerly weight-0 tokens. Thus, we would like to make the contribution of the formerly weight-0 tokens negligible in comparison to the weight-1 tokens. A natural attempt is to replace each weight-1 token with a \emph{long path} of tokens. However, as it turns out, there is a surprising and subtle counterexample to this strategy. To overcome this counterexample, we introduce a set of ``padding tokens'' throughout the graph whose role is to block any deviant movement of the original tokens.

The resulting proof is very involved. To give a sense of the complexity, our 0/1-weighted hardness proof fits in just a
couple of pages, while our unweighted proof spans around thirty pages. Our unweighted proof is presented in Section \ref{sec:NP-hard-unweighted}.

%
%
%

\paragraph*{NP-hardness of parallel token swapping on trees }

We prove that parallel token swapping on trees is NP-hard, even when restricted to \defn{subdivided stars}. This result is presented in
Section \ref{sec:parallel}. 
As for sequential token swapping, we reduce from the Star STS problem. 

Our construction is reminiscent of our construction for sequential token swapping,
\changed{although}
the details differ significantly. In particular, we use the single high-degree vertex in the subdivided star as a bottleneck to limit the available parallelism. We
develop ``enforcement'' tokens that need to swap through the high-degree
vertex to force congestion at specific times.
This proof's complexity is between the weighted and unweighted sequential
hardness proofs.

\paragraph*{Limitations on known techniques for approximation algorithms}

%
%
To prove that neither an $\ell$-straying algorithm nor the Vaughan-Portier algorithm can achieve better than a 2-approximation,
we use a problem instance that has been previously used to prove that Akers and
Krishnamurthy's and Yamanaka et al.'s algorithms cannot achieve an approximation ratio better than 2 \cite{tree-token-swapping}. To prove our results, we show that while there exists a solution to the instance with $K$ swaps (for some $K$), (1) every $\ell$-straying algorithm performs $2K$ swaps, and (2) the Vaughan-Portier algorithm performs $2K$ swaps. The existence of a solution with $K$ swaps was already shown by \cite{tree-token-swapping}, so it remains to show that the above 
\changed{algorithms}
require $2K$ swaps. 

We emphasize that the proofs in \cite{tree-token-swapping} for Akers and
Krishnamurthy's and Yamanaka et al.'s algorithms, as well as our proof for the Vaughan-Portier algorithm, are for \emph{specific} algorithms, while our proof for $\ell$-straying algorithms shows limitations against a very wide class of possible algorithms. Thus, our proof for $\ell$-straying algorithms requires much more general reasoning about how tokens can possibly move around the graph.




\subsection{Additional Related Work}\label{sec:related}

In the ``colored'' generalization of token swapping
\cite{yamanaka2018colored}, some subsets of tokens
have identical labels (colors), so each of these subsets (color classes)
can be permuted arbitrarily.  This setting arises naturally in applications
where agents (tokens) do not densely fill locations (vertices):
one color class can represent ``empty'' vertices.
Colored sequential token swapping is NP-hard even with just three color
classes, but polynomial-time with two color classes \cite{yamanaka2018colored}.
Miltzow et al.'s $4$-approximation algorithm for sequential token swapping
\cite{miltzow2016approximation} generalizes to this setting.
Exact solutions for paths and stars generalize to colored sequential token
swapping, even with weights on tokens \cite{tree-token-swapping}.
NP-hardness for weighted colored sequential token swapping on trees was proved
recently \cite{tree-token-swapping}.


Other models for moving tokens in graphs have been introduced.
In ``sliding tokens'' \cite{NCL_TCS}, some vertices do not have tokens,
vertices with tokens must always form an independent set,
tokens are all identical (one color class), and
the goal is to decide whether reconfiguration is even possible
(a variation of independent set reconfiguration
\cite{NPReconfiguration_TCS}).
This problem is PSPACE-complete even for planar graphs \cite{NCL_TCS},
but polynomial on trees \cite{TokenReconfigurationTrees_TCS2015}.
The associated Cayley graphs are called ``token graphs'', 
and their properties,
such as $k$-connectivity, are well-studied; see \cite{fabila2020token}.
In ``movement problems'' \cite{Movement_TAlg,Movement_APPROX2011,MovementFPT_TALG},
the problem is further relaxed to allow multiple tokens on the same vertex,
and the goal is for the vertices with tokens to induce a subgraph with a
particular property such as connectivity.
In ``sequentially swapping'' \cite{SequentialTokenSwapping_JGAA2019},
the entire reconfiguration sequence consists of moving one token
(conceptually, a blank space) along a nonsimple path in the graph,
swapping each token encountered with the previous position in the path,
and the goal is to minimize the length of the path.
This model is a natural graph generalization of the Fifteen Puzzle
(which is itself NP-hard on an $n \times n$ grid
\cite{ratner1990n2,FifteenPuzzle_TCS}).
This problem can be solved exactly on trees, complete graphs, and cycles;
and with two color classes, it is APX-hard
\cite{SequentialTokenSwapping_JGAA2019}.
There is recent work exploring a version of token swapping where some specified pairs of tokens may not swap~\cite{defant2020friends}.

Many of these problems have applications in AI, robotics, motion planning,
and quantum circuit compilation.
See \cite{surynek2019multi} for a recent survey on the practical side.

There are also game-theoretic models for token swapping, where vertices are
selfish players each of which has a strict total order on their preference
among the tokens (called objects).  A swap between adjacent vertices is
possible only when it benefits both vertices (each player obtaining a higher
object in their total order).  In this setting, deciding whether a given
configuration can be reached is NP-complete but polynomial time on trees;
deciding whether a given token can reach a specified vertex is NP-complete
even for a tree but polynomial on paths and stars; and
finding an equilibrium (Pareto-efficient) configuration is NP-complete
but polynomial on paths and stars and open on trees
\cite{gourves2017object}.
\section{Weighted sequential token swapping on trees is NP-hard}
\label{sec:NP-hard-weighted}

In this section, we prove that weighted token swapping on a tree is NP-hard,
even when the token weights are in $\{0,1\}$.
Our purpose is to introduce the general idea that is used in our main
NP-completeness proof for the unweighted case
given in Section~\ref{sec:NP-hard-unweighted}.

We first precisely define 
the decision problem
\defn{weighted sequential token swapping on trees}
\cite{tree-token-swapping},
abbreviated in this section to \defn{weighted token swapping}.
The input consists of: a tree on $n$ vertices
with distinct initial positions (vertices) and distinct target positions for the $n$ tokens; non-negative integer weights on the  tokens;  and a maximum cost~$K$.
The cost of a swap is the sum of the weights of the two tokens involved.
The decision problem asks whether there is a sequence of swaps that moves all
the tokens to their target positions and such that the sum of the costs
of the swaps 
is at most~$K$.
 
 It is not clear whether the weighted token swapping problem lies in NP; however, it is in NP if $K$ is given in unary.\footnote{Consider 
 a minimum length swap sequence of weight at most $K$.
 The number of swaps involving a nonzero weight token is at most %
 $K$.
Because the sequence has minimum length, 
it can be shown that no two zero-weight tokens swap more than once.
Thus the sequence has length at most $K + n^2$  and provides a polynomial-size certificate, showing that the problem is in NP.
}

We prove that weighted token swapping is NP-hard when $K$ is given in unary.
The reduction is from Star STS (defined in Section \ref{sec:techniques}). 
In order to distinguish tokens and vertices in the original star from those in the tree that we construct, we  will call  the tokens of the Star STS instance \defn{items} and we will call the leaves \defn{slots}. 
Then the instance of Star STS consists of: a star with  center 0 and slots $1, \ldots, m$, each  of which initially has an item of its same label;  a permutation $\pi$ of the items; and a sequence $s_1, \ldots, s_{n}$ of slots  that specify  the allowed swaps.
Figure~\ref{fig:plan}(a) shows an  example input for $m=4$ slots, $5$ items and a sequence of length $n=7$.

\begin{theorem}
\label{thm:weighted-NP-hard}
Weighted token swapping on trees is NP-hard.
\end{theorem}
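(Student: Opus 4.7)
The plan is to reduce from Star STS. Given a Star STS instance with center $0$, slots and items $1, \ldots, m$, target permutation $\pi$, and allowed swap sequence $s_1, \ldots, s_n$, I would build a tree $T$ whose ``backbone'' is a path $p_0, p_1, \ldots, p_n$ with a pendant attached at each $p_j$ leading into a free sub-tree that connects to a leaf holding the item currently associated with slot $s_j$; pendants indexed by the same slot $i$ share a common ``slot path'' ending in the leaf that initially holds item~$i$. Assign weight $1$ to a contiguous block of ``enforcer'' tokens sitting at one end of the backbone, whose target positions lie at the opposite end, and weight $0$ to every other token, in particular to all item tokens. The target configuration requires the enforcer block to traverse the whole backbone and the item tokens to realize~$\pi$. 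Set the budget $K$ to be exactly the number of weight-$1$ swaps needed to slide the enforcer block from its source to its target without any backtracking.

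For the forward direction (a Star STS solution yields a swap sequence of cost at most $K$), I would slide the enforcer block forward one backbone step at a time, paying cost $1$ per step; whenever the chosen subsequence uses $s_j$, at the moment the enforcer front reaches $p_j$ I would use a burst of free weight-$0$ swaps through the pendant at $p_j$ to route the appropriate item into and the displaced item out of the backbone neighborhood, simulating the Star STS swap of the center with slot $s_j$. Since only enforcer swaps cost anything and there are exactly $K$ of them, the resulting sequence stays within budget.

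The main obstacle is the converse. A swap sequence of cost at most $K$ must be extremely economical with enforcer moves: because $K$ equals the no-backtrack minimum and every swap involving a weight-$1$ token costs at least $1$, the enforcer block is forced to move monotonically, so at each time step its leading edge is at a well-defined vertex $p_{j(t)}$. I would argue that any genuine change to the item permutation requires an item to cross the backbone, which the tree topology permits only where the enforcer front currently exposes a pendant; reading off the indices $j(t)$ at which such item-permuting events occur then yields a subsequence of $s_1, \ldots, s_n$ that realizes~$\pi$. The delicate point is preventing free swaps from circumventing this ordering, since weight-$0$ swaps are unpenalized; here I would rely on Lemma~\ref{lem:unique} (no two zero-weight tokens swap twice in a minimum-length sequence) together with the acyclicity of $T$ to bound the free activity and pin down the correspondence. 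NP-hardness then follows because $|T|$ and $K$ are polynomial in the Star STS input size, with $K$ given in unary so the problem lies in NP as argued in the preceding footnote.
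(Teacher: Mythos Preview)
Your construction has a structural problem that makes it not a tree. You say that pendants attached at distinct backbone vertices $p_j$ and $p_{j'}$ with $s_j = s_{j'} = i$ ``share a common slot path ending in the leaf that initially holds item~$i$.'' But if two distinct vertices on the backbone path both connect (through their pendants) into the same subtree, you have created a cycle: go along the backbone from $p_j$ to $p_{j'}$, then back through the shared slot path. So either the graph is not a tree, or each slot leaf is reachable from only one backbone vertex---in which case there is no mechanism for the item in slot $i$ to interact with the center item at more than one point in the sequence, and you cannot simulate repeated uses of slot $i$ in the Star STS sequence.

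The paper sidesteps this by using a star-shaped layout rather than a linear backbone: a single root with one ``ordering gadget'' path and $m$ ``slot gadget'' paths hanging off it. The ordering is enforced not by a sliding block but by $n$ pairs of weight-$1$ tokens $(x_j,y_j)$ that must each cross the root, with the budget set to the sum of their shortest-path distances. The converse then works by a short counting argument: the tight budget pins every $x_j,y_j$ to its shortest path, and pigeonhole on how many tokens fit in the ordering gadget forces the root to be used in the order $x_1,y_1,x_2,y_2,\ldots$; a free item-item swap can occur only in the window between $y_j$ leaving and $x_j$ arriving at slot gadget $s_j$, which maps it to the $j$th allowed swap. Your backward sketch (``monotone enforcer front exposes one pendant at a time, read off the indices'') is aiming for the same effect, but even setting aside the tree issue, you have not said what prevents weight-$0$ item tokens from wandering freely along the backbone ahead of or behind the enforcer block and swapping with each other there; Lemma~\ref{lem:unique} by itself does not give you that.
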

\begin{proof}
Suppose we are given an instance of Star STS as described above.
We may  assume without loss of generality that every slot appears in the sequence (otherwise remove that slot from  the problem) and that no slot appears 
twice in a row in the sequence.  

\begin{figure}[bt]
\centering
\includegraphics[width=\textwidth]{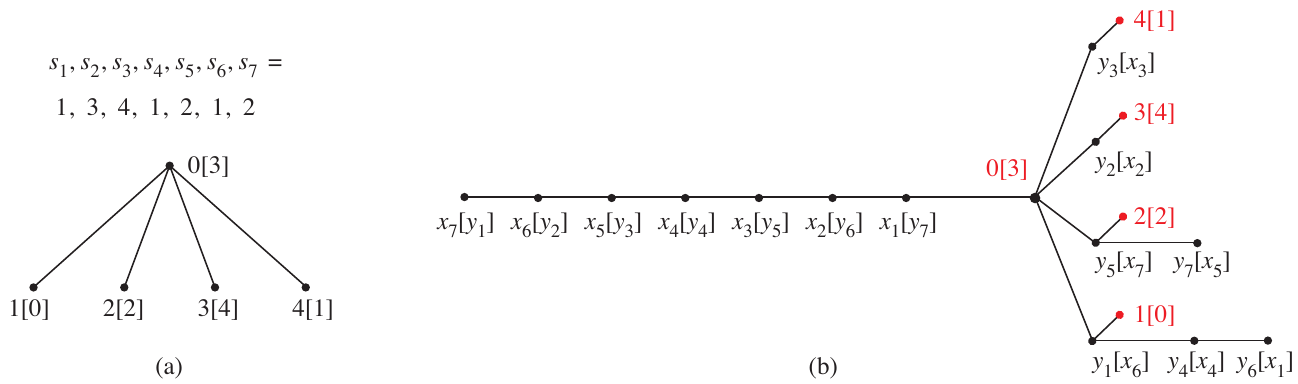}
\caption{
(a) An instance of Star STS with $m=4$ slots and a sequence of length $n=7$; notation $a[b]$ indicates that token $a$ is initially at this vertex and token $b$ should move to this vertex.  This instance  has no solution because item 4 should move to slot 3, which is possible only if slot 3 appears after slot 4 in the sequence.
(b) The corresponding  instance of  weighted token swapping with the ordering gadget on the left and $m=4$ slot gadgets attached to the root, each with a nook vertex shown in red.
After  swapping  item token $0$ at the root with  token $y_1$ in the first (bottom) slot gadget  there is an opportunity to swap tokens $0$ and $1$ for free along the nook edge of the first slot gadget, before moving $y_1$ to its target position at the end of  the ordering gadget and moving $x_1$ to its target position  at the end of the first slot gadget.
}
\label{fig:plan}
\end{figure}

Construct a tree with a root, an  \defn{ordering gadget} which is a path  of length $n$  attached to  the  root, and $m$ \defn{slot gadgets} attached to  the root. 
Slot gadgets are defined below.
See Figure~\ref{fig:plan}(b) where the ordering gadget of length  $n=7$ appears on the left and there are $m=4$ slot gadgets attached to the root.
We picture the tree with the root in the middle, and use directions left/right as in  the figure.

Let $n_i$ be the number of  occurrences  of slot $i$ in the input sequence.
Observe that $\sum_{i=1}^m n_i = n$.
\defn{Slot gadget} $i$ consists of a path of $n_i$ vertices, plus an extra leaf attached to the 
leftmost
vertex of the path. This extra leaf is called the \defn{nook} of the slot gadget. 
The  $m$ nooks and the  root 
are  in one-to-one correspondence with the slots and the center of the original star (respectively), and we place 
\defn{item tokens} at these vertices whose names,  initial positions, and  final positions correspond exactly to those of the items of the input star.
These item tokens are given a weight of 0.

There are $2n$ additional \defn{non-item tokens} $x_1, \ldots,  x_{n}$ and $y_1, \ldots, y_{n}$.  
These all have weight 1.  The $x_j$'s are initially placed along the ordering gadget path, in order, with $x_1$ at the right and $x_{n}$ at the left.
The ordering path is also the target position of the $y_j$'s in reverse order with 
$y_1$ at  the left and $y_{n}$ at the right.

Suppose slot $i$ appears in the sequence as $s_{j_1}, s_{j_2}, \ldots, s_{j_{n_i}}$ with indices in  order $j_1 < j_2 < \cdots < j_{n_i}$.  Then tokens $y_{j_1}, y_{j_2}, \ldots, y_{j_{n_i}}$ are initially placed along the path of slot gadget $i$, in order with smallest index at the left.
The path of slot gadget $i$ is also the target position of the tokens
$x_{j_1}, x_{j_2}, \ldots, x_{j_{n_i}}$ in reverse order with smallest index at the right.

Consider, for each non-item token  $x_j$ or $y_j$, the distance from its initial location to its target location. 
This is a lower bound on the cost of moving that token. We set the max cost  $K$ to the sum of these lower bounds. This guarantees that every $x_j$ or $y_j$ only travels along its shortest path.
Observe that this reduction takes polynomial time.
We now prove that a YES instance of  Star STS maps to a YES instance of  token swapping and vice versa. 

\medskip\noindent
{\bf YES instance of Star STS.}
Suppose the Star STS instance has a solution.
The  ``intended''  solution to the token swapping instance  implements 
each $s_j$ for $j=1, \ldots, n$ as follows.  Suppose $s_j = i$. 
By induction on $j$, we claim that token $y_j$ will be 
in the leftmost vertex of slot gadget $i$ when it is time to implement $s_j$.   
Swap token $y_j$ with the item token $t$ currently at the root.  Then item token $t$ has  the opportunity to swap for free with the item token in nook $i$. We perform this free swap if and only if swap $s_j$ was performed in the solution to Star STS.
Next, swap tokens $y_j$ and $x_j$---we claim by  induction that $x_j$ will be in the rightmost vertex of the ordering gadget.  
Finally, move token $y_j$ to its target position in the ordering gadget, and move $x_j$ to its target  position in slot gadget $i$.
It  is straightforward to  verify the induction assumptions.
Every $x_j$ and $y_j$ moves along its shortest path so the cost of the solution is equal to the specified bound $K$.

\smallskip\noindent
{\bf YES instance of weighted token swapping.}
Suppose the weighted token swapping instance has a solution with at most $K$ swaps.
Because $K$ is the sum of the distances of the non-item tokens from their target positions,  
each $x_j$ and $y_j$ can only move along the shortest path to  its target position.
Token $x_j$ must move into the root before $x_{j+1}$, otherwise they would need to swap before that, which means moving $x_j$ the wrong way along the ordering gadget path. Similarly, $y_j$ must move into the root before $y_{j+1}$, otherwise they need to swap after that, which means moving $y_{j+1}$ the wrong way.

Furthermore, $x_j$ must move into the root before $y_{j+1}$ otherwise the ordering gadget would contain $x_j,  \ldots, x_{n}$, and $y_1, \ldots, y_j$, a total of $n+1$ tokens, which is more tokens than there are vertices in the ordering gadget. 

We also claim $y_j$ must move into the root before $x_{j+1}$.
The nooks can only contain item tokens, since no $x_j$ or $y_j$ can move into a nook.  This accounts for every item token except for  one ``free''  item token.
Now suppose $x_{j+1}$ moves into  the root before $y_j$.   Then  the ordering gadget contains $x_{j+2},  \ldots, x_{n}$, and $y_1, \ldots, y_{j-1}$, a total of $n-2$ tokens.   Even if  the free item token is in the ordering gadget, there are not enough tokens to fill the ordering gadget.    

Thus the $x_j$'s and $y_j$'s must use the root in order, first $x_1$ and $y_1$ in some order, then $x_2$ and $y_2$ in some order, etc.
Finally, we examine  the swaps of item tokens.
A swap between two item tokens can only  occur when the free item token is at the parent of a nook. Suppose 
this happens in slot gadget~$i$.
Then some token $y_j$ must have left the slot gadget,  and the corresponding token  $x_j$ has not yet entered the slot gadget, which means that neither of the tokens $y_{j+1}$  or $x_{j+1}$  has moved into the root.  This implies that any swap  of item tokens is associated with a unique $s_j = i$, and such swaps must occur in  order of $j$, $1 \le j \le n$.   Thus the swaps of item tokens can be  mimicked by swaps in  the original Star STS sequence, and the Star STS instance has a solution.
\end{proof}

\section{Sequential token swapping on trees is NP-complete}
\label{sec:NP-hard-unweighted}

Recall that  the  token swapping problem is a decision  problem: given a tree with initial and target positions of the tokens, and given a non-negative integer $K$, can the tokens be moved from their initial to  their target positions with at most $K$ swaps.
Membership in NP is easy to show: any problem instance can always be solved with a quadratic number of swaps by repeatedly choosing a leaf and swapping its target token to it.  Thus, a certificate can simply be the list of swaps to execute. %
\ifabstract Here we give an overview of the construction for the unweighted case; however, most of the proof is left to the 
\changed{full version~\cite{fullversion}.}
\fi

\subsection{Overview}

We reduce from Star STS and 
follow the same reduction plan as for the weighted case, building a tree with an ordering gadget and $m$ slot gadgets, each with  a nook for an item token.  However, there are several challenges. 
First of all, the swaps with item tokens are no longer free, so we need to  take 
\changed{their number}
into account. 
Secondly, we cannot  know  the number of item token swaps precisely since it will  depend on the number of swaps (between  $0$ and  $n$) required by  the  original Star STS instance. 
Our plan is to make this ``slack'' $n$ very small compared to the  total number of swaps needed for the constructed token swapping  instance.   To do this, we will make the total number  of swaps very big by replacing each of the non-item tokens $x_j$ and $y_j$ by a long path of non-item tokens called a ``segment.''  This raises  further  difficulties, because nothing forces the tokens in one segment to stay  together, which means that they  can sneak around and occupy nooks, freeing item tokens to  swap amongst themselves in unanticipated ways. 
To remedy this we add further ``padding segments''  to the construction.   The construction details are given in Section~\ref{sec:construction}.

Proving that our unweighted construction is correct is much more involved than in the weighted case.  There are two parts to the proof.

\begin{description}
\item[Part 1:] \customlabel{1}{part:1}
{\bf YES instance of Star STS $\to$ YES instance of token swapping.}
In Section~\ref{sec:ForwardDirection} 
we show that if there is a solution to an instance of Star STS then the 
constructed token swapping instance 
can be solved with at most $K$ swaps 
(where $K$ will be specified in the construction).
This solution uses exactly $H = K-n$ swaps to get the non-item tokens to their target positions, and uses between $0$ and $n$ additional swaps to get the item tokens to their target positions. Note that $H$ counts swaps of both item and non-item tokens and is more than just the cost of moving each non-item token along its shortest path.

\item[Part 2:] \customlabel{2}{part:2}
{\bf YES instance of token swapping $\to$ YES instance of Star STS.}
In Section~\ref{sec:BackwardDirection} 
we show that if the  constructed  token swapping instance can be solved with at most $K$ swaps then the original Star STS instance has a solution.
We  show  that $H$ swaps are needed to get the non-item tokens to their destinations.  We then show that with only $n$ remaining swaps,  the motion  of item tokens is so constrained that they must behave ``as intended'' and therefore correspond to swaps in the original Star STS instance.  

\end{description}

\subsection{Construction of token swapping instance}
\label{sec:construction}

Suppose we have an instance of Star STS where the star has center 0 and leaves $1, \ldots, m$ and each vertex has a token of its same label.  
We have a permutation $\pi$ of the tokens with $\pi(0) = 0$ 
and a sequence $s_1, \ldots, s_{n}$  with $s_j \in \{1, \ldots, m\}$  that specifies the allowed swaps.
As in the weighted case in Section~\ref{sec:NP-hard-weighted}, we will refer to the tokens of the Star STS instance as \defn{items} and the leaves as \defn{slots}.
We assume without loss of generality that every slot appears in the sequence (otherwise remove that slot from  the problem) and that no slot appears twice in a row in the sequence.

Construct a tree as in the weighted case except that each individual $x_j$ and $y_j$ is replaced by a  sequence of $k$ vertices (and tokens) with $k=(mn)^c$ for a large constant $c$, to be set later. 
Each such sequence of length $k$ is called  a
\defn{big segment}.
Refer to Figure~\ref{fig:plan-segments} where the tree is drawn with  the root in the middle, the ordering gadget to the left, and the slot gadgets to the right.  
We will refer to left and  right  as in the figure.
 The target ordering of  tokens within a big segment behaves as though the segment  just slides along the shortest path to its target, i.e.,  the left to right order of tokens  in a segment is the  same in the  initial  and  the final  configurations.

\begin{figure}[!ht]
\centering
\includegraphics[width=\textwidth]{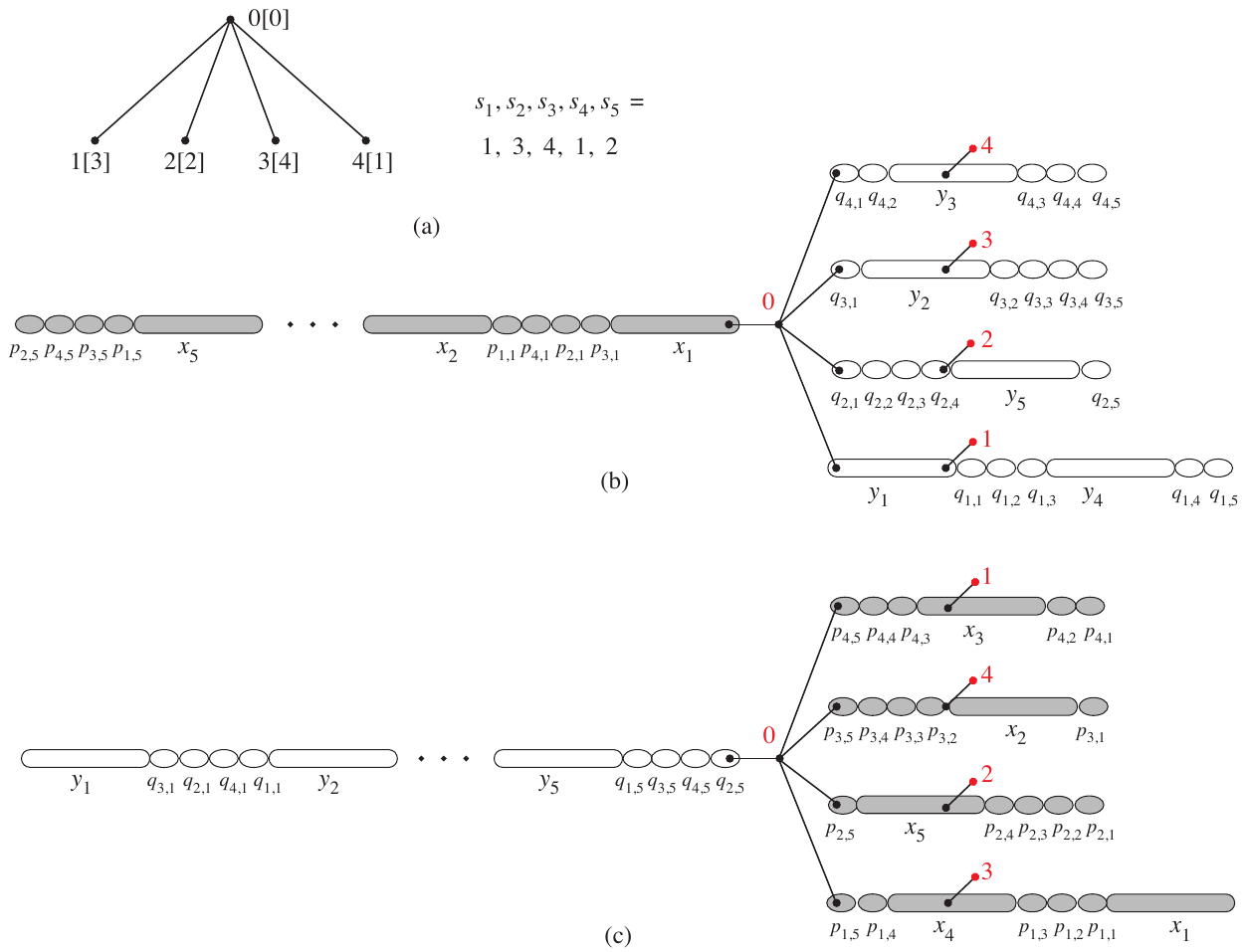}
\caption{(a) An instance of Star STS with $m=4$ slots and a sequence of length $n=5$.
(b) The corresponding instance of token swapping with the initial token  positions. The  root has item token $0$ (coloured red). The ordering gadget lies to the left of the root.  There are 4 slot gadgets to the right  of the root. 
A long oval indicates a big segment of length $k$, and a short oval indicates a padding segment of length $k' = k/n^8$.   Each nook vertex (coloured red)  is attached to the $k^{\rm th}$ vertex from the root along the slot gadget path.  Item tokens are coloured red;
non-slot tokens are in the segment ovals  coloured gray; and 
slot tokens are in  the segment ovals  coloured white.
(c) The target token positions. 
In the first round of  the ``intended''  solution,
big segments $y_1$ and $x_1$ first change places.  As $y_1$ moves left,  item token $0$ moves to nook parent 1 where it may swap with item token 1.
As $x_1$ moves right, the item token moves back to the root. 
Then segment $y_1$ moves to the far left of the ordering gadget and $x_1$ moves to the far right of the first slot gadget.   Next, padding segments $p_{3,1}$ and $q_{3,1}$ change places across the root and move to their target locations; then $p_{2,1}$ and $q_{2,1}$; then $p_{4,1}$ and $q_{4,1}$; and finally $p_{1,1}$ and $q_{1,1}$.
Note the ordering $q_{3,1},q_{2,1},q_{4,1},q_{1,1}$ of padding segments that lie to the  right  of $y_1$ in the final configuration---$q_{1,1}$ is last because $s_1=1$ and $q_{3,1}$ is first because $s_2=3$. 
}
\label{fig:plan-segments}
\vspace{-2em}
\end{figure}

The nook vertex in each slot gadget is attached to the vertex at distance $k$ from the root in the slot gadget, 
and this vertex is called the \defn{nook parent}.  The edge between the nook vertex and the nook parent is called the \defn{nook edge}, see Figure~\ref{fig:plan-segments}.
In the ``intended'' solution, the big segments leave the slot gadgets and enter the ordering gadget in the order $y_1, \ldots, y_n$.

Although we will not give details, 
it turns out that  
the construction so far allows ``cheating'' via interference between slot gadgets.
To prevent this, 
we add a total of %
$2nm$ 
\defn{padding segments} each of length 
$k' = k/n^8$. 
The  intuition is in the ``intended'' solution, after $y_j$ enters the ordering gadget,  one padding segment  from each slot gadget will enter the ordering gadget, and then $y_{j+1}$ will do  so.
We now give the details of the padding segments in the initial/final configurations of the slot/ordering gadgets.  
In the initial configuration there are $nm$ padding segments  $q_{i,j}, i= 1, \ldots,  m, j=1, \ldots, n$ in the slot gadgets, and $nm$ padding segments $p_{i,j}, i= 1, \ldots,  m, j=1, \ldots, n$ in the ordering gadget.   In the final configuration, the $q_{i,j}$'s are in the ordering gadget and the $p_{i,j}$'s are in the slot gadgets. 
In the initial configuration, 
slot gadget $i$, $i=1, \ldots, m$, contains $n$ padding segments $q_{i,j}, j=1, \ldots, n$.  They  appear in order from left to right with big segments mixed among them.  Specifically, if
 $y_{j'}$ is a big segment in slot gadget $i$ then $y_{j'}$ appears just before $q_{i,j'}$.
In the final configuration of the ordering gadget there are $m$ padding segments after each of the $n$ big segments.  The padding segments after $y_j$ are $q_{i,j}, i=1, \ldots, m$.  They appear in a particular left-to-right order:  $q_{s_j,j}$ is last, $q_{s_{j+1},j}$ is first, and the others appear in order of index $i$. Within one padding segment, the left-to-right ordering of tokens is the same in the initial and final configurations. 

The initial configuration of the ordering gadget is obtained from the final configuration by: reversing (from left to right) the pattern of  
big segments and padding segments; changing $y$'s to $x$'s, and changing $q$'s  to  $p$'s.  
Similarly, the final configuration of slot gadget $i$ is obtained from the initial configuration of slot gadget $i$ in the same way.

Let $A$ be the set of non-item tokens, and  for any token $t$, let $d_t$ be the distance between $t$'s initial and target  positions.
To complete the reduction, we will set $H$ to be $\frac{1}{2}\sum_{t \in A} (d_t + 1)$ 
and set the bound $K$ to be $H + n$.  The decision question is whether  this token  swapping  instance can  be  solved with  at most $K$ swaps. 
The reduction takes polynomial time.

\ifabstract
The remainder of the proof can be found in
\changed{the full version~\cite{fullversion}.}
\fi

\later{

\ifabstract
\section{Proof of NP-completeness of Sequential Token Swapping.}\label{App:Sequential}
\fi

In future sections we use the following definitions and  notation.
\clearpage
\begin{definition}
\label{defn:move}
\label{defn:partner}
\label{defn:basic}
\ 
\begin{itemize}
\item $T$ is the  set of tokens.  $T = A \cup I$.  $I$ is the set of $m+1$ \defn{item tokens} whose  initial and target positions are the nook vertices and  the root.  Item token 0 has initial and target position at the root. $A$ is the set of \defn{non-item tokens} and  consists of 
$nk + nmk'$ \defn{slot tokens} whose initial positions are in the slot gadget, and equally many \defn{non-slot tokens} whose initial positions are in the ordering gadget.  %
\item For a token $t \in A$, $d_t$ is the distance between  its initial and target positions.  
\item $H = \frac{1}{2}\sum_{t \in A} (d_t + 1)$.  $K = H+n$.
\item Big segments $x_j$ and $y_j$ are \defn{partners}.  Padding segments $p_{i,j}$ and  $q_{i,j}$ are \defn{partners}. 
\item Every swap consists of two \defn{moves}, where  a \defn{move} means one token moving across one edge. A \defn{left} move goes left in Figure~\ref{fig:plan-segments}, i.e., towards the root in a slot gadget or away from the  root in the ordering gadget.  
A \defn{right} move does the opposite. 
\end{itemize}
\end{definition}

\subsection{Part \ref{part:1}: YES instance of Star STS \texorpdfstring{$\to$}{-->} YES instance of token swapping}
\label{sec:ForwardDirection}
In this section we prove that if we start with a YES instance of Star STS then the constructed token swapping instance has a solution with at most $K=n+H$ swaps.
We first show how a solution to the Star STS instance gives an  ``intended'' solution to the constructed token swapping instance.
We follow the same idea as in the proof for the weighted case (Theorem~\ref{thm:weighted-NP-hard}).

\medskip\noindent
{\bf The ``intended''  solution.}
For $j=1, \ldots, n$, implement $s_j$ as follows.
\begin{itemize}
\item 
Move the item token $t$ currently  at the root %
along
the path of  slot gadget $s_j$ to the nook parent.    
Item token $t$ now has the opportunity to swap with the item token in the adjacent nook. Perform this swap if and only if swap $s_j$ was performed in the solution to the Star STS instance. 
Let $t'$ be the resulting token at the nook parent. 
Move big segment 
$y_j$ to its target position 
in  the ordering gadget.
Move big segment 
$x_j$ to its target position 
in the slot gadget $s_j$. This moves token $t'$ to the root.

\item 
Next, for each  slot gadget $i$, move the padding segment nearest the root out the slot gadget to its target position in the  ordering gadget and move its partner padding segment out of the ordering gadget to its target position in slot gadget $i$. These pairs of segment moves must be done in the order in which the padding segments $q_{i,j}$ appear in the final configuration (left to right). Note that item token $t'$ returns to the root after each of the $m$ pairs of segment moves.
\end{itemize}

It is straightforward to show that this solution is correct, i.e., that at the beginning of round $j$, big segment $y_j$ is next to   the root at the  left end of slot  gadget $s_j$ and big segment $x_j$ is next to the root at the  right end of the ordering gadget, and similarly, that after the movements of these big segments, the successive pairs of padding segments are in position next to the root when we are ready to move them.

At the  heart of the intended  solution is a subsequence of moves that are used to correctly re-position the non-item tokens while leaving the  item tokens  fixed.  We formalize this ``scaffold  solution'' for use in future  sections.

\medskip\noindent
{\bf  The ``scaffold'' solution.}  This is the same as the intended solution except that no swaps are performed between  item tokens.  Thus, item tokens $1, 2, \ldots, m$ remain  in their initial nook vertices, and item token $0$ returns to the root after each 
round (each loop  on $j$).

\medskip\noindent
{\bf Properties of the intended/scaffold solution.}

\begin{description}

\item{(P1)} Every non-item token $t$ travels on the path from its initial position to its target  position and makes $d_t$ moves.

\item{(P2)} There is no swap between two slot tokens nor between two  non-slot tokens. 

\item{(P3)}
The pairs of non-item tokens that swap with each other can be characterized with respect to the initial configuration as follows.
Let $t$ be a slot token that lies in big/padding segment $g$, and suppose that the partner of $g$ is the big/padding segment  $g'$.
Let $\pi_1$ be the ordered list of segments between  the root and $g$ (not including $g$).  Let $\pi'_1$ be the partners of the segments of $\pi_1$.
Let $\pi_2$ be the ordered list of segments from $g'$ (including $g'$) to the right end of the ordering gadget.  
Then the non-slot tokens that $t$ swaps with are the tokens that lie  in the segments $\pi'_1, \pi_2$.   See Figure~\ref{fig:intended-swaps}.
In particular, note that the number of tokens from $\pi'_1$ is less than the distance from $t$ to the root.

\item{(P4)} In the scaffold solution, token $0$ swaps with every non-item token.

\item{(P5)} The order in which the slot tokens reach the root is precisely the left-to-right order of their final positions in the ordering gadget. Similarly, the order in which the non-slot tokens reach the root is precisely the right-to-left order of their initial positions in the ordering gadget. 

\item{(P6)} The final positions of the slot tokens in the ordering gadget are such that between any big segment and any other segment from the same slot gadget, there is a segment from each of the other slot gadgets. Combining this with (P5) implies the following property. Consider a time interval during which at least one token from a big segment from a slot gadget $i$ and at least one token from any other segment from slot gadget $i$ reach the root. During this time interval, every token in at least one segment from each of the other slot gadgets reaches the root.

\end{description}

\begin{figure}[htbp]
\centering
\includegraphics[width=.4\textwidth]{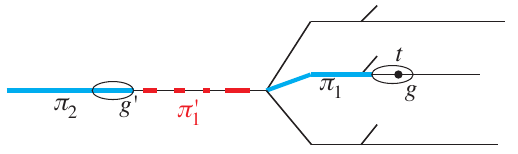}
\caption{In the intended/scaffold solution, slot token $t$ from segment $g$ swaps  with  the non-slot tokens with initial positions in $\pi'_1, \pi_2$.  Here $\pi'_1$ (coloured red)
consists of the segments that are the partners of segments in $\pi_1$.
}
\label{fig:intended-swaps}
\end{figure}

\medskip\noindent
{\bf Bounding the number of token swaps for a YES instance of Star STS.} 

\begin{lemma}
\label{lem:forward}
Given a YES instance of 
Star STS,
the total number of swaps in the corresponding token swapping instance is at most $K$. 
\end{lemma}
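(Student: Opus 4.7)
The plan is to exhibit the ``intended'' solution sketched above and show it uses at most $K = H + n$ swaps. The intended solution consists of the scaffold solution (which leaves all item tokens in place except for the repeated out-and-back motion of item $0$) plus at most one item--item swap per round in Phase B, for at most $n$ extra swaps. So it suffices to verify that the scaffold uses exactly $H$ swaps.

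By property (P1), each non-item token $t \in A$ moves exactly $d_t$ times in the scaffold, contributing $\sum_{t \in A} d_t$ non-item moves. Since only item $0$ moves among the item tokens, if $M_0$ denotes the number of moves of item $0$, then the scaffold swap count equals $\tfrac12\bigl(M_0 + \sum_{t \in A} d_t\bigr)$, because every swap contributes exactly two moves. The target identity $\text{(scaffold count)} = H = \tfrac12\sum_{t \in A}(d_t + 1)$ is therefore equivalent to $M_0 = |A|$.

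I would establish $M_0 = |A|$ by counting item $0$'s moves one round at a time. In each round $j$, item $0$ contributes $k$ rightward moves during Phase A (advancing through segment $y_j$ to the nook parent of slot $s_j$), then $k$ leftward moves during Phases C/D (as $x_j$ enters slot $s_j$ and pushes item $0$ back to the root), and then $2k'$ moves per padding pair $(p_{i,j}, q_{i,j})$ in Phase E (item $0$ is pushed one step into slot $i$ for each of the $k'$ tokens of $q_{i,j}$ that leaves, and one step back toward the root for each of the $k'$ tokens of $p_{i,j}$ that enters). With $m$ padding pairs per round, this gives $2k + 2mk'$ moves per round, so $M_0 = n(2k + 2mk') = 2nk + 2nmk'$, which matches $|A| = 2(nk + nmk')$ since $A$ consists of $nk + nmk'$ slot tokens and the same number of non-slot tokens.

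The main obstacle is verifying the inductive invariant that the intended solution is well-defined: at the start of each round $j$, segments $y_j$ and $x_j$ sit adjacent to the root in slot gadget $s_j$ and in the ordering gadget respectively. This hinges on the carefully chosen left-to-right order of padding segments in the final configuration of the ordering gadget; in particular, placing $q_{s_{j+1}, j}$ as the leftmost padding after $y_j$ ensures that its partner $p_{s_{j+1}, j}$ is routed into slot gadget $s_{j+1}$ during round $j$, clearing the leftmost position of slot $s_{j+1}$ so that $y_{j+1}$ becomes adjacent to the root by the start of round $j+1$. Once this invariant is established by induction on $j$ (using the analogous invariant that $x_{j+1}$ ends up adjacent to the root in the ordering gadget), the per-round move count above applies verbatim and the total bound $\leq H + n = K$ follows.
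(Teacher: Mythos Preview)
Your proof is correct and follows essentially the same approach as the paper: decompose the intended solution into the scaffold plus at most $n$ nook-edge swaps, and show the scaffold uses exactly $H$ swaps by counting moves. The only difference is that the paper obtains $M_0 = |A|$ in one line by citing property (P4) (``token $0$ swaps with every non-item token''), whereas you re-derive this via an explicit per-round count of $2k + 2mk'$ moves; these are equivalent, and your extra paragraph on the inductive well-definedness of the intended solution is something the paper dismisses as ``straightforward'' rather than spelling out.
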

\begin{proof}
Recall  that $H = \frac{1}{2}\sum_{t \in A} (d_t + 1)$ and $K = H + n$.
We show that the  number of swaps used by the intended  solution is at most
$K$.

First consider the scaffold solution. The number of moves made by a  non-item token $t$  in  the  scaffold solution is $d_t$  by (P1).  In addition, item token  0 swaps once with each non-item token by (P4), for a total of $|A|$ moves by item tokens.  There are no other moves.  Thus the total number of moves in the scaffold solution is $\sum_{t \in A} (d_t +1)$, and the number of  swaps is half that, i.e., $H$.

The intended solution has additional swaps between pairs of item tokens---these are the swaps that occur on the nook edges.  They  correspond exactly to the  swaps $s_j$ that are chosen for the Star STS solution, thus, %
at most $n$.

In sum the total number of swaps is at most $H + n  = K$.
\end{proof}

\subsection{Part \ref{part:2}: YES instance of token swapping \texorpdfstring{$\to$}{-->} YES instance of Star STS}
\label{sec:BackwardDirection}

In  the  previous section we showed that if we have a YES instance of Star STS then the corresponding token swapping instance can be solved in at most $K$ swaps. The goal of this section is to prove the reverse claim. %

\begin{lemma}
\label{lem:backward}
Consider an  instance of Star STS whose corresponding token swapping instance has a solution using at most $K$ swaps.  Then the Star STS instance is a YES instance.  
\end{lemma}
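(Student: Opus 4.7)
The plan is to invert the counting of Lemma~\ref{lem:forward}: I will show that any solution spends at least $H$ swaps positioning non-item tokens, leaving at most $n$ swaps that can involve item tokens, and then argue that with so little slack the motion must resemble the scaffold solution closely enough to extract a Star STS solution. The lower bound of $H$ should come from a moves-based double count. Each swap contributes two token-moves; each non-item token $t$ must make at least $d_t$ moves along the tree; additionally, since every non-item token has its start in one side of the root and its target on the other, each non-item token must cross the root, and at the instant it crosses it swaps with an item token. Summing yields $\sum_{t \in A}(d_t + 1)$ moves that can be charged to non-item tokens plus their crossings, hence at least $H$ swaps before we count any swap between two item tokens.

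Next I would show that once this bound is saturated (at most $n$ extra swaps are available), each non-item token must travel exactly along the shortest path from its initial to target position and must swap with an item token exactly once. Combined with the tree structure, this forces the pairwise swap set among non-item tokens to be exactly the set described in property~(P3). The order in which big segments cross the root is then forced to be $y_1, y_2, \ldots, y_n$: any other order would either reverse within-segment orientation or split a big segment, both of which cost $\Omega(k)$ extra swaps, and since $k = (mn)^c \gg n$ this vastly exceeds the $n$-swap budget. The same budget argument rules out a big segment lingering in the wrong gadget, and property~(P6) together with the ordering of the padding segments $q_{i,j}$ pins down the interleaving of padding segments across gadgets up to moves of $O(k')$.

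Having fixed non-item motion up to tiny slack, I would analyze when a swap between two item tokens can occur. Such a swap happens only on a nook edge, and requires the ``free'' item token to be at the nook parent while the item token originally placed in that nook is still there; this in turn requires the big segments of the slot gadget to be out of the way at a very specific moment. Using the forced order $y_1, \ldots, y_n$ and (P6), I would identify each such opportunity with a unique index $j \in \{1, \ldots, n\}$ having $s_j = i$, so the at most $n$ item-token swaps form a subsequence of $s_1, \ldots, s_n$. Since the overall solution by hypothesis achieves the target permutation $\pi$ on item tokens, this subsequence realizes $\pi$ in the Star STS instance, proving the lemma.

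The main obstacle, and the reason this unweighted proof dwarfs the weighted one, is that there are no truly free swaps to hide arbitrary item-token motion: the padding segments were introduced specifically to rule out families of cheating strategies in which a big segment from one slot gadget is used to create nook-swap opportunities in another. Proving that no such interference can survive the $n$-swap budget requires careful capacity accounting inside each slot gadget, inside the ordering gadget, and across the root, coupled with the observation that any deviation of size $\omega(n)$ is immediately lethal because $k, k' \gg n$. Once this rigidity is in hand, translating the constrained swap sequence into a Star STS solution is essentially bookkeeping against the scaffold description.
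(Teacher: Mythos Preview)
Your proposal has a genuine gap at the very first step, and it is exactly the step that makes the unweighted case so much harder than the weighted one.

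You claim that each non-item token, when it crosses the root, ``swaps with an item token,'' and you use this to charge one extra move per non-item token, obtaining the lower bound $\sum_{t\in A}(d_t+1)=2H$ on moves. But there is no reason the token at the root must be an item token at that moment. Two non-item tokens can swap across the root; indeed, nothing in the instance prevents a slot token and a non-slot token from meeting at the root without any item token nearby. In the weighted proof this issue never arises because item tokens have weight~$0$ and are simply ignored in the count; here it is the whole difficulty. Without this step your $H$ lower bound is not established, and then the ``only $n$ swaps of slack'' budget you rely on throughout the rest of the sketch is unavailable.

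The paper handles this with a substantially different accounting. It works in terms of \emph{contrary moves} (moves of a token in the direction opposite to its destination) rather than root-crossings, and shows via an \emph{imposter} argument that one can identify, for every slot token $t$, a distinct contrary move of some other token that must occur somewhere in the solution (not necessarily at the root). This yields a set $\mathcal{M}$ of $|A|/2$ contrary moves, which combined with a total-move count gives the bound of at most $n$ contrary moves outside~$\mathcal{M}$. Even then, the paper does \emph{not} conclude that non-item tokens move exactly on shortest paths or that segments cross the root in the exact order $y_1,\ldots,y_n$; it only shows that the \emph{cores} of segments (all but $k^{1/2+3/c}$ tokens at each end) respect the scaffold order, and that individual tokens are out of order with at most $k^{1/2+2/c}$ others.

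A second gap is the assertion that an item--item swap ``happens only on a nook edge.'' Nothing a priori confines item tokens to nooks; an item token could drift out into a slot gadget and swap with another item token on a non-nook edge. Ruling this out is the content of the paper's Lemma~\ref{lem:LeafLeaving}, which shows (after several pages) that every slot gadget always contains an item token deep inside it. Your sketch assumes this for free.

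In short, the high-level plan (lower bound $H$, slack $\le n$, extract an STS subsequence) matches the paper, but both the lower bound and the structural constraints on item tokens require machinery you have not supplied, and the specific justification you give for the lower bound is incorrect.
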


Note that Lemmas~\ref{lem:forward} and \ref{lem:backward} combined show the correctness of our reduction, which finishes the NP-completeness proof.

Throughout the  section we assume that in the solution to the token swapping instance no   pair of tokens swaps more than once, which we justify as follows:

\begin{claim}
\label{lem:unique}
If a pair of tokens swaps more than once in a token swapping solution, then there is a shorter solution. 
\end{claim}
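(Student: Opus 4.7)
The plan is to show that any solution containing two swaps of the same pair of tokens can be shortened by two swaps. Let $a,b$ be a pair that swaps at times $t_1 < t_2$; by replacing $t_2$ with the \emph{first} $a$-$b$ swap occurring after $t_1$, I may assume no $a$-$b$ swap occurs in the open interval $(t_1, t_2)$. The modified sequence is obtained from the original by simply deleting the two edges $e_{t_1}$ and $e_{t_2}$, leaving everything else in place. The entire task is to verify that this shorter sequence still reaches the target configuration.

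The key step is to establish the following invariant: at every step in the interval starting just after time $t_1$ up through just before $t_2$, the modified configuration differs from the original precisely by swapping the positions of $a$ and $b$, and agrees on every other token. The base case is immediate, since at time $t_1$ the original executes the $a$-$b$ swap on $e_{t_1}$ while the modified does not. For the inductive step I would case-analyze the next swap edge $e$: if $e$ is disjoint from the current positions of $a$ and $b$, both runs update identically; if $e$ touches exactly one such position, then a third token $c$ gets transposed with $a$ in the original and with $b$ in the modified (with $c$ ending at the same vertex in both), preserving the invariant. The remaining case---that $e$ itself swaps $a$ and $b$---is excluded by our choice of $t_2$ as the next $a$-$b$ swap.

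At time $t_2$ the original swaps $a$ and $b$ along $e_{t_2}$, which exactly undoes the offset prescribed by the invariant: the post-$t_2$ configuration of the original equals the configuration of the modified at that moment (where this step is simply skipped). From this point onward the two runs perform identical swaps on identical configurations, so the modified sequence ends at the target with two fewer swaps than the original. The main obstacle is the case analysis in the inductive step; choosing $t_1, t_2$ to be \emph{consecutive} $a$-$b$ swaps is what keeps this analysis clean by eliminating the troublesome case where the swap in question is itself another $a$-$b$ swap inside the interval.
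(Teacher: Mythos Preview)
Your proof is correct and follows essentially the same approach as the paper: delete two (consecutive) $a$-$b$ swaps and observe that the intermediate configurations differ only by the transposition of $a$ and $b$. The paper expresses this in one terse sentence (``exchange the names $x$ and $y$ in the subsequence between the two removed swaps''), while you spell out the invariant and the case analysis that actually justifies it.
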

\begin{proof}
Suppose tokens $x$ and $y$ swap more than once in a sequence of swaps.
Eliminate the last two swaps of $x$ and $y$ and exchange the  names $x$ and $y$ in the subsequence between the two removed swaps.  The resulting shorter sequence yields the same final token configuration.
\end{proof}

Our argument will be based on the number of  moves rather than the  number of swaps. 
Recall from Definition~\ref{defn:move} that every  swap  consists of two  moves.
By   assumption we have a solution that uses at most $K$ swaps, which is at most $2K$ moves
and $2K = 2H + 2n = 2(\frac{1}{2}\sum_{t \in A} (d_t + 1)) + 2n = \sum_{t \in A} d_t + |A| + 2n$. 
Clearly,  we  need at  least $d_t$ moves to get each non-item token $t \in A$ to its target location.  In the first part of the argument (in Subsection~\ref{sec:Back moves for any instance}) we identify $|A|$ additional  moves (of item or  non-item tokens) that are dedicated to the task of getting non-item tokens to their target locations. This leaves at most $2n$ moves unaccounted for. 
In the second part of the argument (in Subsection~\ref{sec:Back moves for NO instance}),  we show that with so few remaining moves, the movement of item tokens is very constrained and corresponds to a solution to the original Star STS instance.

Before starting the rather technical subsections, we give more detail on these two high  level aspects: how  we classify and account for moves; and how we convert the  constrained movement of item tokens into a solution to the original Star STS instance.  %

\paragraph*{Accounting for moves.} 

Rather than arguing about all the moves, we will argue about special moves called ``contrary moves.''  For slot and non-slot tokens, the ``contrary moves'' go contrary to the direction the token should go, i.e., for a slot token---which should go  left---a contrary move is a right move, and vice versa for non-slot tokens.  We formalize this and extend the definition to item tokens as follows.

\begin{definition}
\label{defn:contrary-move}
\label{defn:main-path}
A \defn{contrary} move  of a slot token is  a right move.  See Figure~\ref{fig:moves}.
A \defn{contrary} move  of a non-slot token is a left move.
A \defn{contrary} move of an  item token is a  right move (this is somewhat arbitrary).
Let $c_t$  be the number of contrary moves made by token $t$.
The \defn{main path} of a slot token $t$ is the path in the tree between the rightmost vertex of $t$'s slot gadget and the leftmost vertex of the ordering gadget.
\end{definition}

The concept of a main path will  be used later in the proof. 
Observe that every  slot token $t$ must make $d_t$ left  moves along  its main path.  

\begin{figure}[htb]
    \centering
    \includegraphics[width=.55\textwidth]{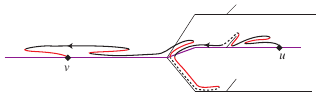}
    \caption{Classifying the moves as a  slot token moves from its  initial position  $u$ to its target position $v$.  \defn{Contrary} (right) moves are coloured red. Left moves are coloured black, with the solid portions occurring on the \defn{main path} (coloured purple).
    }
    \label{fig:moves}
\end{figure}

\begin{claim}
\label{claim:contrary-moves}
The total number of  moves in the solution is $\sum_{t \in  A} d_t + 2 \sum_{t \in T} c_t$.
\end{claim}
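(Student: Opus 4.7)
The plan is to reduce the claim to a per-token accounting identity, showing that each token $t$ contributes exactly $d_t + 2c_t$ moves (for $t \in A$) or $2c_t$ moves (for $t \in I$), and then sum over $t$. To do this I would extend the left/right orientation from Definition~\ref{defn:contrary-move} to every edge of the tree and apply the standard walk-in-a-tree identity.

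First I orient every edge. On the spine (the ordering-gadget path together with the $m$ slot-gadget paths meeting at the root) the orientation is already given by the figure. On each nook edge, I orient nook-to-nook-parent as left and nook-parent-to-nook as right; this is consistent with Definition~\ref{defn:contrary-move}, because any slot or item token entering a nook is then making a right (hence contrary) move, while a non-slot token entering a nook makes a non-contrary move. Let $L_t$ and $R_t$ denote the number of left and right moves token $t$ makes during the swap sequence.

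The key tool is an elementary fact about walks in trees: for any walk from $u$ to $v$ in a tree with oriented edges, $L_t - R_t$ equals the (signed) count of leftward minus rightward edges along the unique $u$-to-$v$ path. The reason is that each edge off the $u$-to-$v$ path is traversed the same number of times in each direction, while each on-path edge contributes exactly one net crossing in the direction dictated by the path. Applying this token by token: for a slot token $t \in A$, the initial-to-target path runs entirely leftward through the root and has $d_t$ edges, so $L_t - R_t = d_t$; since $c_t = R_t$, the total number of moves $t$ makes is $L_t + R_t = d_t + 2 c_t$. Non-slot tokens are symmetric and give the same count $d_t + 2c_t$. For an item token $t \in I$, both endpoints are either the root or a nook, and the path between any two such vertices in this tree has equal numbers of leftward and rightward edges (namely $k+1$ of each, when the two nooks differ, and $0$ of each if the endpoints coincide or one is the root), so $L_t = R_t$, giving total moves $2 R_t = 2c_t$.

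Summing over all tokens yields
\[
\sum_{t \in A}(d_t + 2 c_t) + \sum_{t \in I} 2 c_t \;=\; \sum_{t \in A} d_t + 2 \sum_{t \in T} c_t,
\]
as claimed. The only delicate point I anticipate is the item-token case, where one must notice that the symmetric structure of the construction forces the leftward and rightward distances between any two nook-or-root vertices to match; after that, the identity follows mechanically from the walk-in-a-tree fact and the definition of $c_t$.
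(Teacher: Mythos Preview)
Your argument is correct and, for non-item tokens, identical in spirit to the paper's: the paper also computes per token, observing that beyond the $d_t$ forced moves in the correct direction the remaining moves pair off into one contrary and one non-contrary, which is exactly your walk identity $L_t + R_t = |L_t - R_t| + 2\min(L_t,R_t)$ specialised to a path whose net displacement is $d_t$.

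The only substantive difference is the item-token case. The paper argues \emph{globally}: the multiset of vertices occupied by item tokens is the same initially and finally, so across all item tokens the left moves on any edge balance the right moves, giving $\sum_{t\in I}L_t=\sum_{t\in I}R_t$ and hence total item moves $=2\sum_{t\in I}c_t$. You instead argue \emph{per token}, using that each item token's start and target are either both the root or both nooks, so its individual path has equal left and right counts. Both work here; the paper's version is slightly more robust (it needs only that item tokens permute among themselves), whereas yours relies on the construction's assumption $\pi(0)=0$. One small correction: your parenthetical ``$0$ of each if \ldots\ one is the root'' is literally false for a root-to-nook path (that path has $k{+}1$ edges all in one direction); it just happens that no item token has such a path because of $\pi(0)=0$. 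State that explicitly and the per-token argument is airtight.
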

\begin{proof} Note that the first sum is over non-item tokens and the second sum is over all tokens.  Every slot token $t$ must make $d_t$ left moves, one on each edge of the path from its initial to final position. 
The other moves made by  token $t$ come  in pairs, where  each pair consists of a left  move and a right (contrary) move on some edge $e$. Thus   
the total number  of moves made by a slot token $t$ is $d_t + 2c_t$.  
For a non-slot token $t$ the same argument (with directions reversed) shows that the total number of moves is $d_t + 2c_t$.

For item tokens, observe that the set of  nodes of the tree that contain item tokens is the same in the initial and the target configurations.  This implies that, if an item token makes a contrary (right) move on an edge $e$, then some (possibly other)  item token must move left on $e$.  Thus the total number of moves made by  item tokens is $2\sum_{t \in I} c_t$.
Putting  these together, the total number   of  moves in the solution is 
$\sum_{t \in  A} d_t + 2 \sum_{t \in T} c_t$.
\end{proof}

We obtain a bound on the number of contrary  moves.

\begin{lemma}
\label{lemma:contrary-move-bound}
A token swapping solution with at most $K$ swaps has at most $ |A|/2 + n$ contrary  moves.
\end{lemma}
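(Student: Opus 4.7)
The plan is to combine the assumption that the solution uses at most $K$ swaps with the exact formula for the total number of moves from Claim~\ref{claim:contrary-moves}, and then solve for $\sum_{t \in T} c_t$.

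First I would convert the swap bound into a move bound. Since every swap consists of exactly two moves (by Definition~\ref{defn:move}), a solution using at most $K$ swaps uses at most $2K$ moves. Expanding using $K = H + n$ and $H = \tfrac{1}{2}\sum_{t \in A}(d_t+1)$, this gives
\[
2K \;=\; 2H + 2n \;=\; \sum_{t \in A}(d_t+1) + 2n \;=\; \sum_{t \in A} d_t \;+\; |A| \;+\; 2n.
\]

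Next I would invoke Claim~\ref{claim:contrary-moves}, which states that the total number of moves in any solution is exactly $\sum_{t \in A} d_t + 2\sum_{t \in T} c_t$. Setting this expression no larger than the upper bound $2K$ computed above yields
\[
\sum_{t \in A} d_t + 2\sum_{t \in T} c_t \;\le\; \sum_{t \in A} d_t + |A| + 2n.
\]
The $\sum_{t \in A} d_t$ terms cancel, and dividing by $2$ gives $\sum_{t \in T} c_t \le |A|/2 + n$, which is the claimed bound.

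There is no real obstacle here: the lemma is just the arithmetic consequence of the exact move-count identity in Claim~\ref{claim:contrary-moves} together with the hypothesis on $K$. The only thing to be careful about is keeping the index sets straight — the $d_t$ sum ranges over non-item tokens $A$, while the $c_t$ sum ranges over all tokens $T = A \cup I$ — and making sure the factor of $2$ relating swaps to moves is applied correctly. The real work of the proof of Lemma~\ref{lem:backward} will come in the subsequent subsections, where this modest slack of $|A|/2 + n$ contrary moves must be shown to force item tokens to behave almost exactly as in the intended solution.
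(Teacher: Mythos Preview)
Your proof is correct and follows essentially the same approach as the paper: convert the swap bound into a move bound $2K = \sum_{t\in A} d_t + |A| + 2n$, equate with the exact move count from Claim~\ref{claim:contrary-moves}, and cancel. The paper's version is terser but identical in content.
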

\begin{proof}
An upper bound of $K$ swaps is an upper bound of $\sum_{t \in A} d_t + |A| + 2n$  moves.   By Claim~\ref{claim:contrary-moves}, the number of moves is $\sum_{t \in  A} d_t + 2 \sum_{t \in T} c_t$.   Thus $\sum_{t \in T} c_t \le |A|/2 + n$. 
\end{proof}

In Subsection~\ref{sec:Back moves for any instance} we  will identify in the solution a set $\cal M$ of  $|A|/2$ contrary moves that are associated with getting slot tokens to their target locations, one contrary  move per slot token. This leaves at most $n$ contrary moves outside of $\cal M$.

In Subsection~\ref{sec:interface} we make a careful catalogue of which contrary  moves are included in $\cal M$, thus providing an interface to the final  Subsection~\ref{sec:Back moves for NO instance}.  In that final subsection we argue that if the big segments and padding segments are long enough (i.e., for 
sufficiently large $k$),
and given the bound of at most $n$ contrary moves outside $\cal M$, the movement of the tokens is constrained to be ``close'' to the intended solution. 
We do this by  proving various properties.  For example, to give the flavour  of these arguments, we prove that to get two item tokens into the ordering gadget at the same time takes more than $n$ contrary moves outside $\cal M$.  A major challenge in proving all these properties is showing that the more than $n$ contrary  moves we count are indeed disjoint from $\cal M$.  
Finally, the properties allow us to convert the movement of item tokens into a solution to the original Star STS instance as we describe next.

\paragraph*{Converting a token swapping solution to a star STS solution.}

Recall  that at any point in time during the intended solution, there is an item token in each nook plus one ``free'' item token (initially the one at the root) that moves around.  The free item token moves to a nook parent and may  swap with the item token $t'$ in the nook, in which case $t'$ becomes the new free item token.
We cannot hope for such structure in our solution since  many item tokens can be  outside the nooks.  However, we can still identify one ``free'' item token and one item  token per slot gadget. 

\begin{definition}
At any   point  in time during the swap sequence the \defn{free} item token is the item token that most recently was at the root.  
\end{definition}

In the initial and target configurations, item token $0$ is the free item token. We will  prove that 
apart from 
the free item token, there is always exactly one item token in each slot gadget. 
How can the free item token change? 
An item token from a slot gadget, say  slot  gadget $i$, must reach the root to become the  new free item token. 
Because of the property just stated (that, apart from the free item token, there is always exactly one item token in each  slot gadget),  
at this point in time, the old free item token must be in slot gadget $i$ 
(it  may have entered the slot gadget at the most recent swap or at some previous time). 

\begin{definition}
The \defn{exchange} sequence $\chi$  is constructed from the changes in the free item token as follows:
initialize $\chi$ to  the empty string; 
when the item token that was just in slot gadget $i$ becomes  the free item token,  append $i$ to $\chi$. 
\end{definition}
At the end of Subsection~\ref{sec:Back moves for NO instance} we will  prove  the following statement. 

\begin{lemma} 
\label{lem:free-item-token}
If $k$, the length of a big segment, is sufficiently  large, then
the  free item token  and  the exchange sequence $\chi$ satisfy the following properties.
\begin{enumerate}
\item At any point in time, apart from the free item token, there is exactly one item token in each slot gadget.
\item $\chi$ is a  subsequence of the sequence of swaps given as input to the Star STS instance.
\end{enumerate}
\end{lemma}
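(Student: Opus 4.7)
The plan is to prove the two statements together by induction on the sequence of swaps, using in an essential way the bound of at most $n$ contrary moves outside $\cal M$ that is established in Lemma~\ref{lemma:contrary-move-bound} together with the structural properties of item-token movement proved in Subsection~\ref{sec:Back moves for NO instance}. Throughout the argument I will assume that the detailed ``near-intended'' behavior of big and padding segments---in particular that the segments $y_1,\dots,y_n$ cross the root in this order, and similarly for $x_1,\dots,x_n$ and the padding segments---has already been forced by those properties, so that at any moment the bulk positions of the non-item tokens are close to the scaffold solution, up to an $O(n)$-sized discrepancy.

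The first step is structural: I would show that an item token can only transit between slot gadgets by passing through the root, and that any swap between two item tokens that takes place outside a nook edge necessarily creates a pair of contrary moves (of item or of non-item tokens) that do not belong to $\cal M$. This is because in the scaffold solution item tokens other than token $0$ never leave their nook and token $0$ never leaves the root-and-nook-parent region, so any deviation forces either an item token to traverse an internal edge of a slot gadget---pushing a slot token back against its direction---or an item token to sit on an edge long enough that some other token has to detour around it. These contrary moves are provably distinct from the $|A|/2$ moves catalogued in Subsection~\ref{sec:interface}, since those are already pinned to specific slot tokens moving along their main paths.

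Next I would prove property~1 by induction on time. The invariant clearly holds initially. It can only be broken at a moment when either (a) an item token enters a slot gadget that already contains a non-free item token, or (b) the free item token leaves a slot gadget $i$ for the root while the item token originally associated with slot $i$ has not yet been ``returned'' to the pool of $m$ non-free tokens distributed across the slot gadgets. In case~(a), two item tokens are simultaneously in migration between distinct slot gadgets, and I would show that emptying one slot gadget of all its item tokens while another already contains two of them forces the entire nook-parent-to-root path of at least one slot gadget to be traversed by an item token twice in excess of the scaffold count; this charges $\Omega(k)$ contrary moves outside $\cal M$, which exceeds the budget $n$ once $c$ is chosen large enough. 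Case~(b) is handled analogously: placing the free token at the root while the partner token is not yet near the correct nook parent forces the big segment currently at the boundary of slot gadget $i$ to be shoved back, again costing $\Omega(k)$ unaccounted contrary moves. Thus neither failure mode is possible, and property~1 is maintained.

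Finally, for property~2, once property~1 is established the exchange sequence has a very clean interpretation: each time the free item token changes, the old free token has just entered some slot gadget $i$ and the unique non-free item token of slot $i$ has just reached the root, with the two item tokens having crossed somewhere in slot gadget~$i$. By the first step of the argument, the crossing must in fact occur at the nook edge of slot~$i$ (any other location would incur further contrary moves outside $\cal M$ beyond those already used to account for the slot-gadget traversal). Hence the $r$-th entry $\chi_r = i$ can occur only during the ``round'' in which the big segment $y_{j_r}$ originally situated in slot gadget~$i$ is crossing the root, and by the forced order $y_1,\dots,y_n$ the indices $j_1<j_2<\cdots$ are strictly increasing; this shows $\chi$ is a subsequence of $s_1,\dots,s_n$. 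The main obstacle, and the reason the proof belongs in Subsection~\ref{sec:Back moves for NO instance} rather than being handled in a few lines, is the contrary-move bookkeeping in the inductive step for property~1: each potential violation must be charged to a set of contrary moves that is provably disjoint from $\cal M$ and from the contrary moves already charged against earlier violations along the timeline, which is exactly the kind of careful accounting that the ``interface'' in Subsection~\ref{sec:interface} is designed to support.
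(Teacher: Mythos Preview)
Your plan has a genuine structural gap in both parts.

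For property~1, your case analysis does not correctly identify how the invariant can fail. An item token entering a slot gadget has just been at the root, so it \emph{is} the free item token by definition; your case~(a) therefore describes normal behaviour, not a violation. The only way the invariant breaks is if some slot gadget becomes empty of item tokens, and the paper handles this not by a direct $\Omega(k)$ charging when the invariant is violated, but by first proving the much stronger Lemma~\ref{lem:LeafLeaving}: at all times every slot gadget contains an item token at distance at least $k(1-\tfrac{1}{4n})$ from the root. That lemma is the technical heart of the section (it takes several pages and uses the ordering Lemma~\ref{lem:OutOfOrder}, the core property Lemma~\ref{lem:segbyseg}, and a delicate analysis of what happens when a non-item token sits in a nook). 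Once it is in hand, property~1 follows in a few lines: each slot gadget has an item token, and if the free token $f$ is in gadget~$i$, then the item token that was in gadget~$i$ the last time $f$ was at the root is still there, since otherwise it would have crossed the root and become free. Your sketch neither invokes nor substitutes for Lemma~\ref{lem:LeafLeaving}, and the charging you describe (``forces the entire nook-parent-to-root path \dots\ to be traversed by an item token twice'') does not follow from a hypothetical violation of the invariant.

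For property~2, the paper does \emph{not} argue that the two item tokens must cross on the nook edge; indeed nothing rules out their crossing elsewhere in the slot gadget. Instead the paper proves Claim~\ref{lem:ItemSwap}: while $f$ is the free item token and ends up in gadget~$i$, at least $k(1-\tfrac{1}{3n})$ slot tokens initially in gadget~$i$ reach the root for the first time. Combined with the core ordering (Lemma~\ref{lem:segbyseg}) and a counter argument over the big segments, this forces each entry of $\chi$ to correspond to a distinct big segment $y_j$ with $s_j$ equal to that entry, in increasing order of $j$. Your proposed route via ``crossing at the nook edge during the round of $y_{j_r}$'' would need an independent justification that is not supplied and does not obviously follow from the budget of $n$ extra contrary moves.
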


With this lemma  in hand, we can prove the  main result of this section:

\begin{proof}[Proof of Lemma~\ref{lem:backward}] 
We get a direct correspondence between  the token  swapping solution and a solution to the Star STS instance: 
the free item token corresponds to the item at the root in the Star STS instance; the token in slot gadget $i$ corresponds to the token at slot $i$ in the Star STS instance;  and 
an  exchange $i$ in $\chi$ corresponds to a swap  between the token at the root and the token at slot $i$.
By Lemma~\ref{lem:free-item-token}, the exchange sequence  $\chi$ corresponds to a subsequence of the allowed swap sequence. 
Thus the solution to the token swapping instance provides a solution to the Star STS instance.
\end{proof}

\subsubsection{Contrary moves to get non-item tokens  to their destinations}
\label{sec:Back moves for any instance}

In this section we will show that a solution to  the constructed token  swapping instance must contain $|A|/2$ contrary  moves dedicated to the  task of getting the  non-item tokens to their target destinations.  
We do this by building a set $\mathcal  M$ that contains one distinct contrary move per slot token.  Since  there are $|A|/2$ slot tokens, $\mathcal M$ will have $|A|/2$  contrary  moves. 

For intuition about  why  contrary moves are necessary, observe that in the intended/scaffold solution every  slot token swaps with an item token as  it travels leftward from its initial position to its final position, and  
this swap is a right (contrary)  move for  the item token.  Of course, a general solution need  not have this property.  However, note that in the initial configuration item token 0 lies in the path between the  initial and final positions of any slot  token $t$.  The informal idea is that if item token 0 swaps  off this path, then some  other ``intruder'' token $r$  takes its place, and we will show that $r$ must make a contrary move.

We now make this intuition  formal.
Recall  from Definition~\ref{defn:main-path} that for a slot token $t$ with initial  position in slot gadget $i$ its main path  is the path in the tree between the rightmost vertex of $t$'s slot gadget and the leftmost vertex of the ordering gadget.

\begin{definition}
An \defn{imposter} for a slot token $t$ is a token $r$ such that: 
(1) $t$ makes  a  left  move on its main path while swapping with $r$;
and  (2) $r$ is either an item token or a token  that $t$ does not swap with in  the scaffold solution. 
\end{definition}

\begin{claim}
\label{claim:imposter}
Every slot token $t$ has at least one imposter.
\end{claim}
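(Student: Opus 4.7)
The plan is a pigeonhole argument by contradiction. First I will show that $t$ must participate in at least $d_t$ distinct leftward-move swaps on its main path. Since $t$'s initial and target positions both lie on the main path at distance $d_t$ apart, and since moves across a nook edge are neither left nor right by Definition~\ref{defn:contrary-move}, the net count of leftward minus rightward moves of $t$ on the main path must equal $d_t$; in particular, $t$ makes at least $d_t$ leftward moves on the main path. Each such move is a swap, and by Claim~\ref{lem:unique} no two tokens swap more than once, so these $\geq d_t$ leftward moves are swaps with $\geq d_t$ \emph{distinct} partner tokens.

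Next I will count how many non-item tokens $t$ can swap with in the scaffold solution. By (P1), in the scaffold $t$ makes exactly $d_t$ moves, all leftward, and each is a swap. By (P2), none of these swaps is with a slot token. In the scaffold, item tokens $1, \ldots, m$ remain pinned in their nooks and so never meet $t$, while by (P4) item token $0$ does swap with $t$ --- exactly once, by Claim~\ref{lem:unique}. Therefore $t$ swaps with exactly $d_t - 1$ non-slot tokens in the scaffold, and these are $t$'s only non-item scaffold partners.

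Now I can close the argument. If $t$ had no imposter, then every one of its $\geq d_t$ distinct leftward-move swap partners on the main path would have to be a non-item token that $t$ also swaps with in the scaffold --- a pool of size only $d_t - 1$, which is impossible. This contradiction proves the claim. The main care I anticipate is bookkeeping around moves off the main path: because the nook edge is off the main path and carries no left/right orientation under Definition~\ref{defn:contrary-move}, any detour of $t$ into the nook consists of a pair of non-oriented moves and does not affect the leftward-move count driving the pigeonhole.
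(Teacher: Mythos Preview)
Your proof is correct and follows essentially the same pigeonhole argument as the paper: at least $d_t$ distinct leftward-move partners on the main path, but only $d_t-1$ non-item scaffold partners available. The paper's version is terser (it simply asserts the $d_t-1$ count rather than deriving it from (P1), (P2), (P4)), and your aside about nook edges is slightly off---what matters is not whether nook-edge moves carry a left/right orientation but simply that the nook edge lies off the main path, so such moves do not contribute to the main-path leftward count.
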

\begin{proof} 
Token $t$ must make at least $d_t$ left moves along  its  main   path.
(Recall  that $d_t$ is the distance from $t$'s initial position to  its target position.)
The number of non-item tokens that $t$ swaps with in the scaffold solution is exactly one less than $d_t$.  Furthermore,  by Lemma~\ref{lem:unique} $t$ does not swap  with the same token more than once.   Thus one of $t$'s  left moves  on its main  path must be a swap with an imposter. 
\end{proof}

\paragraph*{Choosing contrary moves.}

Consider a solution to the  token swapping   instance.
(In  fact, we will only  use the property that the sequence  of swaps gets every non-item token to  its destination.) 
We will give a procedure that iterates through  the slot tokens $t$, and chooses one of $t$'s imposters, $r$, together with a unique contrary move of $r$. 
The contrary move of $r$ will not always occur  when $r$ swaps with  $t$.
Our procedure will add the chosen contrary move to a set $\mathcal{M}$ (initially empty).  Since there are $|A|/2$ slot tokens  and  we choose a unique contrary  move for each one, the final size of $\mathcal{M}$ will be $|A|/2$.

For each slot  token  $t$ consider the left moves of $t$ along  its main  path.  Check  the   following cases in  order.

\noindent
{\bf Case 1.} If $t$ swaps with an  item token, choose one such item token $r$. 
Then $r$ is an  imposter for $t$. Since $t$ moves left,  $r$ moves right which is a contrary move for $r$.  Add this contrary move of $r$ to $\mathcal{M}$.  

\noindent
{\bf Case 2.}
If $t$ swaps with another slot token,  choose one such slot token $r$. Then $r$ is an imposter for $t$ since no two slot tokens swap in the scaffold solution by (P2). 
Since $t$ moves left, $r$ moves  right  which is a  contrary  move for $r$.
Add this contrary move of $r$ to $\mathcal{M}$.  

We may now suppose that $t$ only swaps with non-slot tokens as it makes left  moves  on its main path. 
By Claim~\ref{claim:imposter}, $t$ swaps with at least one imposter.  
We next check \emph{where} this happens.  Suppose that $t$'s initial position is in slot gadget $i$.

\noindent
{\bf Case 3.}  If $t$ swaps with an imposter in slot gadget $i$, choose one such imposter $r$.   Suppose the  swap happens on edge $e$. Then $r$ moves  right.  In the scaffold solution  $t$ swaps with every non-slot token whose target destination is in slot $i$.  Thus $r$'s target destination is outside slot gadget $i$ and $r$ must move left on $e$ at some later time. 
As the contrary  move, choose the  earliest left move of $r$ on  edge $e$ after the  $r$-$t$ swap, and add this contrary move to $\mathcal{M}$.

We may now suppose that $t$ swaps with imposters only in the  ordering gadget, and  all  these imposters are non-slot tokens.

\noindent
{\bf Case 4.} If $t$ swaps with an imposter $r$ that was at the root  at some time before the $r$-$t$ swap, choose one such $r$.
Suppose the $r$-$t$ swap occurs on edge $e$.  Then $r$ moves  right on $e$, and must have moved left on $e$ between the time it was at the root and the time it swapped with $t$.  
As the contrary  move, choose the most recent left move of $r$ on  edge $e$ before the  $r$-$t$ swap, and add this contrary move to $\mathcal{M}$. 

We  are left with the set of tokens  $t$ that only have imposters of the following special form.

\begin{definition} A non-slot token  $r$ is a  \defn{non-leaving imposter} for slot token  $t$ if:
$t$ makes a left move in the ordering  gadget while swapping with $r$; (2) $r$ does not swap with $t$  in the  scaffold solution;  and  (3) $r$ has remained inside the ordering gadget up  until the  $r$-$t$ swap. 
\end{definition}

Let $S$ be the set of slot tokens $t$ such that every imposter for $t$ is a non-leaving imposter.
Note  that by Claim~\ref{claim:imposter}, every  $t \in S$  has a  non-leaving imposter.  In order  to complete the procedure to build $\cal M$, we must choose a non-leaving imposter and a contrary move for each token in $S$.

In later sections we will need to limit the number of non-slot tokens that are chosen as non-leaving imposters so we will choose the non-leaving imposters for tokens  in  $S$  according to Lemma~\ref{lem:FewImposters} below.  This has no effect on the rest of the current argument, so for now, just assume that we  have some assignment of non-leaving imposters to the tokens in $S$.  Let $S(r) \subseteq S$ be the  tokens  of $S$ that are assigned to imposter $r$.  Let $S_i(r) \subseteq  S(r)$ be the tokens of $S(r)$ whose initial positions are in  slot gadget $i$.  Note  that the sets $S_i(r)$ partition $S$.
Let $D_i$ be the non-slot tokens whose final positions are in slot gadget $i$.

\begin{lemma}
\label{lemma:enough-contrary-moves}
Token $r$ makes at least $|S_i(r)|$ contrary  moves while swapping with tokens of $D_i$.
\end{lemma}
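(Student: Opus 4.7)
The high-level approach is a potential function argument that tracks the positions of $D_i$ tokens relative to $r$ during the interval when $r$ is pinned in the ordering gadget. Let $\ell = |S_i(r)|$ and let $\tau_1 < \tau_2 < \cdots < \tau_\ell$ be the times of the $\ell$ swaps between $r$ and the tokens $t_1,\ldots,t_\ell$ of $S_i(r)$. The non-leaving imposter property guarantees that $r$ stays continuously in the ordering gadget during $[0,\tau_\ell]$, and that each $r$-$t_j$ swap occurs inside the ordering gadget with $r$ moving toward the root (non-contrary for $r$) and $t_j$ moving away.

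The main tool will be the potential $\Phi(\tau)$ equal to the number of $D_i$ tokens on the root side of $r$ at time $\tau$, where ``root side of $r$'' means that the token's unique path to the root in the tree avoids the current vertex of $r$. A direct case check on swap types will show that $\Phi$ changes only via a swap that involves both $r$ and a $D_i$ token: such a swap increases $\Phi$ by exactly $1$ when $r$'s move is contrary (leftward) and decreases it by $1$ when $r$'s move is non-contrary; in particular, the $r$-$t_j$ swaps themselves leave $\Phi$ unchanged since no $t_j$ lies in $D_i$. Writing $\beta$ (resp.\ $\alpha$) for the number of contrary (resp.\ non-contrary) $r$-$D_i$ swaps inside $[0,\tau_\ell]$, we get $\beta - \alpha = \Phi(\tau_\ell) - \Phi(0)$, so it suffices to prove that $\Phi(\tau_\ell) - \Phi(0) \geq \ell$. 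I plan to get this bound from a capacity/displacement argument: initially the ordering gadget is completely filled by non-slot tokens, whereas at time $\tau_\ell$ all $\ell$ tokens $t_j$ sit on the non-root side of $r$ inside the ordering gadget (each $t_j$ crosses from the root side to the non-root side of $r$ at time $\tau_j$ and cannot cross back by Claim~\ref{lem:unique}). Because the ordering gadget has fixed capacity, these $\ell$ extra slot tokens must force at least $\ell$ non-slot tokens from the non-root side of $r$ to the root side of $r$ during $[0,\tau_\ell]$.

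The main obstacle is turning ``at least $\ell$ non-slot tokens get displaced to the root side of $r$'' into ``at least $\ell$ of those displacements are $D_i$ tokens''. A priori the displaced non-slot tokens could be ones destined for other slot gadgets, and these would not contribute to $\Phi$. I expect the fix to lean on the detailed layout in Section~\ref{sec:construction} --- in particular, the interleaving of $D_i$ padding segments throughout the ordering gadget --- together with property (P6) and the assumption that the big-segment length $k$ is large enough that local discrepancies are negligible. A backup plan, should the direct capacity argument fail to separate out $D_i$ cleanly, is to match each $t_j$ individually with a dedicated $D_i$ token by tracing $t_j$'s trajectory from slot gadget $i$ through the ordering gadget to the vertex adjacent to $r$, and argue that $t_j$'s leftward progress through the ordering gadget must at some point force $r$ to perform a distinct contrary swap with a $D_i$ token.
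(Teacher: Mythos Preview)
Your potential-function setup is sound, and you correctly isolate the crux: showing $\Phi(\tau_\ell)-\Phi(0)\ge \ell$. But neither of your proposed fixes closes that gap. The interleaving of padding segments and property~(P6) are facts about the \emph{scaffold} order, not about which non-slot tokens must vacate the non-root side of $r$ in the actual solution; nothing there singles out $D_i$ among all non-slot tokens. And this lemma is an exact counting statement used \emph{before} any ``$k$ large'' slack is invoked, so appealing to asymptotics in $k$ is not available here. Your backup plan traces each $t_j$ through the ordering gadget, but the $t_j$'s path to $r$ need not interact with $D_i$ tokens at all, so there is no evident way to extract a dedicated contrary $r$--$D_i$ swap from it.

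The paper's argument looks in the opposite direction: at $t_f$'s path through \emph{slot gadget~$i$}, not through the ordering gadget. Because $t_f\in S$, Case~3 failed for $t_f$, so every token $t_f$ swaps with during its left moves inside slot gadget~$i$ is a non-slot non-imposter; this forces that token to be one that $t_f$ swaps with in the scaffold solution, and the scaffold structure (property~(P3)) pins those down as $D_i$ tokens. This produces a concrete set $Z\subseteq D_i$ with $|Z|\ge h_f$. Separately, applying~(P3) to $t_1$ (the shallowest token of $S_i(r)$) locates $r$'s initial position and shows fewer than $h_1$ tokens of $D_i$ lie on the root side of $r$ initially. Hence at least $h_f-h_1+1\ge f$ tokens of $Z$ start on the non-root side of $r$; each must exit the ordering gadget (to meet $t_f$ inside slot gadget~$i$) while $r$ is still trapped there, and the crossing is a contrary move of $r$. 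The missing idea, then, is that the $D_i$ tokens are supplied by $t_f$'s swaps in slot gadget~$i$, and their count is compared against $r$'s initial position via~(P3) applied to $t_1$.
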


With this lemma  in hand, we  can  complete the  construction of  $\mathcal M$.  

\noindent
{\bf Case 5.} Token $t$ lies in $S$, which we have  partitioned into the sets $S_i(r)$.    For each $i$ and $r$ such that $S_i(r)$ is non-empty, choose  $|S_i(r)|$ contrary  moves of $r$ where $r$ swaps  with tokens of  $D_i$ and add them to $\mathcal M$.  Note that this adds a total of $|S|$ contrary  moves to $\mathcal M$ because the contrary moves of $r$ for different $i$'s are disjoint since the  sets $D_i$ are disjoint.

\begin{proof}[Proof of Lemma~\ref{lemma:enough-contrary-moves}]
Consider the  initial  configuration  of tokens.
Let $t_1, \ldots,  t_f$ be the tokens of $S_i(r)$  ordered by  their (initial) positions in slot gadget  $i$ with $t_1$ closest to the root. Then $|S_i(r)| = f$ so we need to show that $r$  makes $f$ contrary moves while swapping with tokens of $D_i$.
Let $g$ be the big/padding segment containing  $t_1$, and let $g'$ be the partner of $g$.  Let $h_1$ and $h_f$ be the distances from the root to $t_1$ and $t_f$, respectively.

Since $r$ is an imposter for $t_1$, there was no swap between $r$ and $t_1$ in the scaffold solution. 
But in the scaffold solution, 
by  property (P3),  $t_1$ swaps with every token to the left of (and including)  $g'$.  Therefore 
$r$ must lie to the right of $g'$. 
Also, by property  (P3),  fewer than $h_1$ tokens of  $D_i$ lie to the right of  $g'$.  Thus, fewer than $h_1$ tokens of  $D_i$ lie to the right of $r$.

Let $Z$ be the  set of non-slot tokens that swap with $t_f$ during left moves of $t_f$ on its main path in slot gadget $i$. 
By Lemma~\ref{lem:unique}, $t_f$ does not swap with the same token  twice, so $|Z| \ge h_f$.
 No  token of $Z$ is  an imposter for $t_f$ because $t_f$ only has non-leaving imposters. Thus  $Z$ consists  of non-slot tokens whose destinations are in slot gadget $i$, i.e., $Z \subseteq D_i$. Also note that $r \notin Z$ since $r$ stays in  the ordering gadget until it swaps with $t_f$.  

Let $Z'$ be the  tokens of $Z$ that lie to the left of $r$ in the initial configuration.
Since there are less than $h_1$ tokens of $D_i$ to the right  of  $r$, we have $|Z'| >  h_f - h_1 \ge f-1$. 
Thus $|Z'| \ge f$.
Finally, we claim that all the tokens of $Z'$ swap with $r$, and do so during contrary (left) moves of $r$.  This  is   because $r$ remains  inside the ordering gadget until it swaps with $t_f$, but all the tokens of $Z'$ must exit the ordering gadget so they can swap with $t_f$ on its journey to the root.   
Thus $r$ makes at least $f$ contrary  moves while swapping with tokens of $D_i$, as required.
\end{proof}

\paragraph*{Uniqueness of contrary moves.}
We  claim  that the  set $\mathcal M$ constructed above contains $|A|/2$ distinct contrary moves.
First note that we added a contrary move to $\mathcal M$ for each slot token, and there are $|A|/2$ slot tokens.  It remains to  show that we never added the same contrary move to $\mathcal M$ multiple times.
Consider a contrary move in $\mathcal{M}$ of a token $r$.  We can uniquely identify when it was added to $\mathcal{M}$ as follows. 

\begin{itemize}
\item If $r$ is an item  token then the move was added to $\mathcal{M}$ in Case 1, and the associated slot token $t$ is the one that $r$ swaps with  during this contrary move.  
\item If $r$ is a slot token then the move was added to $\mathcal{M}$ in Case 2, and $t$  is the token  that $r$ swaps with.
\item If $r$ is a non-slot token and its contrary (left) move  is on an edge $e$ in a slot gadget, then the move was added to $\mathcal{M}$ in Case 3, and in $r$'s most recent right move on edge $e$ it swapped with token $t$.
\item If $r$ is a non-slot token and its contrary (left) move  is on an edge $e$ in the ordering  gadget  and $r$
visited the root before this contrary move, then the move was added to $\mathcal{M}$ in Case 4, and in $r$'s most recent  right  move on edge $e$  it swapped with  token $t$.
\item    Finally, if $r$ is a non-slot token and its contrary (left) move  is on an edge in the ordering  gadget  and $r$
did not visit the root before this contrary move,
then the move was added  to  $\mathcal M$ in Case 5. As  already noted, Case 5 added $|S|$ distinct contrary  moves to $\mathcal M$ to pay for the tokens $t \in S$.  
\end{itemize}

Thus, we have shown what we set out to show at the beginning of this section: a solution to our token  swapping instance must have $|A|/2$ contrary  moves dedicated to the  task of getting the  non-item tokens to their target destinations. However, to argue that the contrary moves 
that we will count
in future sections are disjoint from those counted in this section, we need to be more precise about how imposters are chosen.

\paragraph*{How to choose non-leaving imposters.}
We will prove the following lemma, which limits the number of non-leaving imposters. %

\begin{lemma}\label{lem:FewImposters}
We can choose non-leaving imposters so that the total number of unique non-leaving imposters is less  than  or equal to %
$n(m+1)$.    
\end{lemma}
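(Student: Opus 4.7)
The plan is to assign non-leaving imposters so that at most one distinct imposter is used per segment of a slot gadget. Since the slot gadgets contain $n$ big segments (the $y_j$'s) and $nm$ padding segments (the $q_{i,j}$'s), for a total of $n(m+1)$ segments, this yields the desired bound of $n(m+1)$ distinct non-leaving imposters.

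The key structural observation is property (P3): the set of scaffold-solution swap partners of a slot token $t$ depends only on the segment $g$ containing $t$, not on $t$'s position within $g$. Consequently, every slot token in $g$ shares the same pool of candidate non-leaving imposters, namely the non-slot tokens lying in the ordering gadget but outside $\pi'_1 \cup \pi_2$ (as determined by $g$). So once we find one non-leaving imposter $r_g$ for some $t^\star \in g \cap S$, which exists by Claim~\ref{claim:imposter} together with the definition of $S$, this $r_g$ automatically satisfies the scaffold-exclusion condition for every other $t \in g \cap S$.

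Hence it remains to verify that the same $r_g$ can actually be \emph{realized} as an imposter by the swap dynamics for every $t \in g \cap S$; that is, $r_g$ must in fact swap with every such $t$ during a left move of $t$ in the ordering gadget, without leaving the ordering gadget beforehand. The plan is to establish this by exploiting that the tokens of a single segment $g$ traverse the ordering gadget in a (nearly) monotone order, so if $r_g$ is still inside the ordering gadget when it meets one token of $g$, then by Claim~\ref{lem:unique} and the tree geometry it must meet each other token of $g$ in turn as they pass. If this fails for some $t \in g \cap S$ because $r_g$ has already left the ordering gadget by the time $t$ arrives, then $t$ still has its own non-leaving imposter by Claim~\ref{claim:imposter}, and we rechoose the canonical representative $r_g$ to cover the problematic $t$ and iterate.

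The main obstacle I expect is formalizing the ``nearly monotone'' traversal claim: slot-token movements within a segment $g$ are not \emph{a priori} constrained to preserve relative order, so in principle tokens of $g$ could interleave with tokens of neighbouring segments in the ordering gadget in complicated ways. The expected remedy is to show that any such out-of-order interleaving within a single segment would incur additional contrary moves beyond the overall budget provided by Lemma~\ref{lemma:contrary-move-bound}, forcing the iterative rechoosing procedure to terminate with at most one distinct representative per segment of a slot gadget, and hence at most $n(m+1)$ distinct non-leaving imposters overall.
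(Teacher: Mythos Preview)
Your high-level plan is correct: exploit (P3) so that every slot token in a single segment $G$ has the same set of scaffold swap partners, and then argue one non-leaving imposter can serve all of $G' := G \cap S$. The difficulty you identify---that a candidate $r_g$ chosen for one $t^\star \in G'$ might already have left the ordering gadget before some other $t \in G'$ arrives---is exactly the right obstacle, but your proposed fix (iterative rechoosing backed by a contrary-move budget) is not a proof. Rechoosing $r_g$ to accommodate a later-arriving $t$ gives no reason why the new $r_g$ works for the earlier $t^\star$, so the iteration need not stabilise on a single imposter per segment, and nothing in Lemma~\ref{lemma:contrary-move-bound} at this stage bounds how many distinct $r$'s the procedure might cycle through. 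Any ``nearly monotone'' statement strong enough to force stabilisation would essentially be Lemma~\ref{lem:OutOfOrder}, which is proved \emph{after} the present lemma and depends on it via Observation~\ref{obs:purple} and Corollary~\ref{obs:interface}; invoking it here would be circular.

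The paper resolves the obstacle with a single clever choice rather than an iteration: take $t$ to be the token of $G'$ whose \emph{last} entry into the ordering gadget is latest among all of $G'$. A short counting argument (on the $h'$ left moves $t$ must still make after that last entry) shows $t$ meets an imposter $r$ \emph{after} this final entry; since $t \in S$, this $r$ is non-leaving. The payoff of choosing $t$ last is that, at the moment $t$ enters for the final time, every other $g \in G'$ is already in the ordering gadget and will never leave again, while $r$ is still inside and will remain so until after its swap with $t$. A direct left/right crossing argument on the path then forces $r$ to swap with every such $g$---with $r$ moving right and $g$ moving left---before $r$ first reaches the root. Combined with (P3), this makes $r$ a non-leaving imposter for every $g \in G'$ simultaneously, with no iteration and no appeal to the contrary-move budget.
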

\begin{proof}
Recall that $S$ is the set of slot tokens $t$ such that every imposter for $t$ is  a non-leaving imposter.  Tokens  of $S$ are exactly  the tokens for which we must choose a non-leaving  imposter. 
We show how to choose non-leaving imposters so that 
if two tokens in $S$ were initially in the same segment (the same $y_j$ or $q_{i,j}$) then they   are assigned the same imposter.
This proves that the number of unique non-leaving  imposters is less than or equal to the number of segments initially in  the slot gadgets. 
There are $n$ big segments and $nm$ padding segments in the slot gadgets,  so the  bound is $n(m+1)$, as required.

Let  $G$ be the set  of slot tokens from one segment.
Observe that every  token  in $G$ swaps with the same set  of tokens in the scaffold solution.
Let $G'$ be $G \cap S$.
Then every  $g \in G'$ has a non-leaving imposter and does not have any other kind of  imposter.
We must show that all tokens in $G'$ can share the same non-leaving imposter.

Consider the last time each token in $G'$ enters the ordering gadget, and let $t$ be the last token to do so. 

\begin{claim} $t$ swaps with an imposter after it enters the ordering gadget for the last time.
\end{claim}
\begin{proof}
Let $d_t = h  + h'$ where $h$ is the distance from $t$'s initial position to the root and $h'$ is the distance from the root to $t$'s target position.  Then  $t$ must make at least $h$ left moves  before it reaches the root for the first time, and must make at least $h'$ left moves after it reaches the root for the last time.  These moves all   occur on $t$'s  main path.
As noted in the proof of Claim~\ref{claim:imposter}, the number of non-item tokens that $t$ swaps with in the scaffold solution is $d_t -1$.  Since $t$ has already used up at least $h$ of them on its way to the root (where it met no  imposters along the way), there are at most $h' -1$ of them left.   Therefore $t$ must swap with an imposter during a left move after $t$ enters the ordering gadget for the last time. 
\end{proof}

Let $r$ be an imposter that $t$ swaps with after it enters the ordering gadget for the last time.   Note that $r$ must be a non-leaving imposter, since $t$ did  not have any  other kind of imposter.    

We claim that $r$ is also a  non-leaving imposter for every token in  $G'$. 
Since the tokens in  $G'$ only have non-leaving  imposters, it suffices to show that $r$ is an imposter for every token $g \in G'$. 
To do this, we prove that: 
(1) $g$ swaps with $r$ while $g$ makes a  left move on an edge of the ordering gadget (note that the ordering gadget  is  part of every  slot  token's main  path); 
and (2) $g$ does not swap with $r$  in  the scaffold solution.

By definition of $t$, when $t$ enters the ordering gadget for the last time, every other token in $G'$ is in the ordering gadget and will not leave again. 
After $t$ enters the  ordering gadget it swaps with $r$, and after that, $r$ reaches the root for the first time.
Thus, before $r$ reaches the root, it must swap with every token $g$ in $G'$ with $r$ moving right and $g$ moving left.
This proves~(1).

Since all tokens in $G$ swap with the same set of tokens  in  the scaffold solution,  the fact that $r$ does not swap with $t$ in  the scaffold solution means that $r$ does not swap with any token in $G'$ in the scaffold solution. This proves (2). 
\end{proof}

\subsubsection{Interface between Sections~\ref{sec:Back moves for any instance} and~\ref{sec:Back moves for NO instance}: Characterization of contrary moves}\label{sec:interface}

In this section we will characterize the contrary moves that we counted in Section~\ref{sec:Back moves for any instance}, so that we can argue that the contrary moves %
we will count in forthcoming Section~\ref{sec:Back moves for NO instance}  are disjoint from those already counted. 

We say that the \defn{initial gadget} of a non-item token $t$ is the gadget (either the ordering gadget or one of the slot gadgets) containing $t$'s initial position. Similarly, the \defn{final/destination gadget} of $t$ is the gadget containing $t$'s final position. 

Recall that $\mathcal{M}$ is the set of contrary moves counted in Section~\ref{sec:Back moves for any instance}. The following observation follows directly from the construction of $\mathcal{M}$. These correspond to {\bf Cases 1-5} in Section~\ref{sec:Back moves for any instance}.

\begin{observation}\label{obs:purple}
$\mathcal{M}$ can be written as $\mathcal{M}=M_1\cup M_2 \cup M_3 \cup M_4 \cup M_5$, where each $M_i$ is a set of contrary moves that are \emph{not} into or out of the nook, with the following properties:
\begin{enumerate}
\item $M_1$ is a set of contrary (right) moves of item tokens.
\item $M_2$ is a set of contrary (right) moves of slot tokens with the following properties: 
\begin{itemize}
    \item Each contrary move in $M_2$ occurs in a slot gadget $i$ where the \emph{other} token in the swap is a slot token $t$ whose initial position is in slot gadget $i$.
    \item $M_2$ contains at most one such contrary move for each slot token $t$. 
\end{itemize} \label{purp2}
\item $M_3$ is a set of contrary (left) moves of non-slot tokens with the following properties:
\begin{itemize}
    \item Each contrary move in $M_3$ of a token $s$ occurs on an edge $e$ in a slot gadget $i$ that is not $s$'s destination gadget.
    \item In $s$'s most recent right move on edge $e$, $s$ swapped  with a  slot token $t$ whose initial gadget is slot  gadget $i$.
    \item $M_3$ contains at most one such contrary move for each slot token $t$.
\end{itemize} \label{purp3}
\item $M_4$ is a set of contrary  (left)  moves of non-slot tokens with the following properties:
\begin{itemize}
    \item Each contrary move in $M_4$ occurs in the ordering gadget.
    \item For each contrary move in $M_4$, the token $s$ that performs the contrary move was at the root at some point previously.
\end{itemize}  %
\item $M_5$ is a set of contrary (left) moves of a set of at most $n(m+1)$ non-slot tokens, that occur in the ordering gadget. The bound of $n(m+1)$ follows from Lemma~\ref{lem:FewImposters}. 
\end{enumerate}
\end{observation}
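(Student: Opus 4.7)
The plan is to define $M_i$ as the set of contrary moves added to $\mathcal{M}$ during Case $i$ of the construction in Section~\ref{sec:Back moves for any instance}, for $i=1,\dots,5$. Because the construction iterates over slot tokens $t$ and adds exactly one contrary move per $t$ via exactly one of Cases~1--4, or (for $t\in S$) via the Case~5 batch accounting supplied by Lemma~\ref{lem:FewImposters}, we obtain $\mathcal{M}=M_1\cup M_2\cup M_3\cup M_4\cup M_5$ directly from the construction. The ``type'' of each move (right vs.\ left, item vs.\ slot vs.\ non-slot) then transcribes directly from the corresponding case description, so the rest of the proof is a short checklist of verifications.

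First I would establish the blanket ``not into or out of a nook'' condition. In Cases~1 and~2 the move is one half of a swap performed while some slot token $t$ makes a left move along its own main path; in Cases~3 and~4 the move is a left move of a non-slot token $r$ on an edge that also lies on some slot token $t$'s main path; in Case~5 the move is in the ordering gadget. By Definition~\ref{defn:main-path}, the main path of a slot token runs between the rightmost vertex of its slot gadget and the leftmost vertex of the ordering gadget, so it is disjoint from the pendant nook edge. Hence no move in $\mathcal{M}$ sits on a nook edge.

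Next I would verify the remaining per-case properties by reading them off the construction. For $M_1$, the ``right move of an item token'' claim is literally Case~1. For $M_2$, Case~2 adds at most one contrary move of a chosen slot token $r$ per $t$ and ensures the swap partner is a slot token; the matching of the enclosing slot gadget to $t$'s initial slot gadget follows once one argues that the witnessing left move of $t$ occurs inside $t$'s initial slot gadget (see obstacle below). For $M_3$, Case~3 picks the earliest left move of $r$ on an edge $e$ after the $r$--$t$ swap, which forces the most recent right move of $r$ on $e$ to be exactly that swap; moreover $r$'s destination gadget is not $t$'s initial slot gadget $i$, since otherwise (P3) of the scaffold solution would make $r$ a swap partner of $t$, contradicting the fact that $r$ is an imposter. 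For $M_4$, the ``was at the root previously'' clause is the defining hypothesis of Case~4. For $M_5$, the $n(m+1)$ bound is precisely Lemma~\ref{lem:FewImposters} applied to the non-leaving imposters that Case~5 uses, all of whose contrary moves are left moves in the ordering gadget.

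The main obstacle will be the ``same slot gadget'' claim in $M_2$: a priori, $t$ could swap with another slot token $r$ during a left move in the ordering gadget rather than in $t$'s initial slot gadget. To close this gap I would either argue directly that such an ordering-gadget swap forces $r$ (itself a slot token with destination in the ordering gadget) to have crossed the root moving right at an earlier time, and then fold that earlier right move into the $M_2$ accounting for $t$, or else refine Case~2 so that any swap of two slot tokens occurring in the ordering gadget is redirected to the Case~4 or Case~5 accounting, preserving the one-contrary-move-per-slot-token invariant. Everything else is straightforward bookkeeping, so this is essentially the only nontrivial check.
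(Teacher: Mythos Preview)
Your approach is exactly the paper's: the observation is stated without proof, with the remark that it ``follows directly from the construction of $\mathcal{M}$,'' and your plan of defining $M_i$ as the moves added in Case~$i$ and reading off the properties is precisely what is intended.

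On the $M_2$ obstacle you flag: you are right that Case~2 in the construction does not explicitly restrict the swap to $t$'s initial slot gadget, so a literal reading of the observation's ``occurs in a slot gadget $i$'' clause is not immediately justified. However, the fix is much simpler than either of your proposed resolutions. The swap in Case~2 happens during a left move of $t$ on its main path, and $t$'s main path meets only one slot gadget, namely $t$'s initial slot gadget~$i$. Hence the swap lies either in slot gadget~$i$ or in the ordering gadget. In particular, \emph{if} the contrary move lies in any slot gadget~$j$, then $j=i$ is $t$'s initial gadget. This conditional form is all that the downstream Corollary~\ref{obs:interface} actually uses (see items~\ref{int6} and~\ref{int2}), so there is no need to reroute ordering-gadget swaps through Cases~4 or~5 or to chase an earlier root-crossing of $r$. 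You can simply note this one-line observation and move on.
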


The following result follows directly from Observation~\ref{obs:purple} and is an explicit description of various contrary moves that are \emph{not} in $\mathcal{M}$. We list these moves here for easy reference from Section~\ref{sec:Back moves for NO instance}.

\begin{corollary}\label{obs:interface}$ $
\begin{enumerate}
    \item Any contrary (left) move of a non-slot token that is in its destination slot gadget is not in $\mathcal{M}$.\label{int1}
     \item Any contrary (right) move of a slot token $s$ swapping with a token $t$ that occurs in slot gadget $i$, such that $i$ is $s$'s initial gadget and $i$ is \emph{not} $t$'s initial gadget, is not in $\mathcal{M}$.\label{int6}
       \item Any contrary move of a non-item token into or out of the nook is not in $\mathcal{M}$.\label{int4}
    \item Any contrary move of a non-item token that is in a slot gadget that is not its final gadget, such that the other token in the swap is an item token, is not in $\mathcal{M}$.\label{int5}
    \item If a slot token $t$ has $\lambda$ many swaps in either $t$'s initial slot gadget or the ordering gadget, in which the other token in each of these swaps is a slot token doing a contrary (right) move, then at least $\lambda-1$ of these contrary moves are not in $\mathcal{M}$.\label{int2}
    \item If a non-slot token $t$ has $\lambda$ many swaps in the ordering gadget, in which the other token in each of these swaps is doing a contrary (left) move and is a non-slot token that has not yet reached the root, then at least $\lambda-n(m+1)$ of these contrary moves are not in $\mathcal{M}$.\label{int3}
   
\end{enumerate}
\end{corollary}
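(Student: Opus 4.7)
Since Observation~\ref{obs:purple} provides the explicit decomposition $\mathcal{M} = M_1 \cup M_2 \cup M_3 \cup M_4 \cup M_5$ together with a characterization of each $M_i$, the plan is a pure case analysis: for each item of the corollary, take an arbitrary contrary move satisfying the hypothesis and show it cannot lie in any of the five sets, or (for items 5 and 6) bound the number of matching moves contained in the unique candidate set.

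First I would dispatch items 1--3, which are essentially type-matching. For item 1, a contrary left move of a non-slot token in its destination slot gadget is ruled out of $M_1$ (wrong token type), $M_2$ (wrong token type), $M_4, M_5$ (wrong location: they require the ordering gadget), and $M_3$ (requires the slot gadget to \emph{not} be the token's destination gadget). For item 2, a contrary right move of a slot token $s$ in its initial slot gadget $i$ swapping with $t$ (where $i$ is not $t$'s initial gadget) is excluded from $M_1, M_3, M_4, M_5$ by token type and from $M_2$ because $M_2$ requires the partner $t$ to have initial position in slot gadget $i$. Item 3 is immediate from the blanket statement in Observation~\ref{obs:purple} that every $M_i$ excludes nook-edge moves.

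Item 4 splits by the type of the non-item token. In subcase~A, the token is a slot token performing a right contrary move swapping with an item token: $M_1, M_3, M_4, M_5$ fail by token type, and $M_2$ requires the swap partner to be a slot token (not an item token). In subcase~B, the token is a non-slot token performing a left contrary move in a slot gadget (which excludes $M_1, M_2$ by type and $M_4, M_5$ by location), so one must verify that the structure of Case~3 in the construction of $M_3$—whose contrary moves are tied to imposter swaps with slot tokens—is incompatible with swapping with an item token in the contrary step itself. The cleanest way to do this, which I expect to be the main obstacle, is to trace exactly how $M_3$'s moves are chosen and argue that the item-token partner forces the move to be classified into a different case of the construction (so does not contribute a move to $M_3$).

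Finally, for items 5 and 6 the conclusion is quantitative rather than absolute, and follows from the cardinality bounds inside the characterization of $\mathcal{M}$. For item 5, the only $M_i$ that could contain a right contrary move of a slot token swapping with our fixed slot token $t$ is $M_2$, whose ``at most one per $t$'' clause immediately yields the $\lambda - 1$ bound. For item 6, the only candidate is $M_4 \cup M_5$; $M_4$ is ruled out since it requires the contrary-moving token to have visited the root, whereas by hypothesis it has not, so the only candidate is $M_5$, whose cardinality is bounded by $n(m+1)$ via Lemma~\ref{lem:FewImposters}, giving the claimed $\lambda - n(m+1)$ bound.
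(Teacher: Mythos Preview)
Your approach is exactly what the paper intends: it asserts only that the corollary ``follows directly'' from Observation~\ref{obs:purple}, and the case analysis you outline is the natural unpacking of that. Your arguments for items 1--3, 5, 6, and Subcase~A of item~4 are correct.

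For Subcase~B of item~4 (a non-slot token $s$ making a left contrary move in a slot gadget that is not its destination, with an item token as partner), your proposed resolution does not work as stated. The classification into Cases~1--5 in the construction of $\mathcal{M}$ is indexed by the \emph{slot token} $t$ being processed, and the contrary move recorded in $M_3$ is the earliest left move of the imposter $r$ on an edge $e$ after the $r$--$t$ swap. Observation~\ref{obs:purple}(\ref{purp3}) constrains only the partner in $r$'s most recent \emph{right} move on $e$ (namely, that it was a slot token initially in gadget $i$); it places no restriction on who $r$ swaps with in the left contrary move itself. So the fact that the current partner is an item token neither forces $t$ into a different case nor contradicts the $M_3$ description. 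The paper offers no additional detail here, so this is not a divergence in approach, but your specific argument for this subcase has a gap. (In the paper's later applications of item~\ref{int5}, only Subcase~A is actually needed; the places that superficially look like Subcase~B involve a non-slot token in its \emph{destination} gadget, and are really covered by item~\ref{int1}.)
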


\subsubsection{Moves to get item tokens to their destinations}
\label{sec:Back moves for NO instance}

Recall that we assume that the token swapping instance that was constructed from an instance of Star STS 
has a solution with at most $K$ swaps. 
By Lemma~\ref{lemma:contrary-move-bound}, the solution has at most  
$|A|/2 + n$ contrary moves.  
Previously, we found a set $\mathcal M$ of $|A|/2$ contrary moves in the solution  that  are required to  get the non-item tokens to their destinations.  This leaves at most $n$ contrary moves to complete the job of getting the item tokens to their destinations.  In this section we will show that if the big/padding segments are very long, then 
the movement of tokens is constrained enough that we can prove the structural properties of 
Lemma~\ref{lem:free-item-token}.

Recall from the construction that the length of a big segment is $k=(mn)^c$ for a large constant $c$. Set $c=25$. 
We will assume $n,m > 5$. %
The following observation will be useful. 

\begin{observation}\label{obs:contrary}
The total number of contrary moves in the token swapping solution is less than $k^{1+1/c} =kmn$, and the number of contrary moves not in $\mathcal{M}$ is less than $k^{1/c} = mn$.
\end{observation}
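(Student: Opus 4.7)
The plan is to directly combine the bound from Lemma~\ref{lemma:contrary-move-bound} with a count of $|A|$ based on the construction. Lemma~\ref{lemma:contrary-move-bound} already gives that the total number of contrary moves is at most $|A|/2 + n$, and Section~\ref{sec:Back moves for any instance} established that $|\mathcal{M}| = |A|/2$ with every move in $\mathcal{M}$ being a contrary move. So the number of contrary moves outside $\mathcal{M}$ is at most $n$, which is certainly less than $mn = k^{1/c}$. That handles the second bound immediately; no further work is required.

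For the first bound, I would count $|A|$ explicitly. The non-item tokens come from: $2n$ big segments each of length $k$ contributing $2nk$ tokens, plus $2nm$ padding segments each of length $k' = k/n^8$ contributing $2nm \cdot k/n^8 = 2mk/n^7$ tokens. Therefore
\[
\frac{|A|}{2} + n \;=\; nk + \frac{mk}{n^7} + n.
\]
I want to show this is strictly less than $kmn$. Since $m \ge 5$, we have $kmn - nk = nk(m-1) \ge 4nk$, while $mk/n^7 + n \le mk + n \le 2kmn$-fraction... more carefully, $mk/n^7 \le mk \ll 4nk$ (using $n \ge 5$) and $n \le kmn$, so the strict inequality holds with room to spare. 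Finally, observe $k^{1+1/c} = k \cdot k^{1/c} = k \cdot (mn^c)^{1/c}\cdot$ — wait, since $k = (mn)^c$ we have $k^{1/c} = mn$, so $k^{1+1/c} = kmn$, matching the target.

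The only step with any subtlety is confirming that the contrary moves in $\mathcal{M}$ are indeed a subset of \emph{the} contrary moves in the solution being counted (so that the set difference makes sense), but this is clear from the construction of $\mathcal{M}$ in Section~\ref{sec:Back moves for any instance}, where each element of $\mathcal{M}$ was selected from the actual contrary moves of the given solution. I expect no real obstacle here; the observation is essentially a bookkeeping consequence of the parameter choice $c = 25$, which gives enormous slack in both inequalities.
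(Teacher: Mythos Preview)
Your proposal is correct and follows essentially the same approach as the paper: invoke Lemma~\ref{lemma:contrary-move-bound}, plug in $|A| = 2nk + 2nmk'$ with $k' = k/n^8$ to get $|A|/2 + n = nk + mk/n^7 + n < kmn$, and for the second bound use $|\mathcal{M}| = |A|/2$ to conclude at most $n < mn$ contrary moves lie outside $\mathcal{M}$. The paper's write-up is terser (it just notes $n,m > 1$ suffices for both inequalities) but the argument is identical.
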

\begin{proof}
By Lemma~\ref{lemma:contrary-move-bound} the number of contrary moves is at most $|A|/2 + n$.
Plugging in $|A| = 2nk + 2nmk'$ and $k' = k/n^8$ this is $\le nk + nmk/n^8 + n < kmn = k^{1 + 1/c}$.  
The   number of contrary   moves not in $\mathcal M$ is at most $n$ and $n < mn = k^{1/c}$.
In both cases, we used $n,m > 1$.
\end{proof}

This section has four subsections:
\begin{enumerate}
\item In Section~\ref{sec:Segment Ordering}, we show that tokens are constrained in the sense that the segments reach the root in roughly the same order as in the scaffold solution. 
\item In Section~\ref{sec:Ordering Lemma}, we show that the tokens are constrained in the sense that each individual token cannot get too far out-of-order with its surrounding tokens.
\item In Section~\ref{sec:Item Tokens Stuck}, we use 
the above results 
to prove the key lemma 
that every slot gadget contains an item token ``near'' its nook at all times.
\item In Section~\ref{sec:Item and Segment Swap} we combine 
the above results 
to complete the proof of Lemma~\ref{lem:free-item-token}.

\end{enumerate}

\paragraph{Correspondence with the scaffold solution}\label{sec:Segment Ordering}%

In this section we will show that the movement of tokens in our solution is similar to that of the scaffold solution. The main lemma of this section is Lemma~\ref{lem:segbyseg}, which says that tokens reach the root in roughly the same order as in the scaffold solution. We start with a simple lemma that shows that not many tokens can
enter and exit one gadget (as made precise below).
We use ``gadget'' here to mean a slot gadget or the ordering gadget. Note that a token leaves the gadget it is in if and only if it reaches the root.  We will find it convenient to use both expressions when  we talk about the  movement of tokens. Note that a token can leave any gadget, i.e., reach the root, many times throughout the sequence of swaps.

\begin{lemma} \label{lem:swapNotOutsideOfGadget}
Let $S$ be a set of tokens, all  of which  are either slot or non-slot tokens.   Let $g$ be a gadget. 
Suppose that either of the following situations occur:
\begin{enumerate}
    \item At some time $\tau$, all tokens of $S$ are in gadget $g$, but  each token of $S$ was outside $g$ at some time before $\tau$ and at some time after $\tau$. 
    \item At two time points $\tau_1$ and $\tau_2$, all tokens of $S$ are in gadget $g$, but each token of $S$ 
    reached the  root at some time between $\tau_1$ and $\tau_2$.
\end{enumerate}
Then the tokens of $S$ make at least $\frac{1}{2}|S|^2$ contrary  moves, and consequently, $|S| \le k^{1/2 + 1/c}$.
\end{lemma}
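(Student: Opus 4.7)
The plan is to pigeonhole the depths (distances from the root) of the $|S|$ tokens at one carefully chosen time, and to show that in each of the four cases (slot vs.\ non-slot tokens, crossed with slot gadget vs.\ ordering gadget), one specific half of each token's trajectory runs in the contrary direction throughout a stretch of length at least its depth. This will give $\sum_i d_i$ contrary moves, and distinctness of positions gives $\sum_i d_i \ge \tfrac{|S|^2}{2}$.

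For Case~1, I would first observe that at time $\tau$ the tokens of $S$ occupy $|S|$ distinct vertices of $g$; let $d_1 \le d_2 \le \cdots \le d_{|S|}$ be their depths. In the ordering gadget every depth is attained by at most one vertex; in a slot gadget the same holds except at depth $k+1$, where both the path vertex and the nook sit. Hence $d_i \ge i$ with at most one doubling, so $\sum_i d_i \ge \tfrac{|S|(|S|+1)}{2}-1$.

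Next, I split on the four type/gadget combinations to charge each token at least $d_i$ contrary moves. Consider slot tokens in a slot gadget: since each token of $S$ is outside $g$ at some time before $\tau$ and lies at depth $d_i$ at $\tau$, fix its most recent entry into $g$ before $\tau$, say at time $t_1$; between $t_1$ and $\tau$ the token is continuously inside $g$ and moves from depth $1$ to depth $d_i$, and therefore makes at least $d_i - 1$ away-from-root (right) moves inside $g$, plus one right move in the entry step itself. Each such right move is contrary for a slot token, giving $\ge d_i$ contrary moves. The other three cases are analogous, choosing the half of the trajectory whose natural direction is contrary: slot-in-ordering charges the first exit after $\tau$ (right = towards-root in the ordering gadget); non-slot-in-slot charges the first exit after $\tau$ (left = towards-root); non-slot-in-ordering charges the most recent entry before $\tau$ (left = away-from-root in the ordering gadget). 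Since the moves charged to distinct tokens are themselves distinct, summing yields at least $\sum_i d_i \ge \tfrac{|S|^2}{2}$ once $|S| \ge 2$ (the $-1$ from the nook doubling is absorbed; the cases $|S|\le 1$ are trivial).

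Case~2 is identical, but replaces ``outside $g$ before/after $\tau$'' with the two halves of the trajectory on each side of the root-visit between $\tau_1$ and $\tau_2$: for slot-in-slot and non-slot-in-ordering I would use the depths at $\tau_2$ (the re-entry runs in the contrary direction), and for slot-in-ordering and non-slot-in-slot I would use the depths at $\tau_1$ (the exit runs in the contrary direction). The consequence $|S|\le k^{1/2+1/c}$ then drops out of Observation~\ref{obs:contrary}, which caps the total number of contrary moves at $k^{1+1/c}$: combined with $\tfrac{|S|^2}{2}\le k^{1+1/c}$, we get $|S|\le \sqrt{2}\,k^{(1+1/c)/2}\le k^{1/2+1/c}$, where the last inequality holds for $k$ sufficiently large (guaranteed by $c=25$ and $n,m > 5$). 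The main care-point will be keeping the four cases straight and checking that the single doubled depth $k+1$ in a slot gadget costs only one unit in the depth sum; the rest is accounting.
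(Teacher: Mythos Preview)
Your proposal is correct and follows essentially the same approach as the paper's proof. The paper's argument is a one-paragraph version of yours: it simply observes that for slot tokens one of the two directions (into or out of $g$) consists entirely of right/contrary moves, and symmetrically for non-slot tokens, so that the tokens of $S$ collectively make at least $1+2+\cdots+|S|>\tfrac{1}{2}|S|^2$ contrary moves; your explicit four-way case split and your handling of the single possible depth-repeat at the nook make precise what the paper leaves implicit, but the underlying idea is identical.
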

\begin{proof}
Depending on whether %
the tokens of $S$
are slot or non-slot tokens, either moving the tokens out of, or %
into, the gadget $g$ will count as contrary moves. In either case the tokens of $S$ make at least 
$1+2+ \cdots + |S| > \frac{1}{2} |S|^2$ 
contrary moves.
If $|S| > k^{1/2 + 1/c}$,
then the number of contrary moves is $> \frac{1}{2}k^{1 + 2/c} > k^{1 + 1/c}$,
which contradicts Observation~\ref{obs:contrary}. 
\end{proof}

Recall that for any token $t$, its \defn{initial gadget} is the gadget that 
contains $t$'s initial position, and
its \defn{final gadget} is the gadget
that contains $t$'s final  position.
The next lemma says that if a large number of tokens have reached the root,
then  most of them have entered their  final gadgets for the  last time.

\begin{lemma}\label{lem:indest}
Let $S$ be a set of  at least $\ell=k^{1/2+2/c}$
tokens that are either all slot tokens or all non-slot tokens. If, at some point in  time, every token in $S$ has 
reached the root then every token in $S$ except for at most $\ell$ of them has entered its final gadget for the last time.
\end{lemma}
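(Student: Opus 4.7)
The plan is to prove the contrapositive by combining a pigeonhole argument on gadgets with Case~1 of Lemma~\ref{lem:swapNotOutsideOfGadget}. Suppose for contradiction that more than $\ell = k^{1/2 + 2/c}$ tokens of $S$ have reached the root by time $\tau$ but have not yet entered their final gadget for the last time; call this subset $S'$. The goal is to find a gadget $g$ that contains ``too many'' tokens of $S'$ at time $\tau$, each of which satisfies the ``outside before and outside after'' hypotheses of Lemma~\ref{lem:swapNotOutsideOfGadget}.

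First I would distribute the tokens of $S'$ over the $m+1$ gadgets at time $\tau$. At most one token sits at the root itself, so by pigeonhole some gadget $g$ contains at least $(|S'|-1)/(m+1)$ tokens of $S'$. Using $k = (mn)^c$ with $c = 25$ and $n, m > 5$ we have $k^{1/c} = mn$, and a short calculation gives
\[
\frac{|S'|-1}{m+1} > \frac{k^{1/2+2/c} - 1}{m+1}
  = k^{1/2+1/c}\cdot\frac{mn}{m+1} - \frac{1}{m+1} > k^{1/2+1/c},
\]
since $mn/(m+1) \ge m(n-1)/(m+1)$ is bounded well away from $1$ for $n,m > 5$.

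Next I would verify the two ``outside $g$'' conditions needed by Case~1 of Lemma~\ref{lem:swapNotOutsideOfGadget}. Because each token of $S$ has reached the root at some time at or before $\tau$, and the root is not contained in any gadget (gadgets are paths attached to the root), each such token was outside $g$ at some time before $\tau$. For the future side, each $t \in S'$ has not entered its final gadget for the last time, so $t$ must eventually leave whichever gadget it currently occupies: if $g$ is $t$'s final gadget then $t$'s current stay is not the last one, while if $g$ is not $t$'s final gadget then $t$ must exit $g$ in order to reach its destination. Since $S' \subseteq S$, the tokens packed into $g$ are all slot tokens or all non-slot tokens, so Lemma~\ref{lem:swapNotOutsideOfGadget} applies and bounds their number by $k^{1/2 + 1/c}$, contradicting the pigeonhole bound above.

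The main obstacle is essentially arithmetic: one must convert the $1/(m+1)$ factor lost to pigeonhole into the extra $k^{1/c}$ slack between the target bound $\ell = k^{1/2 + 2/c}$ and the bound $k^{1/2 + 1/c}$ delivered by Lemma~\ref{lem:swapNotOutsideOfGadget}, which is exactly what the identity $k^{1/c} = mn$ provides. Once that is in place, the ``outside before / outside after'' hypotheses fall out cleanly from ``$t$ has reached the root'' and ``$t$ has not entered its final gadget for the last time,'' so no further technical machinery is needed.
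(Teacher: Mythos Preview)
Your proposal is correct and follows essentially the same route as the paper: pigeonhole the bad tokens over the $m+1$ gadgets and then invoke Lemma~\ref{lem:swapNotOutsideOfGadget}. You actually streamline the paper's argument a bit: the paper separates the bad set into tokens currently outside their final gadget ($S'$) and tokens currently inside their final gadget but destined to leave again ($S''$), applying Case~1 of Lemma~\ref{lem:swapNotOutsideOfGadget} to the first and Case~2 to the second, whereas you observe that both kinds of token satisfy the ``outside before~$\tau$ / outside after~$\tau$'' hypotheses of Case~1, so a single application suffices.
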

\begin{proof}
We consider the  situation  at %
the point in time stipulated in the lemma.
Let $R$ be the tokens of $S$ that have entered their final gadgets for the last time.   Our goal is to show that $|S \setminus R| \le \ell$.  
Observe that $S  \setminus R$ is a disjoint union of two sets: $S'$, the tokens of $S$ that are not in their final gadgets, and $S''$, the tokens of $S$ that are in their final gadgets, but will subsequently  leave (and then re-enter) those gadgets. Thus our goal is to show that $|S'| + |S''| \le \ell$.
We will use Lemma~\ref{lem:swapNotOutsideOfGadget}
to bound their sizes.

Consider $S'$.  
By the pigeonhole principle, at least one of the $m+1$ gadgets, say  gadget $g$, contains a set $X$ of at least $|S'|/(m+1)$ tokens from $S'$. Since $g$ is not the  final gadget for any token in $X$, they will all eventually leave gadget $g$. 
By  assumption, every token $t \in X$ has  reached the  root before the current time point, so $t$ has been outside $g$ at a previous time.
Thus by Lemma~\ref{lem:swapNotOutsideOfGadget}(1)
we have $|S'|/(m+1) \le  k^{1/2 + 1/c}$.

Next consider $S''$. 
Again by the  pigeonhole principle, at least one of the $m+1$ gadgets, say gadget $g$, contains at least $|S''|/(m+1)$ tokens from $S''$.
These tokens are currently in $g$, will each leave $g$ at some later time (by  definition of $S''$), and will all eventually be back in $g$ (since $g$ is their final gadget).
Thus by Lemma~\ref{lem:swapNotOutsideOfGadget}(2), we have
$|S''|/(m+1) \le k^{1/2 + 1/c}$.

Now $|S'| +  |S''| \le 2(m+1)k^{1/2 + 1/c} \le k^{1/2 + 2/c}$ for $m,n > 2$ giving  the  required bound.
\end{proof}

Next, we prove a related lemma which says that, for any  slot gadget $g$,  if many tokens with initial positions in $g$ have reached the root
then  almost as many tokens with final gadget $g$ have entered $g$ for the last time.
This lemma is not needed 
until  Section~\ref{sec:Item Tokens Stuck},
but we place it here because its proof heavily depends on Lemma~\ref{lem:indest}. 

\begin{lemma}\label{lem:exchange}
Let $S$ be a set of at least $k^{1/2+3/c}$ tokens that all have 
slot gadget $g$ as their initial gadget.
Let $\tau_1$ be a point in time where no token in $S$ has reached the root,
and let $\tau_2$ be a later point in time after every token in $S$ has reached the root. Then, between $\tau_1$ and $\tau_2$, at least $|S|-k^{1/2+3/c}$ many tokens (not in $S$) whose final gadget is $g$ enter gadget $g$ for the last time.
\end{lemma}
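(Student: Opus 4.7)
The plan is to exploit the fact that slot gadget $g$ has a fixed vertex capacity, together with Lemma~\ref{lem:indest} applied in three different guises. Set $\ell := k^{1/2+2/c}$. Since every token in $S$ is a slot token with initial gadget $g$ (so its final gadget is the ordering gadget) and every token in $S$ has reached the root by $\tau_2$, Lemma~\ref{lem:indest} applied to $S$ gives that all but at most $\ell$ tokens of $S$ have already entered the ordering gadget for the last time, hence are outside $g$ at $\tau_2$. Consequently, at least $|S| - \ell$ tokens of $S$ are in $g$ at $\tau_1$ but not at $\tau_2$.

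Because the number of tokens in $g$ is constant, at least $|S| - \ell$ tokens that were not in $g$ at $\tau_1$ must be in $g$ at $\tau_2$; call this set $E$. I partition $E$ into three subsets: $E_1$, non-slot tokens with final gadget $g$; $E_2$, non-slot tokens with final gadget $\ne g$; and $E_3$, slot tokens. Every token in $E$ entered $g$ strictly between $\tau_1$ and $\tau_2$, and since the root is the only neighbor of $g$ outside $g$, every such token has reached the root by $\tau_2$. Tokens in $E_2 \cup E_3$ are currently in $g$ but $g$ is not their final gadget, so they are not in their final gadget for the last time. Applying Lemma~\ref{lem:indest} separately to the set of all non-slot tokens that have reached the root and to the set of all slot tokens that have reached the root yields $|E_2| \le \ell$ and $|E_3| \le \ell$, so $|E_1| \ge |S| - 3\ell$.

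Finally, I apply Lemma~\ref{lem:indest} once more to the non-slot tokens that have reached the root, to distinguish which tokens in $E_1$ are in their final gadget $g$ \emph{for the last time} at $\tau_2$. At most $\ell$ non-slot tokens that have reached the root fail this, so at least $|E_1| - \ell \ge |S| - 4\ell$ tokens in $E_1$ are in $g$ at $\tau_2$ and will never leave again. For each such token, its last entry into $g$ occurs strictly after $\tau_1$ (it was not in $g$ at $\tau_1$) and at or before $\tau_2$, so it ``enters $g$ for the last time'' between $\tau_1$ and $\tau_2$ as required. The assumption $m,n > 5$ gives $k^{1/c} = mn > 4$, so $4\ell = 4k^{1/2+2/c} \le k^{1/2+3/c}$, yielding the stated bound $|S| - k^{1/2+3/c}$.

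The main obstacle is the bookkeeping of the three applications of Lemma~\ref{lem:indest}: verifying, for each set being bounded, both that the tokens in question have indeed reached the root (so that the lemma is applicable) and that they satisfy the hypothesis of being outside their final gadget for the last time (so that the lemma's conclusion bounds their number by $\ell$). In particular, for $E_3$ one must note that a slot token in $g$ at $\tau_2$ but not at $\tau_1$ has entered $g$ from outside, hence has traversed the root; all other cases are similar. Once the accounting is in place, the capacity argument on $g$ does the rest.
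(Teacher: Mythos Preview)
Your proof is correct and follows the same skeleton as the paper's: apply Lemma~\ref{lem:indest} to $S$ to see that most of $S$ has left $g$ by $\tau_2$, use the fixed capacity of $g$ to produce a large set of ``replacement'' tokens (your $E$, the paper's $S'$) that entered $g$ between $\tau_1$ and $\tau_2$, and then bound the number of these that subsequently leave $g$.

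The one substantive difference is in that last step. The paper handles it in a single shot with Lemma~\ref{lem:swapNotOutsideOfGadget}(1): any token of $S'$ that leaves $g$ after $\tau_2$ was outside $g$ before $\tau_2$, inside at $\tau_2$, and outside after, so there can be at most $k^{1/2+1/c} < \ell$ such tokens; whatever remains stays in $g$ forever and therefore has final gadget $g$. You instead route everything through Lemma~\ref{lem:indest}: first strip out tokens with the wrong final gadget ($E_2$, $E_3$), then strip out tokens of $E_1$ that will re-enter $g$ later. This costs $4\ell$ rather than the paper's $2\ell$, but both fit under $k^{1/2+3/c}$, so nothing is lost. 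Your route has the mild advantage of needing only one auxiliary lemma; the paper's is a bit more direct.

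One small omission: your ``partition'' $E = E_1 \cup E_2 \cup E_3$ forgets item tokens, which are neither slot nor non-slot. There are only $m+1 < k^{1/c}$ of them, so this is absorbed by the slack, but it is worth a word.
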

\begin{proof}
Let $S'$ be the set of tokens that were not in slot gadget $g$ at time $\tau_1$, but are in slot gadget $g$ at time $\tau_2$. We will show that $(i)$ 
$S'$ is big enough for the statement to hold and 
$(ii)$ 
few tokens of $S'$ subsequently leave gadget $g$.

For Statement $(i)$, we first apply Lemma~\ref{lem:indest} to $S$. Set $S$ has size at least $\ell =k^{1/2+2/c}$ and at  time $\tau_2$ all the tokens of $S$ have reached the root, so, by the lemma, all except at most $\ell$  of them have entered their final gadgets (which are different from $g$) for the last time.  Since all tokens of $S$ were in gadget $g$ at time $\tau_1$ and at least $|S| - \ell$ of them are not in gadget $g$ at time $\tau_2$, therefore $|S'| \ge |S| - \ell$ (since the missing tokens must have been replaced).   

For Statement ($ii$), we apply Lemma~\ref{lem:swapNotOutsideOfGadget}(1) to $S'$. 
All the tokens of $S'$ are in gadget $g$ at time $\tau_2$,  all of them were outside gadget $g$ at a previous time, and therefore, the number that can leave gadget $g$ after time $\tau_2$ is at most 
$k^{1/2 + 1/c}$ which is less than $\ell$.

Putting  these together, the number of tokens that have entered gadget $g$ and will remain there until the end is at least $|S| - 2\ell$.  Since $2\ell < k^{1/2 + 3/c}$, this gives the bound we want.  
\end{proof}

Recall that in the scaffold solution, the segments reach the root in a particular order, as specified by property (P5) of the scaffold solution. Now we show that, although not all tokens in our solution obey this particular order, many tokens do. In particular, the \defn{core} of each segment, defined below, behaves like the scaffold solution. 

\begin{definition} \label{def:core} 
Let $S$ be a segment. Order the tokens of $S$
by the first time they  reach the root.
The \defn{core}, $C(S)$, is the set of all tokens in $S$ except for the first and last $k^{1/2+3/c}$ in this  ordering. 
\end{definition}

\begin{lemma}\label{lem:segbyseg}
Let $S_1$ and $S_2$ each be 
a segment (either a big segment or a padding segment) of slot tokens  such that tokens of $S_1$ reach the root before tokens of $S_2$ in the scaffold solution. Then, in our token swapping solution, all tokens of $C(S_1)$ reach the root before any token of $C(S_2)$ does so.
\end{lemma}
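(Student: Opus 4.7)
The plan is to prove the lemma by contradiction: assume there exist tokens $t_1 \in C(S_1)$ and $t_2 \in C(S_2)$ such that $t_2$ reaches the root for the first time at some time $\tau$ strictly before $t_1$ does. The core definition immediately yields two observations: at time $\tau$, at least $k^{1/2+3/c}+1$ tokens of $S_2$ have reached the root (the $k^{1/2+3/c}$ tokens preceding $t_2$ in $S_2$'s first-reach ordering, together with $t_2$ itself), and at least $k^{1/2+3/c}+1$ tokens of $S_1$ have not reached the root (namely $t_1$ together with the last $k^{1/2+3/c}$ tokens of $S_1$ in the ordering). In particular, $t_1$ is still in its initial slot gadget $g_1$ at time $\tau$, since leaving $g_1$ requires reaching the root. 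The task is then to extract enough contrary moves outside $\mathcal{M}$ from these facts to contradict the budget $k^{1/c}=mn$ of Observation~\ref{obs:contrary}.

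The proof then splits into two cases depending on whether $S_1$ and $S_2$ share the same initial slot gadget. In the \emph{same-gadget} case $g_1=g_2=:g$, property~(P5) combined with the scaffold structure forces the initial position of $S_1$ in $g$ to be closer to the root than that of $S_2$, since within a single slot gadget the scaffold moves tokens to the root in left-to-right initial order. Consequently each of the $\geq k^{1/2+3/c}+1$ tokens $t_2^{\star}\in S_2$ that reach the root by $\tau$ must effectively pass each $t_1^{\star}\in S_1$ still in $g$; since no two tokens swap twice (Claim~\ref{lem:unique}), each such passage is either a direct $t_1^{\star}$--$t_2^{\star}$ swap (contributing a contrary right move of $t_1^{\star}$) or a rightward detour of $t_1^{\star}$ (also contributing contrary right moves). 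I then invoke Corollary~\ref{obs:interface}(5) applied to $t_2^{\star}$, noting that both tokens are slot tokens in $t_2^{\star}$'s initial gadget and the other performs a contrary right move, to conclude that all but one of these contrary moves per $t_2^{\star}$ lie outside $\mathcal{M}$. Summing over the many $(t_1^{\star},t_2^{\star})$ pairs yields $\Omega(k^{1/2+3/c})$ contrary moves outside $\mathcal{M}$, far exceeding the $mn$ budget.

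In the \emph{different-gadget} case $g_1\neq g_2$, I would apply Lemma~\ref{lem:exchange} to the set $S$ of $S_2$-tokens that have reached the root by $\tau$, with $g=g_2$, obtaining at least $|S|-k^{1/2+3/c}$ non-slot tokens destined for $g_2$ that have permanently entered $g_2$ by $\tau$. The plan is to amplify $|S|$ well beyond $k^{1/2+3/c}$ by choosing a more favorable time point — for instance the latest time at which some $C(S_1)$-token has not yet reached the root rather than the earliest $\tau$ — and by combining this with Lemma~\ref{lem:indest} applied to $S_1$ to show that, absent a large exchange, the $S_1$-tokens cannot remain anchored in $g_1$ while so many of $S_2$ drift to the root. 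Once amplified, the permanently-relocated non-slot tokens contribute many contrary left moves on entry to $g_2$, which by Corollary~\ref{obs:interface}(1) lie outside $\mathcal{M}$; a symmetric count on the non-slot tokens that still must reach $g_1$ (for which $t_1$ blocks the natural flow) via Corollary~\ref{obs:interface}(3) and (6) produces the required excess of contrary moves outside $\mathcal{M}$.

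The main obstacle I foresee is exactly this amplification in the different-gadget case: the naive count $|S|\geq k^{1/2+3/c}+1$ plugged into Lemma~\ref{lem:exchange} yields only a single guaranteed relocated token, hopelessly short of the $mn$ budget. Getting up to $\Omega(k)$ such tokens will require leveraging the structural constraints established earlier in Section~\ref{sec:Back moves for NO instance}, together with a careful pigeonhole over the $m$ slot gadgets to avoid double-counting. Throughout both cases, the secondary source of care is certifying that the contrary moves produced are genuinely disjoint from those already charged to $\mathcal{M}$, which I expect to handle by systematically matching each counted move against the explicit case split of Corollary~\ref{obs:interface}.
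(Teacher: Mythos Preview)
Your different-gadget case is, by your own admission, incomplete: plugging $|S|\ge k^{1/2+3/c}+1$ into Lemma~\ref{lem:exchange} gives you essentially nothing, and the ``amplification'' you sketch (choosing a later time point, pigeonholing over slot gadgets, combining with Lemma~\ref{lem:indest}) is not an argument but a wish list. None of those moves will work as stated, because at the level of generality you are working in there is simply no mechanism forcing $\Omega(k)$ tokens of $S_2$ to reach the root before the last core token of $S_1$ does; the assumed violation only gives you $k^{1/2+3/c}$ on each side.

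The paper sidesteps the whole difficulty by shifting attention from the \emph{initial} slot gadgets (which is what forces your case split) to the \emph{final} positions in the ordering gadget. By (P5), all tokens of $S_1$ end up to the left of all tokens of $S_2$ there, regardless of which slot gadgets $S_1$ and $S_2$ came from. So take $F_2\subseteq S_2$ the first $k^{1/2+3/c}$ tokens of $S_2$ to reach the root and $L_1\subseteq S_1$ the last $k^{1/2+3/c}$. At the moment $\tau$ when all of $F_2$ have reached the root, none of $L_1$ have. Apply Lemma~\ref{lem:indest} to $F_2$: almost all of them, a set $F_2'$ of size $\ge k^{1/2+3/c}-k^{1/2+2/c}$, are already in the ordering gadget for good. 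But every token of $L_1$ must eventually land \emph{to the left} of every token of $F_2'$, and $L_1$ is still outside the ordering gadget at time $\tau$. Hence each pair in $L_1\times F_2'$ must swap inside the ordering gadget, and in each such swap the $F_2'$ token moves right, a contrary move. That is $|L_1|\cdot|F_2'|>k^{1+1/c}$ contrary moves in total, contradicting the \emph{total} contrary-move bound of Observation~\ref{obs:contrary}. No case split, and no need to argue disjointness from $\mathcal{M}$ at all.
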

\begin{proof}
Note that because $S_1$ and $S_2$ consist of slot tokens,  their final gadget is the ordering gadget. Also, by property (P5) of the scaffold solution, the final positions of the tokens in $S_1$ are to the left of tokens of $S_2$  in the ordering gadget.  

Consider the order in which  the tokens of $S_1  \cup S_2$ reach the root for the first time.   We must show that (all  tokens of) $C(S_1)$ come before (all tokens  of) $C(S_2)$ in this ordering.

Assume for the sake of contradiction 
that a token $t_2$ of $C(S_2)$ comes before a token $t_1$ of $C(S_1)$ in the ordering.   
Let $L_1  \subseteq S_1$ 
be the last $k^{1/2+3/c}$ tokens of $S_1$ 
in  the ordering (i.e., the tokens of $S_1$ after $C(S_1)$)
and let $F_2$  be the  first $k^{1/2+3/c}$ tokens of $S_2$ in the ordering (i.e.,  the tokens of $S_2$ before $C(S_2)$).
Then $F_2$ comes before $t_2$, which in turn comes before $t_1$ (which also comes before $L_1$).

Let $\tau$ be the time point just after all tokens of $F_2$ have reached the root.  At time $\tau$, none of the tokens of $L_1$ have reached the root, so none of them are in  the ordering gadget.
Apply Lemma~\ref{lem:indest} to $F_2$ and the time point  $\tau$. The lemma  gives us a set $F'_2 \subseteq F_2$  of tokens that have entered the ordering gadget for the last time at time $\tau$, where $|F'_2| \ge k^{1/2+3/c} - k^{1/2-2/c}$.

However, as already noted, 
the final positions of the tokens in $S_1$ are to the left of tokens of $S_2$.  In particular,  
the final positions  of the  tokens in  $L_1$ are to the left of the tokens of $F'_2$.
Thus, $L_1$ (which is outside the ordering gadget at time $\tau$) and $F'_2$ must pairwise swap in the ordering gadget. During these swaps, the tokens in $F'_2$ are performing contrary (right) moves. This amounts to a total of $|F'_2||L_1|\geq (k^{1/2+3/c}-k^{1/2-2/c})(k^{1/2+3/c})>k^{1+1/c}$ contrary moves, a contradiction  to Observation~\ref{obs:contrary}.
\end{proof}

From the definition of core (and construction of gadgets) we get the following bounds on the size of each core: 

\begin{lemma}\label{lem:noncore}
If $S$ is a segment, then 
$|C(S)|>k^{1-9/c}$. Also, for any slot gadget $i$, the total number of tokens initially in slot gadget $i$ that are \emph{not} in the core of any segment is less than $k^{1/2+5/c}$.
\end{lemma}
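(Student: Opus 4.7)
The plan is a direct arithmetic verification from the definition of the core (Definition~\ref{def:core}) together with the parameter choices $k=(mn)^c$, $k'=k/n^8$, $c=25$, and $m,n>5$. By definition, for any segment $S$ we have $|C(S)|=|S|-2k^{1/2+3/c}$, so the first claim $|C(S)|>k^{1-9/c}$ splits into two cases depending on whether $S$ is a big segment or a padding segment.

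For a big segment, $|S|=k$, and the inequality reduces to $k-2k^{1/2+3/c}>k^{1-9/c}$. Since $c=25$ gives $1/2+3/c=31/50<32/50=1-9/c<1$, the subtracted term is negligible compared to $k$, and the inequality is immediate (in fact $k-2k^{1/2+3/c}>k/2$). For a padding segment, $|S|=k/n^8$, and I would substitute $k=(mn)^c$ to compare the dominant powers: the left side is essentially $m^{c}n^{c-8}$, while $k^{1-9/c}=(mn)^{c-9}$, and since the exponents of both $m$ and $n$ on the left exceed those on the right, this term dominates. The correction $2k^{1/2+3/c}$ is absorbed because $c/2+3=15.5<17=c-8$, so $k^{1/2+3/c}$ is dwarfed by $k/n^8$ under the standing assumption $m,n>5$.

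For the second claim, I would count the segments initially in slot gadget $i$: there are at most $n$ big segments (those $y_j$ with $s_j=i$) and exactly $n$ padding segments ($q_{i,1},\ldots,q_{i,n}$), so at most $2n$ segments in total. Each such segment contributes at most $2k^{1/2+3/c}$ non-core tokens by Definition~\ref{def:core}, so the total number of non-core tokens initially in slot gadget $i$ is at most $4nk^{1/2+3/c}$. The required inequality $4nk^{1/2+3/c}<k^{1/2+5/c}$ simplifies to $4n<k^{2/c}=(mn)^2$, i.e.\ $4<m^2 n$, which is immediate from $m,n>5$.

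The proof is entirely mechanical once the parameters are substituted, so the main obstacle is just keeping track of exponents. The only real subtlety is that the separations $1/2+3/c<1-9/c$ and $1/2+3/c<1/2+5/c$ must both hold with enough slack for the negligible terms to be absorbed, which is precisely why $c=25$ was chosen in the construction rather than a smaller constant.
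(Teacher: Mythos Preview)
Your proposal is correct and follows essentially the same approach as the paper. The paper handles the first claim in one step by using the minimum segment size $k/n^8$ for all segments (rather than splitting into big versus padding segments as you do), but the arithmetic is the same; for the second claim your count of $4nk^{1/2+3/c}$ is actually more careful than the paper's, which writes $2n\cdot k^{1/2+3/c}$ and appears to drop a factor of $2$, though the slack in $k^{2/c}=(mn)^2$ easily absorbs either bound.
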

\begin{proof}
Each segment is of size at least $k/n^8$, and exactly $2k^{1/2+3/c}$ tokens in each segment are not in the core. Thus, $|C(S)|\geq k/n^8-2k^{1/2+3/c}>k^{1-9/c}$ since $c>22$.

The second statement follows from the fact that any slot gadget contains at most $2n$ segments ($n$ padding segments and at most $n$ big segments), and for each segment, exactly $2k^{1/2+3/c}$ 
tokens
are not in the core. Thus, there are at most  $2n \cdot k^{1/2+3/c}<k^{1/2+5/c}$ many tokens initially in slot gadget $i$ that are not in the core of any segment.
\end{proof}

\paragraph{Individual tokens stay roughly in order}\label{sec:Ordering Lemma}$ $\\
\ 

The main  lemma  of this section (Lemma~\ref{lem:OutOfOrder}) shows that no individual token that is currently in a slot gadget can get too far out of order with respect to its surrounding tokens (see below for a precise definition of `out of order').
More precisely, we show that in any solution to the token swapping instance that has at most $K$ swaps, no intermediate token configuration can contain a set of $\ell=k^{1/2+2/c}$ tokens that are all out of order with a token $t$ in a slot gadget, as otherwise there would be too many contrary moves. We use the fact (proved in Lemma~\ref{tooManyContraryMoves}) that by Observation~\ref{obs:contrary}, the token swapping sequence can neither use $\frac{\ell}{2^4(m+1)}$ %
contrary moves that are disjoint from moves accounted in $\mathcal{M}$ (c.f.~Corollary~\ref{obs:interface}), nor it can use $\frac{\ell^2}{2^53^2(m+1)^2}$ %
contrary moves in total (irrespective of whether they are disjoint from $\mathcal{M}$ or not) -- as in both cases these would be more contrary moves than the token swapping instance can afford. %

Recall that for a slot token with initial  position in slot gadget $i$ or a non-slot token with target position in slot gadget $i$, its \defn{main path} is the path consisting of the slot gadget $i$ (not including the nook edge) and the ordering gadget, connected via the root.

\begin{definition}
Let $t$ and  $s$ be two slot tokens that share the same initial gadget or two non-slot tokens that share the same final gadget.
Observe that $t$ and $s$ have the same main path $p$. We say that $t$ is \defn{out of order} with $s$ if $t$ and $s$ are both on their main path $p$, but $s$ is on the opposite side of $t$ than it was initially. Note that this relation is symmetric, that is, if $t$ is out of order with $s$, then also $s$ is out of order with $t$. Moreover, we say that $t$ is \defn{out of order} with a set $S$ of tokens if $t$ is out of order with every element $s \in S$.
\end{definition}

In what follows, we first show that transforming certain token configurations into others uses too many contrary moves. To this end, we start by proving two auxiliary lemmas %
and by identifying Transformations 1--4. These will subsequently be used in the proof of the main Lemma~\ref{lem:OutOfOrder} of this section. 

Consider the four transformations in Figure~\ref{fig:transformations}. Each transformation shows a pair of configurations and assumes that a token swapping sequence transforms the first configuration into the second one. The type of token $t$ (slot or non-slot), as well as its position relative to a set $S$ of $\lambda$ tokens of the same type are marked in each configuration.

\begin{figure}[tbh]
	\centering
	\includegraphics[width=\textwidth]{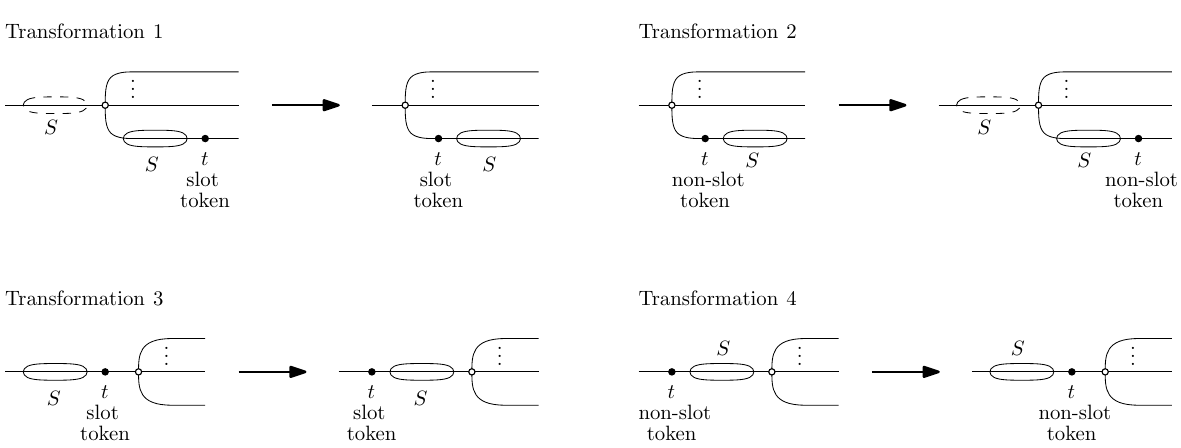}
	\caption{Four transformations. The empty circle denotes the root of the tree. In each configuration, the solid and dashed ovals mark the possible location of the set $S$ of $\lambda$  tokens. Token $t$ can be anywhere inside the marked gadget or at the root as long as it respects its position relative to the set~$S$. The marked gadget is assumed to be $t$'s initial or target gadget, depending on the context. The tokens from $S$ have the same initial and target gadget as $t$. %
	}
	\label{fig:transformations}
\end{figure}

\begin{lemma} \label{lem:transformations}
If $t$ exchanges its order with each of the $\lambda$ tokens of the set $S$ inside the same gadget as the one in which $t$ starts and ends a transformation, then realizing any of the four transformations from Figure~\ref{fig:transformations} takes either at least $ \lambda^2/(2\cdot3^2)$ contrary moves or at least $\lambda/3-n(m+1)$ contrary moves that are disjoint from contrary moves accounted in $\mathcal{M}$.  
\end{lemma}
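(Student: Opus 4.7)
The plan is a case analysis on how the $\lambda$ order exchanges between $t$ and the tokens of $S$ are realized inside the gadget $g$. For each $s \in S$, since the relative position of $t$ and $s$ along their common main path changes inside $g$, the exchange must fall into one of three types: (A) a direct swap between $t$ and $s$ on the main path in which $t$ makes the contrary move; (B) a direct swap between $t$ and $s$ on the main path in which $s$ makes the contrary move; or (C) a detour in which one of $t$ or $s$ temporarily leaves the main path, either into a nook or (if it exits and re-enters $g$) through the root. Let $\lambda_A, \lambda_B, \lambda_C$ denote the corresponding counts, so that $\lambda_A+\lambda_B+\lambda_C\geq\lambda$ and at least one of them is at least $\lambda/3$.

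Suppose first that $\lambda_B \geq \lambda/3$ or $\lambda_C \geq \lambda/3$; then I aim for the linear alternative. Each type-C exchange contains a contrary move into or out of a nook, which by Corollary~\ref{obs:interface}(3) is never in $\mathcal{M}$, or a contrary re-entry move into $g$, which for the four specific geometries in Figure~\ref{fig:transformations} is covered by Corollary~\ref{obs:interface}(5) when $t$ is a slot token and by Corollary~\ref{obs:interface}(1) or~(6) when $t$ is a non-slot token. Each type-B exchange is a direct contrary move by a token of $S$ on the main path inside $g$ and is handled by the same clauses. Combining, at most $n(m+1)$ of the $\lambda_B + \lambda_C$ contrary moves thus counted can lie in $\mathcal{M}$ (the slack coming only from clause~(6)), so at least $\lambda/3 - n(m+1)$ lie outside $\mathcal{M}$, as required.

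Suppose instead $\lambda_A \geq \lambda/3$; then I aim for the quadratic alternative. In this case $t$ performs at least $\lambda/3$ direct contrary swaps with distinct tokens $\tau_1,\ldots,\tau_{\lambda_A}$ (distinctness by Claim~\ref{lem:unique}), each shifting $t$ one step in its contrary direction along the main path of $g$. Since $t$ must end the transformation back in $g$ and cannot re-swap with any $\tau_i$, each $\tau_i$ must eventually overtake $t$ through some other route, which propagates a cascade of swaps among the $\tau_i$ and surrounding tokens. A bubble-sort-style inversion count shows that resolving these $\lambda_A$ inversions requires on the order of $\binom{\lambda_A}{2}$ contrary moves to occur in $g$, giving at least $\lambda^2/18$ contrary moves in total and hence the first alternative.

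The main obstacle will be the quadratic step: geometry by geometry for the four transformations in Figure~\ref{fig:transformations}, one must check that the cascade of inversions really is resolved inside the relevant gadget (so that its contrary moves are attributable to the transformation), and that its size genuinely grows as $\Omega(\lambda^2)$ rather than being absorbed by a few well-placed detours through nooks or through the root. The linear case, by contrast, reduces largely to bookkeeping against the clauses of Corollary~\ref{obs:interface}, with the $n(m+1)$ slack inherited directly from clause~(6).
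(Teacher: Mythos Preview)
Your three-way split into cases (A), (B), (C) does not line up with the actual geometry of the four transformations, and this causes the quadratic alternative to land in the wrong place.

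First, case (A) is vacuous. In each of the four pictures, $t$ and all tokens of $S$ have the same type, and the transformation moves $t$ past $S$ in $t$'s \emph{non-contrary} direction (e.g.\ in Transformation~1, slot token $t$ goes from right of $S$ to left of $S$). Since $t$ and $s$ swap at most once (Claim~\ref{lem:unique}), a direct swap realizing the order exchange necessarily has $t$ moving in its non-contrary direction and $s$ in its contrary direction. So every direct swap lands in your case (B), and your quadratic argument never fires.

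Second, your linear argument for (B) and (C) breaks in exactly the places where the paper needs the quadratic bound. In (C), you assert that ``each type-C exchange contains a contrary move into or out of a nook''; but if it is $t$ that enters the nook once while all $\lambda_C$ tokens of $S$ stream past, those $\lambda_C$ exchanges share a \emph{single} nook move of $t$, not $\lambda_C$ distinct ones. This is precisely the paper's Case~c of Transformations~1 and~2, where instead one observes that the block of $\lambda/3$ tokens of $S$ shifts bodily by $\lambda/3$ positions in the contrary direction, yielding $\tfrac{1}{2}(\lambda/3)^2$ contrary moves. In (B) for Transformation~4, if the tokens $s\in S'$ have already visited the root, their contrary left moves in the ordering gadget may lie in $M_4\subseteq\mathcal{M}$; none of the clauses of Corollary~\ref{obs:interface} that you cite excludes them (clause~(\ref{int3}) only covers tokens that have \emph{not} yet reached the root). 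The paper's Case~b of Transformation~4 handles this with the quadratic count $1+2+\cdots+\lambda/3$ on the contrary moves these tokens made getting back into the ordering gadget after visiting the root.

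In short, the correct three-way split is not ``who makes the contrary move in a direct swap'' but rather, within each transformation, ``direct swap / $S$-tokens detour / $t$ detours (or $S$-tokens previously left the gadget)'', with the third option giving the quadratic bound via a block-shift count rather than an inversion argument.
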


\begin{proof}
All four transformations are argued by contradiction. Typically, the token $t$ will be able to exchange with tokens of $S$ in the marked gadget $i$ in up to 3 different ways, and so there will always be a subset $S'$ of at least $\lambda/3$ tokens from $S$ that swapped with $t$ in the same way.

{\bf  Transformation 1.} There are three possibilities: either the tokens of $S'$ swap directly with the token $t$ in the slot gadget $i$ (Case a), or each of the $S'$ tokens at some point enters the nook and the token $t$ passes it by moving in slot gadget $i$ %
(Case b), or the token $t$ enters the nook and the $S'$ tokens pass it by moving in the slot gadget %
(Case c).

In Case a, $t$ is a slot token in its initial slot gadget and each $s\in S'$ makes a contrary (right) move when swapping with $t$. By Corollary~\ref{obs:interface}(\ref{int2}), these are %
at least $\lambda/3$ contrary moves that were not accounted in $\mathcal{M}$.

In Case b, moving each slot token $s\in S'$ into the nook is a contrary (right) move. By Corollary~\ref{obs:interface}(\ref{int4}), these are at least $\lambda/3$ %
contrary moves that were not accounted in $\mathcal{M}$.
 
For Case c, note that all $\geq \lambda/3$ %
tokens of $S'$ are to the left of $t$ just before $t$ enters the nook for the first time, and all these tokens are to the right of $t$ after $t$ leaves the nook for the last time. Hence the block of at least $ \lambda/3$ %
slot tokens of $S'$ moved by at least $ \lambda/3$ %
vertices to the right, which takes at least $\frac{1}{2}\cdot (\frac{\lambda}{3})^2$ %
contrary (right) moves.

{\bf  Transformation 2.} The same three possibilities as for Transformation 1 need to be discussed. In Case a, when each non-slot token $s\in S'$ swaps directly with $t$, it makes a contrary (left) move in its destination gadget. These are at least $ \lambda/3$ %
contrary moves that by Corollary~\ref{obs:interface}(\ref{int1}) were not accounted in $\mathcal{M}$. In Case b, every time a non-slot token $s \in S'$ moves out of the nook, it makes a contrary (left) move. These are at least $ \lambda/3$ %
contrary moves that by Corollary~\ref{obs:interface}(\ref{int4}) were not accounted in $\mathcal{M}$. In Case c, the at least $\lambda/3$ %
non-slot tokens of $S'$ shift by at least $ \lambda/3$ %
vertices to the left, which takes at least $ \frac{1}{2}\cdot(\frac{\lambda}{3})^2$ %
contrary (left) moves.

{\bf Transformation 3.} The only possibility here is that each token $s \in S$ directly swaps with the slot token $t$ in the ordering gadget, meaning that the slot token $s$ makes a contrary (right) move. These are $\lambda$ %
contrary moves that by Corollary~\ref{obs:interface}(\ref{int2}) were not accounted in $\mathcal{M}$.

{\bf Transformation 4.} The non-slot token $t$ must again swap directly with each $s \in S$ in the ordering gadget but we distinguish two cases depending on whether the tokens of $S'$ have never left the ordering gadget before swapping with $t$ (Case a) or whether they did (Case b). Note that `never left the ordering gadget' refers to the entire token swapping sequence, possibly even before Transformation 4 started.
In Case a, each non-leaving non-slot token $s$ of $S'$ makes a contrary (left) move when swapping with $t$. 
These are at least $ \lambda/3$ contrary moves, out of which by Corollary~\ref{obs:interface}(\ref{int3}) at most $n(m+1)$ have been accounted in $\mathcal{M}$. Hence, there are at least $\lambda/3 - n(m+1)$ contrary moves disjoint from $\mathcal{M}$.

In Case b, each $s \in S'$ had visited the root before it swapped with $t$ and eventually all $\geq \lambda/3$ %
tokens of $S'$ lie to the left of $t$ in the ordering gadget.
Hence the tokens in $S'$ must have altogether made at least $1+2+\dots+\lambda/3 \geq \frac{1}{2}\cdot(\frac{\lambda}{3})^2$ %
contrary (left) moves (some of them possibly before the transformation started).
\end{proof}

The following lemma lists bounds that imply too many contrary moves in the token swapping instance.

\begin{lemma} \label{tooManyContraryMoves}$ $
\begin{enumerate}
    \item The total number of contrary moves in the token swapping solution is less than $\frac{\ell^2}{2^53^2(m+1)^2}$.\label{tooManyItem1}
    \item If $\lambda\geq\frac{\ell}{4(m+1)}$, then realizing any of the four transformations from Figure~\ref{fig:transformations} as described in Lemma~\ref{lem:transformations} requires too many contrary moves (i.e., either too many contrary moves in total, or too many contrary moves that are disjoint from contrary moves accounted in~$\mathcal{M}$).\label{tooManyItem2}
\end{enumerate}
\end{lemma}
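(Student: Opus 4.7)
The plan is essentially arithmetic: both parts reduce to verifying that the parameter choices $c = 25$, $k = (mn)^c$, $\ell = k^{1/2 + 2/c}$, together with the standing assumption $m, n > 5$, give enough slack to absorb all the constants floating around. The only nontrivial tool needed is Observation~\ref{obs:contrary}, which caps the total number of contrary moves at $k^{1+1/c}$ and the number of contrary moves disjoint from $\mathcal{M}$ at $k^{1/c}$.

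For Part~1, I would start from the bound $k^{1+1/c}$ on total contrary moves and aim to show $k^{1+1/c} < \ell^2 / (2^5 \cdot 3^2 (m+1)^2)$. Expanding $\ell^2 = k^{1+4/c}$, this reduces to $288(m+1)^2 < k^{3/c}$. With $c = 25$ and $k = (mn)^{25}$, the right-hand side is $(mn)^3$, so the statement becomes $288(m+1)^2 < (mn)^3$, which is a one-line check using $m, n \geq 6$ (the left side is at most cubic in $m$ but with constant $288$, while the right side is cubic in $m$ scaled by $n^3 \geq 216$).

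For Part~2, I would apply Lemma~\ref{lem:transformations} with the assumed $\lambda \geq \ell / (4(m+1))$ and handle its two possible conclusions separately. If the transformation needs at least $\lambda^2/(2\cdot 3^2)$ contrary moves, then substituting the lower bound on $\lambda$ gives at least $\ell^2 / (2^5 \cdot 3^2 (m+1)^2)$ contrary moves, which is exactly the bound from Part~1 and therefore already exceeds the total available supply---a contradiction. If instead the transformation needs at least $\lambda/3 - n(m+1)$ contrary moves disjoint from $\mathcal{M}$, I would observe that $\ell = (mn)^{c/2 + 2} = (mn)^{14.5}$ dwarfs the polynomial expressions $n(m+1)$ and $k^{1/c} = mn$, so $\ell/(12(m+1)) - n(m+1)$ easily exceeds the cap $k^{1/c} = mn$ from Observation~\ref{obs:contrary}, which is again a contradiction.

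The hard part is nothing conceptual---it is just bookkeeping to make sure the constants in $\ell$, $k$, $c$, and the many $(m+1)$ factors really do line up so that (i) Observation~\ref{obs:contrary} is strictly weaker than the Part~1 target, and (ii) both branches of Lemma~\ref{lem:transformations} are ruled out under the $\lambda \geq \ell/(4(m+1))$ hypothesis. The choice $c = 25$ was clearly engineered so that Part~1 has a full factor of $(mn)^3$ to spare, which in turn is what pays for the additive $n(m+1)$ loss in the disjoint-from-$\mathcal{M}$ branch of Part~2. Once one trusts the parameters, each of the two parts is a short direct substitution.
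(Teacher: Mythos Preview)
Your proposal is correct and follows essentially the same approach as the paper's proof: both reduce Part~1 to the inequality $288(m+1)^2 < (mn)^3$ via Observation~\ref{obs:contrary}, and both handle Part~2 by invoking Lemma~\ref{lem:transformations} and dispatching its two branches separately (the $\lambda^2/18$ branch feeds back into Part~1, and the $\lambda/3 - n(m+1)$ branch is compared against the $k^{1/c}$ cap). The paper is marginally more explicit in the second branch, inserting the intermediate inequality $\lambda/3 - n(m+1) \ge \lambda/4$ via $\lambda/12 \ge n(m+1)$ before comparing $\lambda/4$ to $k^{1/c}$, but this is the same computation you sketch.
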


\begin{proof}$ $
\begin{enumerate}
    \item Recall that $\ell = k^{1/2+2/c}$, we assume that $m,n>5$ and from the construction of the token swapping instance, $k=(mn)^c$. Hence, $\frac{\ell^2}{2^53^2(m+1)^2} = k^{1+1/c}\cdot \frac{m^3n^3}{2^53^2(m+1)^2}$ which exceeds $k^{1+1/c}$ contrary moves given in Observation~\ref{obs:contrary} whenever $m,n\geq 5$.
    
    \item By Lemma~\ref{lem:transformations}, the four transformations in Figure~\ref{fig:transformations} result in either at least $ \lambda^2/(2\cdot3^2)$ contrary moves or at least $\lambda/3-n(m+1)$ contrary moves that are disjoint from contrary moves accounted in $\mathcal{M}$. We show that either case exceeds the number of available contrary moves. %
    
    By assumption on $\lambda$, we have $\lambda^2/(2\cdot3^2) \geq \frac{\ell^2}{2^53^2(m+1)^2}$ which by the first statement of the current lemma are too many contrary moves.
    
    For the second case, we show that $\lambda/3-n(m+1) \geq \lambda/4 \geq k^{1/c}$. By Observation~\ref{obs:contrary}, this will imply that there are too many contrary moves disjoint from $\mathcal{M}$.
    To prove the first inequality, it suffices to show that $\lambda/12 \geq n(m+1)$. Recall that $\ell=k^{1/2+2/c}$, by construction of the token swapping instance $k=(mn)^c$ and we have set $c=25$. By assumption, $\lambda \geq \frac{\ell}{4(m+1)}$ which implies $\lambda \geq \frac{(mn)^{c/2}(mn)^2}{4(m+1)}$. %
    We obtain that $\lambda/12 \geq \frac{(mn)^{c/2}(mn)^2}{48(m+1)} \geq n(m+1)$ holds whenever $m,n \geq 2$. %
    Finally, $\lambda/4 \geq \frac{\ell}{16(m+1)}= k^{1/c} \cdot \frac{(mn)^{1+c/2}}{16(m+1)}$ %
    and this exceeds $k^{1/c}$ whenever $m,n\geq2$.
\end{enumerate}
\end{proof}

We can now prove the main lemma:

\begin{lemma} \label{lem:OutOfOrder}
At all times, for every non-item token $t$ that is currently in a slot gadget or at the root, there does not exist a set $S$ of $\ell=k^{1/2+2/c}$ tokens that are out of order with $t$.
\end{lemma}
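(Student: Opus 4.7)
The plan is to argue by contradiction. Suppose that at some time $\tau$ there is a non-item token $t$ in a slot gadget or at the root together with a set $S$ of $\ell = k^{1/2+2/c}$ tokens out of order with $t$. By the definition of ``out of order,'' $t$ and every token of $S$ lie on a common main path, so all tokens of $S$ have the same type as $t$ (slot or non-slot) and share the relevant slot gadget $i$ (the initial gadget of $t$ if $t$ is a slot token, or the final gadget of $t$ if $t$ is a non-slot token). Because $t$ currently sits in slot gadget $i$ or at the root, the configuration at time $\tau$, viewed against the initial configuration, matches the structural setup of one of the transformations in Figure~\ref{fig:transformations}: Transformation~1 if $t$ is a slot token, or Transformation~2 if $t$ is a non-slot token.

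The strategy is to pigeonhole $S$ down to a subset $S'$ of size $\lambda \ge \ell/(4(m+1))$ that witnesses one such transformation uniformly. I would first split $S$ by the initial side of $t$ on the main path, keeping the larger half of size at least $\ell/2$. Within this half, each $s$ has had its relative position with $t$ along the main path flipped, and this flip can be attributed to one of three mechanisms---a direct swap of $s$ and $t$ inside slot gadget $i$, an excursion of $s$ through the nook, or an excursion of $t$ through the nook---which are exactly the sub-cases (a), (b), (c) appearing in the proof of Lemma~\ref{lem:transformations}. A further three-way pigeonhole produces a uniform subset of size at least $\ell/6$, which exceeds $\ell/(4(m+1))$ for any $m \ge 1$. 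Applying Lemma~\ref{tooManyContraryMoves}(\ref{tooManyItem2}) to this subset then forces the solution to use either more than $\lambda^2/18$ contrary moves in total or more than $\lambda/3 - n(m+1)$ contrary moves disjoint from $\mathcal M$, contradicting, respectively, the $k^{1+1/c}$ bound on total contrary moves and the $k^{1/c}$ bound on contrary moves outside $\mathcal M$ established by Observation~\ref{obs:contrary} together with the accounting in Section~\ref{sec:Back moves for any instance}.

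The main obstacle I anticipate is the clean attribution of each $s \in S$ to one of the three mechanisms. A single $s$ may flip order with $t$ several times through differing mechanisms (for example, a direct swap followed by a nook-based reversal and then another direct swap), and Case~(c) in particular requires the entire subset $S'$ to sit consecutively on one side of $t$ at the moment $t$ first enters the nook and on the opposite side after its last exit. Handling this carefully---by anchoring each flip to a specific witnessing contrary move rather than to an abstract mechanism, and, in the non-slot case, restricting the transformation to the time interval during which $t$ resides in slot gadget $i$ or at the root---is what makes the counting in Lemma~\ref{lem:transformations} go through without double-counting and without violating the structural hypotheses of the chosen transformation.
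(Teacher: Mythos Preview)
Your plan covers only half the case analysis and misses the mechanism that drives the other half. You assert that, since $t$ sits in slot gadget $i$ (or at the root), the pair $(\text{initial config},\text{config at }\tau)$ always instantiates Transformation~1 or~2. That is false when the out-of-order tokens lie on the \emph{root side} of $t$. Concretely, suppose $t$ is a slot token and the larger half $S'$ is currently to the left of $t$; then $S'$ was initially to the right of $t$ in slot gadget $i$, but now the tokens of $S'$ may sit in the ordering gadget (or even in a different slot gadget). Neither endpoint of Transformation~1 allows that. The paper handles this situation (its Cases~3 and~4) by passing to an \emph{intermediate} configuration---the first time after $\tau$ that $t$ reaches the root (or, symmetrically for non-slot $t$, the last time before $\tau$)---and then pigeonholing $S'$ by which gadget it occupies at that moment. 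Only after this extra step do Transformations~3 and~4 (which live in the ordering gadget) come into play, together with a separate quadratic count for tokens that have wandered into a foreign slot gadget. Your three-mechanism attribution never produces these transformations, so the contradiction simply does not fire in Cases~3 and~4.

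Even in the cases you do target (your Transformation~1/2 scenario, the paper's Cases~1 and~2), the three ``mechanisms'' are not exhaustive. A token $s\in S'$ can flip sides with $t$ without a direct swap and without either token visiting the nook: it suffices that $s$ leaves slot gadget $i$ through the root and later re-enters it on the far side of $t$. Lemma~\ref{lem:transformations} is explicitly conditioned on the exchange happening \emph{inside} the gadget, so you cannot invoke Lemma~\ref{tooManyContraryMoves}(\ref{tooManyItem2}) until you have disposed of this ``outside'' possibility. The paper does so with Lemma~\ref{lem:swapNotOutsideOfGadget}: if $\ge \ell/4$ of the $S'$ tokens exchange with $t$ outside the gadget, they must all exit and re-enter, which already costs $\Omega(\ell^2)$ contrary moves. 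Your outline needs both this extra branch in Cases~1--2 and the entirely separate intermediate-configuration argument for Cases~3--4 before Lemma~\ref{tooManyContraryMoves} can be applied.
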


\begin{proof}
Suppose for a contradiction that $t$ is out of order with a set $S$ of $\ell$ tokens of the same type (i.e., all slot or all non-slot tokens). 
Observe that with respect to the main path of $t$ at least $\ell / 2$ of these tokens are on the same side (either left or right) of $t$. Call this subset $S'$, $|S'| \geq \ell / 2$.  We will consider four cases: the token $t$ is either a slot or a non-slot token and the set $S'$ is either to the right or to the left of $t$, see Figure~\ref{fig: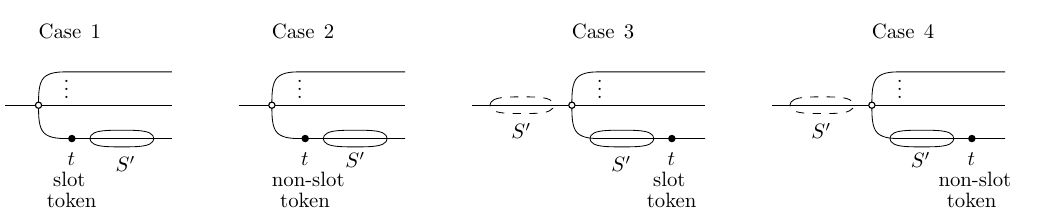}. 

\begin{figure}[tbh]
	\centering
	\includegraphics[scale=0.8]{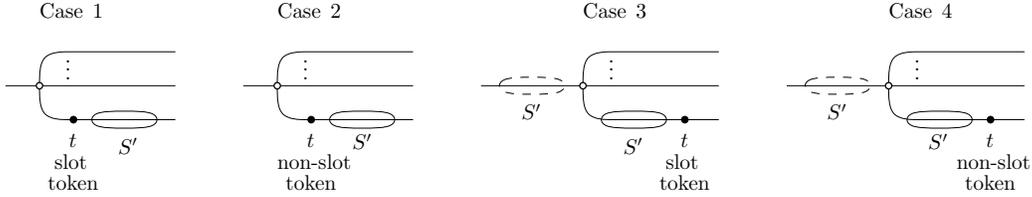}
	\caption{Four cases considered in proof of Lemma~\ref{lem:OutOfOrder}. The empty circle denotes the root of the tree.
                   The bottom-most slot gadget indicates the initial or the target slot gadget (depending on the context) of token $t$ and of all tokens from $S$.}	
	\label{fig:four_cases.pdf}
\end{figure} 

{\bf Case 1.} If $t$ is a slot token that is to the left of the set $S'$ of slot tokens in the current out-of-order configuration, then the initial configuration of the token swapping instance must have had $t$ in the same slot gadget on the right side of $S'$. Transforming the initial configuration into the current out-of-order configuration is exactly Transformation 1. If $\ell/4$ or more %
tokens from $S'$ exchange with $t$ inside the slot gadget, then by Lemma~\ref{tooManyContraryMoves}(\ref{tooManyItem2}) %
there are too many contrary moves. Otherwise there is a subset $S''$ of at least $|S'|/2\geq\ell/4$ tokens from $S'$ that exchange with $t$ outside of the slot gadget. Then, in order for $t$ to get out of this slot gadget, all $S''$ tokens must first get out of the gadget and then re-enter it again. Hence, Lemma~\ref{lem:swapNotOutsideOfGadget} applies and there are at least $\frac{1}{2}\cdot(\frac{\ell}{4})^2=\ell^2/2^5$ %
contrary moves, which is by Lemma~\ref{tooManyContraryMoves}(\ref{tooManyItem1}) too many.

{\bf Case 2.} If $t$ is a non-slot token that is to the left of the set $S'$ of non-slot tokens in the current out-of-order configuration, then the final configuration of the token swapping instance must have  $t$ in the same slot gadget on the right side of $S$. Transforming the current out-of-order configuration into the final configuration is exactly Transformation 2. By a similar argument as in Case 1, the Lemmas~\ref{tooManyContraryMoves}(\ref{tooManyItem2})%
, \ref{lem:swapNotOutsideOfGadget} and \ref{tooManyContraryMoves}(\ref{tooManyItem1}) show that $t$ cannot exchange order with the tokens from $S'$ inside nor outside of the slot gadget.

{\bf Case 3.} Assume that $t$ is a slot token and the set $S'$ of slot tokens is to the left of it, each such token either in the same slot gadget $i$ or in the ordering gadget. Consider the first time after the current out-of-order configuration that token $t$ reaches the root: this intermediate configuration has at least %
$\frac{\ell}{2(m+1)}$ tokens of $S'$, call this set $S''$, either in slot gadget $i$ (Subcase~a), or in a slot gadget $j\neq i$ (Subcase~b), or in the ordering gadget (Subcase~c). 

In Subcase a, %
all $\geq\frac{\ell}{2(m+1)}$ %
tokens of $S''$ must exchange their order with token $t$ inside the slot gadget $i$ between the current out-of-order configuration and the intermediate configuration. This is exactly Transformation 1 and by Lemma~\ref{tooManyContraryMoves}(\ref{tooManyItem2}) it causes too many contrary moves.

In Subcase b, moving the $S''$ slot tokens from their current position into slot gadget $j$ (for the intermediate configuration) takes at least $\frac{1}{2}\cdot(\frac{\ell}{2(m+1)})^2$ %
contrary (right) moves, which are again by Lemma~\ref{tooManyContraryMoves}(\ref{tooManyItem1}) more contrary moves than we can afford.

Finally, note that given the current out-of-order configuration, the target configuration of the token swapping instance must have $t$ to the left of $S'$ tokens in the ordering gadget. Thus, in Subcase c, transforming the intermediate configuration into the target configuration is exactly Transformation 3. If at least half of the $S''$ tokens, i.e.,~ $\frac{\ell}{4(m+1)}$ tokens, exchange with $t$ inside the ordering gadget, then by Lemma~\ref{tooManyContraryMoves}(\ref{tooManyItem2}) there are too many contrary moves. Otherwise at least $|S''|/2\geq\frac{\ell}{4(m+1)}$ tokens of $S''$ exchange with $t$ outside of the ordering gadget. For that, they must all first leave and then re-enter the ordering gadget. By Lemma~\ref{lem:swapNotOutsideOfGadget}, this results in at least $\frac{1}{2} \cdot (\frac{\ell}{4(m+1)})^2$ contrary moves which by Lemma~\ref{tooManyContraryMoves}(\ref{tooManyItem1}) are too many.

{\bf Case 4.} Assume that $t$ is a non-slot token and the set $S'$ of non-slot tokens is to the left of it, each such token either in the same slot gadget $i$ or in the ordering gadget. Consider the last time before the current out-of-order configuration that token $t$ was at the root: this intermediate configuration has at least $\frac{\ell}{2(m+1)}$ %
tokens of $S'$, call this set $S''$, either in slot gadget~$i$ (Subcase~a), or in a slot gadget $j\neq i$ (Subcase~b), or in the ordering gadget (Subcase~c). 

In Subcase a, %
all $\geq\frac{\ell}{2(m+1)}$ %
tokens of $S''$ must exchange their order with token $t$ inside the slot gadget $i$ between the intermediate and the current out-of-order configuration. This is exactly Transformation 2 and by Lemma~\ref{tooManyContraryMoves}(\ref{tooManyItem2}) it causes too many contrary moves.

In Subcase b, moving the $S''$ non-slot tokens out of slot gadget $j$ (in the intermediate configuration) to their current position takes at least $\frac{1}{2}\cdot(\frac{\ell}{2(m+1)})^2$ %
contrary (left) moves, which are again by Lemma~\ref{tooManyContraryMoves}(\ref{tooManyItem1}) more contrary moves than we can afford.

Finally, note that given the current out-of-order configuration, the initial configuration of the token swapping instance must have $t$ to the left of $S'$ tokens in the ordering gadget. Thus, in Subcase c, transforming the initial configuration into the intermediate configuration is exactly Transformation 4. If at least half of the $S''$ tokens, i.e.,~ $\frac{\ell}{4(m+1)}$ tokens, exchange with $t$ inside the ordering gadget, then by Lemma~\ref{tooManyContraryMoves}(\ref{tooManyItem2}) there are too many contrary moves. Otherwise at least $|S''|/2\geq\frac{\ell}{4(m+1)}$ tokens of $S''$ exchange with $t$ outside of the ordering gadget. By Lemma~\ref{lem:swapNotOutsideOfGadget}, this results in at least $\frac{1}{2} \cdot (\frac{\ell}{4(m+1)})^2$ contrary moves which by Lemma~\ref{tooManyContraryMoves}(\ref{tooManyItem1}) are too many.

\end{proof}

\paragraph{Every slot gadget contains an item token at all times}\label{sec:Item Tokens Stuck}$ $\\

The goal of this section is to prove the following lemma.

\begin{lemma}\label{lem:LeafLeaving} At all times there is at least one item token in each slot gadget at distance at least $k(1-\frac{1}{4n})$ from the root.
\end{lemma}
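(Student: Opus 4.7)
The plan is to derive a contradiction via Observation~\ref{obs:contrary}, which caps the total number of contrary moves outside $\mathcal{M}$ by $k^{1/c}=mn$. Suppose for contradiction that at some time $\tau$, slot gadget $i$ contains no item token at distance at least $k(1-\frac{1}{4n})$ from the root. Let $R$ be the set of vertices of slot gadget $i$ at distance at least $k(1-\frac{1}{4n})$ from the root, together with the nook; so $|R|\geq \frac{k}{4n}+1$ and every vertex of $R$ is occupied by a non-item token at time $\tau$. In particular, the nook is occupied by a non-item token, so the item token originally stationed there (call it item token $i$) has exited, and since no item token occupies any vertex of $R$, item token $i$ is currently at a vertex whose distance from the root is strictly less than $k(1-\frac{1}{4n})$---either shallowly in slot gadget $i$, at the root, in another slot gadget, or in the ordering gadget.

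Because item token $i$ starts and ends at the nook, its trajectory from time $0$ to $\tau$ must include at least $\frac{k}{4n}+1$ left moves: one exit move from the nook followed by left moves along the slot gadget path until the distance from the root drops below $k(1-\frac{1}{4n})$. Each such left move of item token $i$ is a swap with the non-item token at the adjacent position, which in turn performs a right move. Whenever this swap partner is a slot token, its right move is contrary; and by Corollary~\ref{obs:interface}(\ref{int5}), since slot gadget $i$ is not the final gadget of any slot token and the other party in the swap is an item token, this contrary move is not in $\mathcal{M}$. Thus the plan is to show that all but a sub-polynomial number of item token $i$'s swap partners during this nook-exit journey are slot tokens; combined with $\frac{k}{4n}\gg mn$ for $c=25$, this will exceed the budget of Observation~\ref{obs:contrary} and yield the desired contradiction.

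The main obstacle is exactly this last claim: that non-slot tokens cannot flood the near-nook region of slot gadget $i$ quickly enough to escort item token $i$'s exit. I would apply Lemma~\ref{lem:OutOfOrder} to a non-item token currently at a deep position in slot gadget $i$ (for instance the token at the nook parent, which is in $R$ and hence non-item) to cap at $\ell=k^{1/2+2/c}$ the number of tokens that can be displaced out of order relative to it, which in turn bounds the number of non-slot interlopers that have supplanted the original slot-token occupants of $R$ by $O(\ell)=o(\frac{k}{n})$. Lemma~\ref{lem:segbyseg} would confirm that the segments initially deep in slot gadget $i$ are still largely present throughout the relevant portion of item token $i$'s journey, while Lemma~\ref{lem:exchange} would bound the arrivals of non-slot tokens into slot gadget $i$ before time $\tau$. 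Some additional care is needed to avoid double-counting contrary moves already accounted for in $M_1\subseteq\mathcal{M}$ (where at most one item-token right move is assigned per slot-token swap with an item token in Case~1 of the construction of $\mathcal{M}$); however, since we produce $\Omega(\frac{k}{n})$ distinct slot-token right moves just from item token $i$'s journey and each slot token contributes to $M_1$ at most once, the overlap is negligible and the excess still contradicts the $mn$ budget.
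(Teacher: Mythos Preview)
Your approach has a fundamental gap at the point you yourself flag as the main obstacle: showing that most of item token~$i$'s leftward swap partners are slot tokens. In fact the very budget you invoke forces the \emph{opposite} conclusion. If $r$ is one of those swap partners and $r$ is a slot token, then $r$'s right move is contrary and (as you correctly note via Corollary~\ref{obs:interface}(\ref{int5})) not in~$\mathcal{M}$. Since Observation~\ref{obs:contrary} allows at most $k^{1/c}$ such moves, \emph{at most} $k^{1/c}$ of the swap partners can be slot tokens, not ``all but sub-polynomially many''. The remaining $\frac{k}{4n}-k^{1/c}$ partners are non-slot tokens (or item tokens), and a non-slot token moving right is not contrary at all, so you get no contradiction. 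The paper in fact proves exactly this (its Claim~\ref{claim:sizer2}): most of the set $R$ consists of non-slot tokens whose final gadget is~$i$. This is not a contradiction; it is the starting point of a much longer argument.

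Your proposed use of Lemma~\ref{lem:OutOfOrder} does not rescue the plan. That lemma bounds how many tokens \emph{of the same type} (same initial gadget for slot tokens, same final gadget for non-slot tokens) can be out of order with a given token~$t$; it says nothing about how many non-slot tokens can occupy the deep region of slot gadget~$i$. Indeed, in the intended solution the deep region of slot gadget~$i$ is eventually filled entirely with non-slot tokens (the $x_j$'s), all perfectly in order with one another, so no out-of-order violation arises. If item token~$i$ were to walk out after $x_1$ has arrived, every swap partner would be a non-slot token moving rightward (non-contrary), and your argument produces zero chargeable moves. Similarly, Lemma~\ref{lem:exchange} gives a \emph{lower} bound on non-slot arrivals into slot gadget~$i$, not an upper bound, so it works against you. (A smaller point: item token~$i$ does not in general end at the nook of slot gadget~$i$; its target is the nook of slot gadget~$\pi(i)$.) The paper's proof instead tracks the non-item token that occupies the nook during the interval, shows that large numbers of tokens must pass the nook parent in both directions while it sits there, and then derives an out-of-order contradiction for \emph{that} token using Lemma~\ref{lem:OutOfOrder} together with the segment-ordering machinery (Claims~\ref{claim:rightnook}--\ref{claim:slotfromt}).
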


\begin{proof} 
Our approach will be to suppose for contradiction that there is a slot gadget that does not satisfy the lemma statement, which in particular means that the token at the nook of that slot gadget is a non-item token. We will argue that having a non-item token in the nook with no item token nearby creates too many contrary moves.

To  be precise, suppose that there is a time $\tau$
when some slot gadget, say slot gadget $i$, does  not  have an item token at distance at least $k(1-\frac{1}{4n})$ from the root.

Let $\tau_1$ be the last time before $\tau$
in which we had an item token occupying the nook,
and let $q$ be that item token. Similarly, we define $\tau_2$ as the first time after $\tau$
when
there is an item token occupying the nook, and let $q'$ be the item token. Note that both $\tau_1$ and $\tau_2$ must exist since the initial and target configurations are candidates for each of them, respectively. %

Let $v$ be the unique vertex of slot gadget $i$ whose distance to the root is $k(1-\frac{1}{4n})$. Equivalently, $v$ is the vertex $\frac{k}{4n}$ vertices to the left of the nook parent of slot gadget $i$. 
At time $\tau_1$, $q$ was at the nook of slot gadget $i$, and before time $\tau$, $q$ reached vertex $v$.
That is, between times $\tau_1$ and $\tau$, token $q$ traverses all of the $\frac{k}{4n}$ edges from the nook  parent to $v$. Let $R$ be the set of tokens that $q$ swaps with on $q$'s last traversal  of each of these edges before time $\tau$.  By  Claim~\ref{lem:unique},  these tokens are distinct, so $|R|=\frac{k}{4n}$.   

We now give an overview of the proof. We first show that after their swaps with $q$, many of the tokens in $R$ proceed to pass by the nook parent in the rightward direction (Claim~\ref{claim:rightnook}). Then, we identify a non-item token $t$ that is in the nook while many of the tokens in $R$ pass by the nook parent (Claim~\ref{claim:tnook}). Then, we show that while $t$ is in the nook, many tokens also pass by the nook parent in the leftward direction, to ``replace'' the tokens in $R$ that are passing in the rightward direction (Claim~\ref{claim:Tbig}). 

Next, we show that $t$ cannot have slot gadget $i$ as its initial nor final gadget (Claim~\ref{claim:fromelsewhere}). This is because if $t$ has slot gadget $i$ as its initial or final gadget then $t$ will get out of order with the tokens that are passing by the nook in the leftward or rightward direction, respectively. This contradicts Lemma~\ref{lem:OutOfOrder} (which says that a token cannot be out of order with too many tokens). 

The remaining case is when $t$ does not have $i$ as either its initial or final gadget. In this case we find a similar type of contradiction: we find a large number of tokens that have the same initial and final gadget as $t$ that move out of their initial gadget and into their final gadget while $t$ stays in slot gadget $i$ (Claim~\ref{claim:slotfromt}). We are able to find these tokens because the segments (actually just their cores) reach the root in order according to the scaffold solution by Lemma~\ref{lem:segbyseg}, and the scaffold solution includes segments from \emph{every} slot gadget at high frequency due to the padding segments. The consequence of this is that $t$ becomes out of order with the tokens from its initial gadget that have entered their final gadget while $t$ is in slot gadget $i$, which again contradicts Lemma~\ref{lem:OutOfOrder}.

We now formalize this intuition. Towards showing that many tokens in $R$ pass by the nook parent, we first show that many tokens in $R$ are non-slot tokens whose final gadget is $i$.

\begin{claim}\label{claim:sizer2}
$R$ contains at least $\frac{k}{4n}-k^{1/c}-k^{1/2+2/c}$ many non-slot tokens whose final gadget is $i$.
\end{claim}
\begin{proof}
When token $r\in R$ swaps with $q$, $r$ moves to the right. Thus, if $r$ is a slot token, then $r$ does a contrary move. By Corollary~\ref{obs:interface}(\ref{int5}), this contrary move is not in $\mathcal{M}$. Thus, at most $k^{1/c}$ tokens in $R$ are slot tokens. Also, by Lemma~\ref{lem:indest}, at most $k^{1/2+2/c}$ of the tokens in $R$ are non-slot tokens whose final gadget is not~$i$. 

Since $|R|=\frac{k}{4n}$ we have that the number of tokens in $R$ that are non-slot tokens whose final gadget is $i$ is at least $\frac{k}{4n}-k^{1/c}-k^{1/2+2/c}$ as claimed. %
\end{proof}

Let $R_2$ be the subset of tokens in $R$ that are non-slot tokens whose final gadget is~$i$. %
Now consider the movement of the tokens in $R_2$ between times $\tau_1$ and $\tau_2$. We claim that many tokens in $R_2$ move to the right of the nook parent.

\begin{claim}\label{claim:rightnook}
At least $k^{1-2/c}$ tokens in $R_2$ move to the right of the nook parent at some point between times $\tau_1$ and $\tau_2$.
\end{claim}
\begin{proof}
First, we claim that fewer than $k^{1/2+1/c}$ tokens in $R_2$ can ever reach vertex $v$ (the vertex at distance $\frac{k}{4n}$ to the left of the nook parent) after swapping with $q$. This is because each token in $R_2$ swaps with $q$ along a unique edge of distance at most $\frac{k}{4n}$ from the nook. Thus, moving $k^{1/2+1/c}$ of these tokens to a distance \emph{at least} $\frac{k}{4n}$ from the nook requires at least $\frac{(k^{1/2+1/c})^2}{2}>k^{1+1/c}$ left moves, which are all contrary moves since every token in $R_2$ is a non-slot token.  

Other than the $k^{1/2+1/c}$ tokens mentioned above, any token that does not satisfy the claim must remain between $v$ and the nook parent from when it swaps with $q$ until time $\tau_2$. We will show that there are at most $k^{1/c}$ many such tokens. Recall that at time $\tau$ no item token is in slot gadget $i$ at distance at least $k(1-\frac{1}{4n})$ from the root. In particular, at time $\tau$, $q'$ (the item token defined by time $\tau_2$) is either in slot gadget $i$ between the root and $v$ or not in slot gadget $i$. 

At time $\tau_2$, $q'$ is at the nook. Thus, between times $\tau$ and $\tau_2$, $q'$ swaps with every token in $R'$ with $q'$ moving to the right and the tokens in $R'$ moving to the left. Since every token in $R_2$ (and thus $R'$) is a non-slot token, these left moves are contrary moves. Furthermore, these contrary moves are not in $\mathcal{M}$ by Corollary~\ref{obs:interface}(\ref{int5}). Thus, $|R'|< k^{1/c}$.

Putting everything together, the number of tokens that move to the right of the nook parent at some point between times $\tau_1$ and $\tau_2$ is at least $|R|-k^{1/2+1/c}-k^{1/c}\geq\frac{k}{4n}-2k^{1/c}-k^{1/2+2/c}-k^{1/2+1/c}>k^{1-2/c}$ since $c>6$.
\end{proof}

Now, we are ready to define the token $t$ mentioned in the overview. Let $e$ be the edge between the nook parent and the vertex to its right.

\begin{claim}\label{claim:tnook}
There exists an interval $\mathcal{I}$ of time contained between $\tau_1$ and $\tau_2$ such that: $(1)$ a single non-item token $t$ is in the nook of slot gadget $i$ for the entirety of $\mathcal{I}$, and $(2)$ at least $k^{1-3/c}$ tokens of $R_2$ traverse edge $e$ in the rightward direction during $\mathcal{I}$.
\end{claim}
\begin{proof}
First observe that, by Corollary~\ref{obs:interface}(\ref{int4}), each move of a non-item token into and out of the nook creates a contrary move that is not in $\mathcal{M}$. Thus, the total number of different non-item tokens that ever go into the nook of any gadget (and in particular slot gadget $i$) is at most $k^{1/c}$. 

Let $R_3 \subseteq R_2$ be the tokens of $R_2$ that move to the right of the nook parent at some point between times $\tau_1$ and $\tau_2$. By Claim~\ref{claim:rightnook}, we know that $|R_3|\geq k^{1-2/c}$. 

For each token of $R_3$, consider the last time that it traverses edge $e$ to the right between times $\tau_1$ and $\tau_2$ and look at which token is in the nook during that traversal. By the pigeonhole principle and the above bounds, there must exist a non-item token $t$ that is in the nook of gadget~$i$ contiguously for an interval $\mathcal{I}$ of time during which at least $|R_3|/k^{1/c}\geq k^{1-3/c}$ tokens of $R_3$ traverse edge $e$ in the rightward direction. 
\end{proof}

 Let $\mathcal{I}$ be the interval of time from Claim~\ref{claim:tnook} and let $R_4\subseteq R_2$ be the set of at least $k^{1-3/c}$ tokens from Claim~\ref{claim:tnook}.
Also, let $\mathcal{S}$ be the set of swaps during interval $\mathcal{I}$ during which tokens in $R_4$ traverse edge $e$ in the rightward direction, taking the last such traversal for each token in $R_4$. %
Let $L$ be the set of tokens that $R_4$ swaps with during the swaps $\mathcal{S}$.

Next, we will show that many distinct tokens traverse the edge $e$ to the left during $\mathcal{I}$. We already know that there are many such traversals (performed by tokens in $L$), however we need to guarantee that many such traversals are performed by \emph{distinct} tokens. In particular, it is possible for $L$ to be composed of a single token that is going back and forth across~$e$. Furthermore, we require that these distinct tokens were initially in slot gadget $i$. These required properties are captured in the following claim.

\begin{claim}\label{claim:Tbig}
At least $k^{1-4/c}$ distinct tokens whose initial slot gadget is $i$ traverse $e$ in the left direction during interval $\mathcal{I}$.
\end{claim}
\begin{proof}
First suppose $|L|\geq |R_4|/4$. We already know that every token in $L$ traverses $e$ in the left direction while $t$ is in the nook, so it suffices to show that at least half of the tokens in $L$ were initially in slot gadget $i$ (since $|L|/2\geq |R_4|/8\geq k^{1-3/c}/8>k^{1-4/c}$). Suppose more than half of the tokens in $L$ are \emph{not} initially in slot gadget $i$. If at least $|L|/4$ of these tokens have slot gadget $i$ as their final gadget, then their leftward traversal of $e$ is a contrary move, which is not in $\mathcal{M}$ by Corollary~\ref{obs:interface}(\ref{int1}). Thus, at least $|L|/4$ of these tokens do not have gadget~$i$ as their initial or final gadget. In this case, these tokens must move both into slot gadget $i$ all the way to $e$, and out of slot gadget $i$. One of these two directions is composed of contrary moves, so since $e$ is at distance $k$ from the root, we have $k$ contrary moves for each of the $|L|/4$ tokens, which is a contradiction.

On the other hand, suppose $|L|< |R_4|/4$. This is the tricky case because for example $L$ could be just a single token that is moving back and forth across $e$, but we still need to show that many \emph{distinct} tokens traverse $e$ in the left direction. Because $|L|< |R_4|/4$, some tokens in $L$ participate in multiple swaps in $\mathcal{S}$. 
Let $\mathcal{S}'\subseteq\mathcal{S}$ be the set of swaps such that the participating token in $L$ has already previously performed a swap in $\mathcal{S}$ and will again perform another swap in $\mathcal{S}$. That is, for every token in $L$, at most two of its swaps in $\mathcal{S}$ are \emph{not} in $\mathcal{S'}$. Thus, $|\mathcal{S'}|\geq|\mathcal{S}|-2|L|=|R_4|-2|L|>|R_4|/2$. Defining $\mathcal{S'}$ is useful because we are guaranteed the following property: during each swap in $\mathcal{S'}$, the participating token in $L$ is moving leftward, however at some later point this same token must traverse $e$ in the \emph{rightward} direction in order to be set up to again traverse $e$ leftward during its next swap in $\mathcal{S}$. For every swap in $\mathcal{S'}$, consider the next time this token in $L$ traverses $e$ in the rightward direction and denote this set of swaps by $\mathcal{S''}$. By definition, $|\mathcal{S''}|=|\mathcal{S'}|$. Now we will condition on whether the tokens in $L$ that participate in swaps in $\mathcal{S'}$ are slot tokens or non-slot tokens. 

{\bf Case 1.} Suppose that for at least half of the swaps in $\mathcal{S'}$, the participating token in $L$ is a slot token. Then, for at least half of the swaps in $\mathcal{S''}$, the participating token in $L$ is performing a contrary move. At least $|\mathcal{S''}|/2-k^{1/c}$ of these contrary moves are in $\mathcal{M}$, because if $k^{1/c}$ of them are not in $\mathcal{M}$ then we have a contradiction. Then according to the characterization of $\mathcal{M}$ in Observation~\ref{obs:purple} (specifically item~\ref{purp2}), for each these contrary moves in $\mathcal{M}$, the token $s$ swapped with is a \emph{distinct} slot token that was initially in slot gadget $i$. The number of such tokens $s$ is at least $|\mathcal{S''}|/2-k^{1/c}=|\mathcal{S'}|/2-k^{1/c}>|R_4|/4-k^{1/c}\geq k^{1-3/c}/4-k^{1/c}>k^{1-4/c}$ since $c>4$. This completes Case 1.

{\bf Case 2.} Suppose that for at least half of the swaps in $\mathcal{S'}$, the participating token in $L$ is a non-slot token. In this case, for at least half of the swaps in $\mathcal{S''}$,  the participating token in $L$ is performing a contrary move. At least $|\mathcal{S'}|/2-k^{1/c}$ of these contrary moves are in $\mathcal{M}$, because if $k^{1/c}$ of them are not in $\mathcal{M}$ then we have a contradiction. Then according to the characterization of $\mathcal{M}$ in Observation~\ref{obs:purple} (specifically item~\ref{purp3}), for each of these contrary moves of tokens in $L$ in $\mathcal{M}$, the previous time this token in $L$ moved rightward along $e$ it swapped with a \emph{distinct} slot token $s$ that was initially in slot gadget $i$. Importantly, the previous time this token in $L$ moved rightward along $e$ was during $\mathcal{I}$ by the definition of $\mathcal{S'}$. The number of such tokens $s$ is at least $|\mathcal{S'}|/2-k^{1/c}>|R_4|/4-k^{1/c}\geq k^{1-3/c}/4-k^{1/c}>k^{1-4/c}$ since $c>4$. This completes the proof.
\end{proof}

Now we will consider where token $t$ originated. First, we will show that slot gadget $i$ is \emph{not} the initial or final gadget for token $t$. 

\begin{claim}\label{claim:fromelsewhere}
Slot gadget $i$ is \emph{not} the initial or final gadget for token $t$.
\end{claim}
\begin{proof}
A key fact in this proof is that we know that while $t$ is in the nook, many tokens pass by the nook both rightward (in particular $R_4$) and leftward (in particular the tokens from Claim~\ref{claim:Tbig}). The idea of this proof is to show that if $t$ has slot gadget $i$ as its initial or final gadget then these tokens passing by the nook become out of order with $t$, which is impossible by Lemma~\ref{lem:OutOfOrder}.

First, suppose for contradiction that slot gadget $i$ is the final gadget for token $t$. Let $R_5$ be the subset of $R_4$ that are on their main path both immediately before and immediately after interval $\mathcal{I}$. By Lemma~\ref{lem:indest}, $|R_5|\geq |R_4|-2k^{1/2+2/c}$. Since every token in $R_2\supseteq R_5$ also has final gadget~$i$, the notion of \defn{out of order} from  Lemma~\ref{lem:OutOfOrder} can apply to any pair of tokens in $R_5\cup \{t\}$. By Lemma~\ref{lem:OutOfOrder}, when $t$ is at the nook parent both immediately before and immediately after interval $\mathcal{I}$, $t$ cannot be out of order with any subset of $R_5$ of size at least $k^{1/2+2/c}$. Thus, compared to immediately before interval $\mathcal{I}$, right after interval $\mathcal{I}$ $t$ is on the opposite side of less than $2k^{1/2+2/c}$ tokens in $R_5$. Thus, at least $|R_5|-2k^{1/2+2/c}$ of the tokens in $R_5$ are on the same side (left or right) of $t$ immediately before and immediately after interval $\mathcal{I}$. Note that by definition all tokens in $R_5$ traverse the edge $e$ during interval $\mathcal{I}$. Thus, either moving these $|R_5|-2k^{1/2+2/c}$ tokens from their positions immediately before $\mathcal{I}$ to $e$, or moving them from $e$ to their positions immediately after $\mathcal{I}$, are contrary moves. This amounts to a total number of contrary moves of at least $(|R_5|-2k^{1/2+2/c})^2\geq (|R_4|-4k^{1/2+2/c})^2\geq (k^{1-3/c}-4k^{1/2+2/c})^2>k^{2-7/c}>k^{1+1/c}$, since $c>10$. 

Now suppose for contradiction that token $t$ was initially in slot gadget $i$. This case is similar to the previous case, except in this case $R_5$ and $t$ do not have the same initial gadget so the notion of ``out of order'' does not apply to them; instead, $t$ has the same initial gadget as the tokens from Claim~\ref{claim:Tbig} so the notion of ``out of order'' applies to them instead.
Claim~\ref{claim:Tbig} implies that there is a set $L'$ of at least $k^{1-4/c}$ tokens initially in slot gadget $i$ that traverse $e$ in the left direction during interval $\mathcal{I}$. Let $L''$ be the subset of $L'$ that are on their main path both immediately before and immediately after interval $\mathcal{I}$. By Lemma~\ref{lem:indest}, $|L''|\geq |L'|-2k^{1/2+2/c}$. By Lemma~\ref{lem:OutOfOrder}, when $t$ is at the nook parent both immediately before and immediately after interval $\mathcal{I}$, $t$ cannot be out of order with any subset of $L''$ of size at least $k^{1/2+2/c}$. Thus, compared to immediately before interval $\mathcal{I}$, right after interval $\mathcal{I}$ $t$ is on the opposite side of at most $2k^{1/2+2/c}$ tokens in $L''$. Thus, at least $|L''|-2k^{1/2+2/c}$ of the tokens in $L''$ are on the same side (left or right) of $t$ immediately before and immediately after interval $\mathcal{I}$. Thus, either moving these $|L''|-2k^{1/2+2/c}$ tokens from their positions immediately before $\mathcal{I}$ to $e$, or moving them from $e$ to their positions immediately after $\mathcal{I}$, are contrary moves. This amounts to a total number of contrary moves at least $(|L''|-2k^{1/2+2/c})^2\geq (|L'|-4k^{1/2+2/c})^2\geq (k^{1-4/c}-4k^{1/2+1/c})^2>k^{2-9/c}>k^{1+1/c}$, since $c>10$.
\end{proof}

Now that we know that slot gadget $i$ is not the initial or final gadget for $t$, we analyze what happens in $t$'s initial and final gadgets while $t$ is in slot gadget $i$. We will eventually argue when $t$ reaches the root after being in slot gadget $i$, $t$ has become out of order with a large set of tokens, which we have shown in Lemma~\ref{lem:OutOfOrder} is impossible. The first step towards proving this is to show that while $t$ is in slot gadget $i$, many tokens that were initially in slot gadget~$i$ reach the root for the first time.

We need to precisely define the time interval ``while $t$ is in slot slot gadget $i$'':
Let $\mathcal{I}'$ be the time interval starting from the last time $t$ enters slot gadget $i$ before interval $\mathcal{I}$ and ending at the first time $t$ exits slot gadget $i$ after interval $\mathcal{I}$. Both events exist since $i$ is not the initial or final gadget for $t$ by Claim~\ref{claim:fromelsewhere}.

\begin{claim}\label{claim:up}
During interval $\mathcal{I}'$, at least $k+k^{1-6/c}$ tokens that were initially in slot gadget $i$ reach the root for the first time.
\end{claim}

\begin{proof}
At the beginning and end of interval $\mathcal{I}'$, $t$ is at the root, and during interval $\mathcal{I}$, $t$ is in the nook of slot gadget $i$. Thus, during interval $\mathcal{I}'$, $t$ moves from the root to the nook of slot gadget $i$ and then back to the root. Let $\mathcal{S}$ be the set of swaps that $t$ performs while traveling from the root to the nook parent of slot gadget $i$, taking the last such swap that $t$ performs for each edge between the root and the nook parent; thus, $|\mathcal{S}|=k$. Let $S_1$ be the set of tokens that swap with $t$ during swaps in $\mathcal{S}$. By Claim~\ref{lem:unique}, $|S_1|=k$.

By Claim~\ref{claim:Tbig}, at least $k^{1-4/c}$ tokens that were initially in slot gadget $i$ traverse $e$ in the left direction during interval $\mathcal{I}$. Call these tokens $S_2$. 

We will show three properties of $|S_1\cup S_2|$, which together complete the proof. Roughly, the three properties are the following:

(1) many of the tokens in $|S_1\cup S_2|$ were initially in slot gadget $i$, 

(2) $|S_1\cup S_2|$ is large, and 

(3) many of the tokens in $|S_1\cup S_2|$ reach the root for the first time during interval $\mathcal{I}'$.

First we will show item (1). All of the tokens in $S_2$ were initially in slot gadget $i$ by definition, so our goal is to get a lower bound on the number of tokens in $S_1$ that were initially in slot gadget $i$. Less than $k^{1/2+1/c}$ of the tokens in $S_1$ do not have slot gadget $i$ as their initial or final gadget, because otherwise it requires $k^{2(1/2+1/c)}/2$ contrary moves to get these tokens into slot gadget $i$ to the position of the swap in $\mathcal{S}$, or to get them back out of slot gadget $i$ (since each token in $S_1$ swaps with $t$ along a distinct edge). Further, note that if a token $s\in S_1$ has its final position in slot gadget $i$, then during its swap with $t$, $s$ is performing a contrary (left) move, which is not in $\mathcal{M}$ by Corollary~\ref{obs:interface}(\ref{int1}). Thus, less than $k^{1/c}$ of the tokens in $S_1$ have final position in gadget~$i$. Therefore, we have shown that at least $k-k^{1/2+1/c}-k^{1/c}$ tokens in $S_1$ were initially in slot gadget $i$. Call these tokens $S'_1$. From now on we will focus on $|S'_1\cup S_2|$.

Next we will show item (2). Our goal is to get a lower bound on $|S'_1\cup S_2|$. To do this we will get an upper bound on $|S'_1\cap S_2|$. By definition, each token in $S'_1\cap S_2$ swaps with $t$ along a distinct edge between the root and the nook parent of slot gadget $i$, and then subsequently traverses the edge $e$. Each token in $S'_1$ (and thus $S'_1\cap S_2$) is initially in slot gadget $i$, so moving rightward from the location where it swapped with $t$ to edge $e$ are contrary moves. This incurs at least $(|S'_1\cap S_2|)^2/2$ contrary moves. Thus, $|S'_1\cap S_2|<k^{1/2+1/c}$ since otherwise the number of contrary moves would be more than $k^{1+1/c}$. Therefore, $|S'_1\cup S_2|=|S'_1|+|S_2|-|S'_1\cup S_2|>(k-k^{1/2+1/c}-k^{1/c})+k^{1-4/c}-k^{1/2+1/c}>k+k^{1-5/c}$ since $c>10$.

Lastly, we will show item (3). We claim that after the tokens in $|S'_1\cup S_2|$ have traversed $e$, at no point in the future are more than $k^{1/2+1/c}$ of these tokens to the right of the nook parent. This is because right moves are contrary moves for these tokens (since they were initially gadget is $i$) so having $k^{1/2+1/c}$ of these tokens to the right of the nook parent simultaneously would incur $k^{2(1/2+1/c)}/2 > k^{1+1/c}$ contrary moves. In particular, when $t$ reaches the nook parent for the last time during interval $\mathcal{I}'$, no more than $k^{1/2+1/c}$ tokens in $S'_1\cup S_2$ are to the right of $t$. For each token $s\in S'_1\cup S_2$ that is \emph{not} to the right of $t$ at this point in time, note that $s$ either reaches the root before $t$ or swaps with $t$ (since at the end of $\mathcal{I}'$, $t$ is at the root). Less than $k^{1/c}$ such tokens $s$ swap with $t$ since in this case, $s$ performs a contrary (right) move that is not in $\mathcal{M}$ by Corollary~\ref{obs:interface}(\ref{int6}). Thus, at least $|S'_1\cup S_2|-k^{1/2+1/c}-k^{1/c}$ tokens in $S'_1\cup S_2$ reach the root during interval $\mathcal{I}'$.

It remains to show that enough tokens in $S'_1\cup S_2$ reach the root \emph{for the first time} during interval $\mathcal{I}'$. We know that at most $k^{1/c}$ of the tokens in $S'_1\cup S_2$ reached the root for the first time before the beginning of interval $\mathcal{I}'$ because all of these tokens subsequently traverse $e$, which requires $k$ contrary moves per token. Thus, we have shown that at least $k+k^{1-5/c}-k^{1/2+1/c}-2k^{1/c}$ tokens in $S'_1\cup S_2$ reach the root for the first time during interval $\mathcal{I}'$. This quantity of tokens is more than $k+k^{1-6/c}$ since $c>12$.
\end{proof}

Recall that we will eventually argue when $t$ reaches the root at the end of interval $\mathcal{I}'$, $t$ has become out of order with a large set of tokens. To define this large set of tokens, it will be useful to return to the idea of the \defn{core} $C(S)$ of a segment $S$ from Definition~\ref{def:core}.  Next, we will show that during interval $\mathcal{I}'$, for \emph{each} slot gadget $j\not=i$, there is a segment whose initial position was in that slot, such that every token in the core of that segment reaches the root for the first time. 

\begin{claim}\label{claim:oneofeach}
For each slot gadget $j\not=i$, there is a segment $S_j$ whose initial position was in slot gadget $j$ such that during interval $\mathcal{I}'$ every token in $C(S_j)$ reaches the root for the first time.
\end{claim}
\begin{proof}
It will help to count the number of tokens of various types initially in slot gadget $i$. By Lemma~\ref{lem:noncore} the total number of tokens with initial position in slot gadget $i$ that are not in the core of any segment is at most $k^{1/2+5/c}$. Also, the total number of tokens with initial position in slot gadget $i$ in a padding segment is $n(\frac{k}{n^8})>k^{1-7/c}$ since there are $n$ padding segments in each slot gadget and each one is of length $\frac{k}{n^8}$.  Each remaining token initially in slot gadget $i$ is in the core of a big segment. Each big segment is of size $k$. 

By Claim~\ref{claim:up}, there is a set $S$ of at least $k+k^{1-6/c}$ tokens that were initially in slot gadget $i$ that reach the root for the first time during interval $\mathcal{I}'$. Because $|S|>k^{1/2+5/c}+k^{1-7/c}+k$ (which is true since $c>24$), the calculations from the previous paragraph imply that $S$ must contain at least one token from the core of a big segment and at least one token from the core of another segment.

Lemma~\ref{lem:segbyseg} says that the tokens in the cores of segments reach the root for the first time in the same order as in the scaffold solution. Property (P6) of the scaffold solution says that during a time interval in which at least one token from a big segment and at least one token from any other segment from the same slot gadget reach the root, at least one segment from each of the other slot gadgets reaches the root. Combining these two facts, we have that in our token swapping solution, during a time interval in which at least one token from the core of a big segment initially in slot gadget $i$ and at least one token from the core of any other segment initially in slot gadget $i$ reach the root for the first time, the core of at least one segment initially in each of the other slot gadgets reaches the root for the first time. $\mathcal{I}'$ is such  a time interval, since every token in $S$ reaches the root for the first time during $\mathcal{I}'$ and we have shown that $S$ contains at least one token from the core of a big segment and at least one token from the core of another segment (both initially in slot gadget $i$). This completes the proof.
\end{proof}

The final claim towards arguing that when $t$ reaches the root at the end of interval $\mathcal{I}'$, $t$ has become out of order with a large set of tokens, is the following.

\begin{claim}\label{claim:slotfromt}
During interval $\mathcal{I}'$, at least $k^{1-10/c}$ tokens that all have the same initial and final gadget as $t$ exit their initial gadget for the first time and enter their final gadget for the last time.

\end{claim}
\begin{proof}
First, suppose $t$'s initial gadget is a slot gadget. Then, by Claim~\ref{claim:oneofeach}, every token in the core of a segment $S$ with the same initial gadget as $t$ reaches the root for the first time during $\mathcal{I}'$. Then, by Lemma~\ref{lem:indest}, at least $|C(S)|-k^{1/2+2/c}$ tokens from $C(S)$ enter the ordering gadget for the last time during $\mathcal{I}'$. This completes the case where $t$'s initial gadget is a slot gadget since  by Lemma~\ref{lem:noncore}, $|C(S)|-k^{1/2+2/c}>k^{1-9/c}-k^{1/2+2/c}>k^{1-10/c}$ since $c>22$.

Now, suppose $t$'s initial gadget is the ordering gadget. Then, by Claim~\ref{claim:oneofeach}, every token in the core of a segment $S$ whose initial gadget is $t$'s \emph{final} gadget reaches the root for the first time during $\mathcal{I}'$. By Lemma~\ref{lem:exchange}, during interval $\mathcal{I}'$, a set $S'$ of at least $|C(S)|-k^{1/2+3/c}$ tokens with the same final gadget as $t$ enter this gadget for the last time. That is, at the beginning of interval $\mathcal{I}'$, no token in $S'$ had yet entered their final gadget for the last time. Then, by the contrapositive of Lemma~\ref{lem:indest}, at the beginning of interval $\mathcal{I}'$, at least $|C(S)|-k^{1/2+3/c}-k^{1/2+2/c}$ of the tokens in $S'$ had not yet exited the ordering gadget for the first time, and thus first exited the ordering gadget during interval $\mathcal{I}'$. 
This completes the proof since  by Lemma~\ref{lem:noncore}, $|C(S)|-k^{1/2+3/c}-k^{1/2+2/c}>k^{1-9/c}-k^{1/2+3/c}-k^{1/2+2/c}>k^{1-10/c}$ since $c>24$.
\end{proof}

Now we are ready to complete the proof of Lemma~\ref{lem:LeafLeaving} by deriving the contradiction that $t$ has become out of order with a large set of tokens. When interval $\mathcal{I}'$ both begins and ends, $t$ is at the root. By Claim~\ref{claim:slotfromt}, during interval $\mathcal{I}'$, a set $S$ of at least $k^{1-10/c}$ tokens that all have the same initial and final gadget as $t$ exit their initial gadget for the first time and enter their final gadget for the last time. This immediately implies that at the beginning of interval $\mathcal{I}'$, every token in $S$ is in its initial gadget (which is the same as $t$'s initial gadget), and at the end of interval $\mathcal{I}'$, every token in $S$ is in its final gadget (which is the same as $t$'s final gadget). Thus, for every token $s\in S$, $t$ is out of order with $s$ either at the beginning or the end of interval $\mathcal{I}'$. Thus, either at the beginning or the end of interval $\mathcal{I}'$, $t$ is out of order with at least $|S|/2\geq k^{1-10/c}/2$ tokens. Since $c>24$, this contradicts Lemma~\ref{lem:OutOfOrder}, which says that $t$ can only be out of order with at most $k^{1/2+2/c}$ tokens at any given point in time.
\end{proof}

\paragraph{Proof of Lemma~\ref{lem:free-item-token}}\label{sec:Item and Segment Swap}$ $\\

We begin with Lemma~\ref{lem:free-item-token}(1): \emph{At any point in time, apart from the free item token, there is exactly one item token in each slot gadget.}
\begin{proof}[Proof of Lemma~\ref{lem:free-item-token}(1)]

By Lemma~\ref{lem:LeafLeaving}, at all times, each slot gadget contains at least one item token. It remains to show that if the free item token is in a slot gadget $i$, then there is another item token in slot gadget $i$. Consider the token $t$ that was in slot gadget $i$ the last time the free item token $f$ was at the root. We claim that $t$ is still in slot gadget $i$. This is because if $t$ has left slot gadget $i$ then $t$ has passed through the root, which would mean that $t$ is the free item token instead of $f$ (since we were considering the \emph{last} time $f$ was at the root).
\end{proof}

Because any token must pass through the root when moving from one slot gadget to another, Lemma~\ref{lem:free-item-token}(1)
immediately implies the following corollary: 

\begin{corollary}
When the free item token changes from an item token $f$ to another item token $f'$, $f$ is in the slot gadget that $f'$ was just in.
\end{corollary}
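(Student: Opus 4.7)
The plan is to apply Lemma~\ref{lem:free-item-token}(1) at the two moments straddling the swap that installs $f'$ as the new free item token. By the definition of ``free,'' the free item token changes from $f$ to $f'$ exactly when $f'$ arrives at the root, so this transition corresponds to a single swap in which $f'$ moves from an adjacent vertex onto the root. Applying Lemma~\ref{lem:free-item-token}(1) just before the swap keeps every non-free item token (in particular $f'$) inside some slot gadget, so the adjacent vertex must lie in some slot gadget $i$; this $i$ is by definition the slot gadget $f'$ was just in. Let $x$ denote the token at the root immediately before the swap, so that $x$ occupies the same vertex of slot gadget $i$ immediately after.

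I would then split on whether $x$ is an item token. In the item-token case, $x$ was at the root just before the swap, so $x$ was itself the free item token at that moment, forcing $x=f$, and we conclude that $f=x$ lies in slot gadget $i$ after the swap. In the non-item case, apply Lemma~\ref{lem:free-item-token}(1) just after the swap: since $f'$ is now the free item token sitting at the root, slot gadget $i$ must still contain a unique item token $y$, and since $x$ is not an item token, $y$ cannot have arrived via this swap and must already have been in slot gadget $i$ just before the swap. But then at that earlier moment slot gadget $i$ contained both item tokens $y$ and $f'$, and applying Lemma~\ref{lem:free-item-token}(1) to the earlier configuration (where $f$ is the free item token) permits at most one item token of slot gadget $i$ apart from $f$; since $f'\neq f$, this forces $y=f$, so $f$ lies in slot gadget $i$ after the swap.

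The main (mild) obstacle is the non-item case, where $f$'s new location cannot be read off from the swap directly and one has to pipeline two applications of Lemma~\ref{lem:free-item-token}(1) across the same swap. Once this double invocation is set up the conclusion is immediate, which matches the authors' remark that the corollary follows immediately from part~(1) of the lemma.
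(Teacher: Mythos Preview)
Your proof is correct and is essentially the same approach the paper takes: both derive the corollary directly from Lemma~\ref{lem:free-item-token}(1), applied at the moments immediately before and after $f'$ arrives at the root. The paper gives only a one-line justification (``any token must pass through the root when moving from one slot gadget to another, so Lemma~\ref{lem:free-item-token}(1) immediately implies the corollary''), and your case split on whether the token $x$ at the root is an item token simply makes explicit the pigeonhole that the paper leaves implicit.
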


Now we move to Lemma~\ref{lem:free-item-token}(2): \emph{The exchange sequence $\chi$ is a  subsequence of the sequence of swaps given as input to the Star STS instance.}

Towards proving this, we show in the following claim that there is a direct correspondence between the free item token changing and segments leaving a slot gadget.

\begin{claim}\label{lem:ItemSwap}

Suppose the free item token changes from item token $f$ to item token $f'$, and let $i$ be the slot gadget containing $f$ when this happens. Then, while $f$ was the free item token, at least $k(1-\frac{1}{3n})$ slot tokens that were initially in slot gadget $i$ reached the root for the first time.

\end{claim}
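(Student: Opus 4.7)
The plan is to exploit that during the interval $[\tau_0,\tau_1]$ when $f$ is the free item token, $f$ is forced to travel deep into slot gadget $i$, sweeping many of the $A$-tokens (slot tokens initially in slot gadget~$i$) to the root.

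First I pin down the positions of $f$ and $f'$. At $\tau_0$, token $f$ is at the root by definition of ``free item token''. By Lemma~\ref{lem:free-item-token}(1), throughout $(\tau_0,\tau_1)$ the unique non-free item token in slot gadget $i$ is $f'$, and by the corollary following Lemma~\ref{lem:free-item-token}(1), $f$ is the unique non-free item token in slot gadget~$i$ at~$\tau_1$. Applying Lemma~\ref{lem:LeafLeaving} at~$\tau_1$ (with $f'$ at the root) forces $f$ to be at distance $\ge k(1-\frac{1}{4n})$ from the root, so $f$ makes at least $k(1-\frac{1}{4n})$ net rightward moves in slot gadget~$i$ during $[\tau_0,\tau_1]$.

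Next, for each $j=1,\dots,k(1-\frac{1}{4n})$ I consider the last time $f$ crosses edge $e_j$ rightward during $[\tau_0,\tau_1]$, and let $s_j$ be the token $f$ swaps with; by Claim~\ref{lem:unique} the $s_j$ are distinct, yielding a set $S$ of size $k(1-\frac{1}{4n})$. Mirroring the accounting from Claim~\ref{claim:up}, all but $O(k^{1/2+1/c})$ of these are slot tokens with initial gadget~$i$ (i.e., lie in~$A$): non-slot tokens and slot tokens with initial gadget $\neq i$ would force contrary moves that either lie outside $\mathcal{M}$ (bounded by~$n$ via Corollary~\ref{obs:interface}(\ref{int1}) and (\ref{int6})) or exceed the global contrary-move budget from Observation~\ref{obs:contrary}.

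The heart of the proof is showing that most of the $|S\cap A|\ge k(1-\frac{1}{4n})-O(k^{1/2+1/c})$ tokens reach the root for the first time during $[\tau_0,\tau_1]$. I rule out the two failure modes: (i)~a token $t\in S\cap A$ had already reached the root before $\tau_0$ and returned to slot gadget~$i$; the re-entry requires rightward (contrary) moves of~$t$ that lie outside $\mathcal{M}$ when swapping with non-$A$ tokens by Corollary~\ref{obs:interface}(\ref{int6}), bounded globally by~$O(n)$; and (ii)~$t$ is still stranded in slot gadget~$i$ at~$\tau_1$. I bound~(ii) by combining an out-of-order argument in the spirit of Lemma~\ref{lem:OutOfOrder} with Lemma~\ref{lem:exchange}, which guarantees that non-slot tokens destined for slot gadget~$i$ have by~$\tau_1$ filled most positions vacated by earlier root-reaching $A$-tokens, leaving at most $O(k^{1/2+3/c})$ room for stranded tokens. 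The main obstacle is making~(ii) quantitative: the argument must couple the forward and backward movements of $f$, $f'$, and the incoming non-slot tokens tightly enough to absorb all error terms into the $\Theta(k/n^2)$ slack between $k(1-\frac{1}{4n})$ and $k(1-\frac{1}{3n})$, while sidestepping the apparent circularity of using Lemma~\ref{lem:exchange} (whose hypothesis is precisely the kind of mass root-reaching we are trying to establish) via a standard contradiction setup.
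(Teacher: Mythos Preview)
Your setup---applying Lemma~\ref{lem:LeafLeaving} at $\tau_1$ to force $f$ deep into slot gadget $i$, collecting the set $S$ of last-rightward swap partners of $f$ along the way, and arguing that almost all of $S$ lies in $A$---tracks the paper's proof closely. The divergence, and the gap, is in how you handle the two failure modes.

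For failure mode (ii), the paper's argument is much simpler than yours and avoids both Lemma~\ref{lem:OutOfOrder} and Lemma~\ref{lem:exchange} entirely. The key observation you are missing is that Lemma~\ref{lem:LeafLeaving} applies not only at $\tau_1$ but at \emph{every} time in $(\tau_0,\tau_1)$: since $f$ is the free item token and $f'$ is the unique other item token in slot gadget $i$, Lemma~\ref{lem:LeafLeaving} forces $f'$ to sit at or beyond $v$ (distance $k(1-\tfrac{1}{4n})$) throughout the interval. Hence every token in $S\cap A$, immediately after its last swap with $f$, is strictly to the left of $f'$. When $f'$ finally travels from $v$ to the root at the end of the interval, each such token either reaches the root before $f'$ does, or is overtaken by $f'$---and the latter is a contrary right move of a slot token swapping with an item token in a non-final slot gadget, hence not in $\mathcal{M}$ by Corollary~\ref{obs:interface}(\ref{int5}), so bounded by $k^{1/c}$. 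A separate quadratic argument bounds how many $S\cap A$ tokens can drift back to the right of $v$. This ``$f'$ as sweeper'' mechanism is the whole argument; your route through Lemma~\ref{lem:exchange} is not only heavier but genuinely circular, and the ``standard contradiction setup'' you gesture at does not break the circularity: if few $A$-tokens reach the root, Lemma~\ref{lem:exchange} gives you nothing to contradict.

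For failure mode (i), your appeal to Corollary~\ref{obs:interface}(\ref{int6}) is incomplete: a token re-entering slot gadget $i$ could swap with other $A$-tokens, in which case (\ref{int6}) does not apply. The paper instead uses the same quadratic counting as elsewhere: if $\ell$ tokens of $S\cap A$ had already reached the root and returned to their (distinct) swap positions with $f$, that alone costs $\Omega(\ell^2)$ contrary moves, forcing $\ell \le k^{1/2+1/c}$.
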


\begin{proof}

This proof has a similar idea to the proof of Claim~\ref{claim:up} (but is less involved).

Let $v$ be the vertex in slot gadget $i$ of distance exactly $k(1-\frac{1}{4n})$ from the root (this is the same definition of $v$ as in the previous section). By Lemma~\ref{lem:LeafLeaving}, when $f'$ becomes the free item token, $f$ is at $v$ or to the right of $v$. Previously, when $f$ became the free item token, $f$ was at the root. Thus, while $f$ was the free item token, $f$ moved from the root to $v$. Let $\mathcal{S}$ be the set of swaps that $f$ performed while traveling from the root to $v$, taking the last such swap that $f$ performs for each edge between the root and $v$; thus, $|\mathcal{S}|=k(1-\frac{1}{4n})$. Let $S'$ be the set of tokens that swap with $f$ during swaps in $\mathcal{S}$. By Claim~\ref{lem:unique}, $|S'|=k(1-\frac{1}{4n})$.

First, we will show that many of the tokens in $S'$ were initially in slot gadget $i$. Less than $k^{1/2+1/c}$ of the tokens in $S'$ do not have slot gadget $i$ as their initial or final gadget, because otherwise it requires $k^{2(1/2+1/c)}/2$ contrary moves to get these tokens into slot gadget $i$ to the position of the swap in $\mathcal{S}$, or to get them back out of slot gadget $i$ (since each token in $S'$ swaps with $f$ along a distinct edge). Further, note that if a token $s\in S'$ has its final position in slot gadget $i$, then during its swap with $f$, $s$ is performing a contrary (left) move, which is not in $\mathcal{M}$ by Corollary~\ref{obs:interface}(\ref{int5}). Thus, less than $k^{1/c}$ of the tokens in $S'$ have final position in slot gadget $i$. Therefore, we have shown that a set $S''\subseteq S'$ of at least $k(1-\frac{1}{4n})-k^{1/2+1/c}-k^{1/c}$ tokens were initially in slot gadget $i$.

Now, we will show that many of the tokens in $S'$ reached the root for the first time while $f$ was the free item token.  By Lemma~\ref{lem:LeafLeaving}, once $f$ became the free item token and until $f$ reached $v$, $f'$ was in slot gadget $i$ at $v$ or to the right of $v$. Thus, right after $f$ swaps with each token $s\in S'$, $s$ is to the left of $f'$.

We claim that after $f$ reaches $v$, there are never more than $k^{1/2+1/c}$ tokens in $S''$ simultaneously to the right of $v$. This is because right moves are contrary moves for these tokens (since they were initially in slot gadget $i$) so having $k^{1/2+1/c}$ of these tokens to the right of $v$ would incur $k^{2(1/2+1/c)}/2>k^{1+1/c}$ contrary moves. In particular, when $f'$ reaches $v$ for the last time before becoming the free item token, no more than $k^{1/2+1/c}$ tokens in $S''$ are to the right of $f$. For each token $s\in S''$ that is \emph{not} to the right of $f$ at this point in time, note that $s$ either reaches the root before $f'$ or swaps with $f'$. Less than $k^{1/c}$ such tokens $s$ swap with $f'$ since in this case, $s$ performs a contrary (right) move that is not in $\mathcal{M}$ by Corollary~\ref{obs:interface}(\ref{int5}).

Thus, we have shown that at least $|S''|-k^{1/2+1/c}-k^{1/c}$ tokens in $S''$ have reached the root while $f$ is the free item token. It remains to show that enough of these tokens reach the root \emph{for the first time} while $f$ is the free item token. We know that at most $k^{1/2+1/c}$ of the tokens in $S''$ reached the root for the first time before $f$ became the free item token because getting all of these tokens to their positions to swap with $f$ would require more than $k^{2(1/2+1/c)}/2>k^{1+1/c}$ contrary moves. Thus, we have shown that at least $|S''|-2k^{1/2+1/c}-k^{1/c}$ tokens in $S''$ reach the root for the first time while $f$ is the free item token. This quantity of tokens is at least $k(1-\frac{1}{4n})-3k^{1/2+1/c}-2k^{1/c}>k(1-\frac{1}{3n})$.
\end{proof}

Now we are ready to prove Lemma~\ref{lem:free-item-token}(2).

\begin{proof}[Proof of Lemma~\ref{lem:free-item-token}(2)]
Let $s_1,\dots,s_n$ be the input sequence for the Star STS instance. We begin by recalling the correspondence between $s_1,\dots,s_n$ and our token swapping solution. Recall from the construction that there is a direct correspondence between $s_1,\dots,s_n$ and the big segments $y_1,\dots,y_n$ in the sense that if $s_j=g$ then $y_j$ is initially in slot gadget $g$. Further recall that in the scaffold solution, the big segments reach the root in the order $y_1,\dots,y_n$. By Lemma~\ref{lem:segbyseg}, in our token swapping solution the vertices in the \defn{core} of the big segments reach the root for the first time in the same order as in the scaffold solution. That is, in the order $C(y_1),\dots,C(y_n)$. We will refer to this property as the \defn{core property}: 

{\bf The core property:} For all $i,j$ with $i<j$, every token in $C(y_i)$ reaches the root for the first time before any token in $C(y_j)$ does so.

We will maintain a counter $\Lambda$ that intuitively keeps track of where we are up to in the sequence $s_1,\dots,s_n$. We will iterate through the exchange sequence $\chi=\chi[1],\chi[2],\dots$ from beginning to end, and for each element $\chi[i]$, we will increase $\Lambda$ by at least 1 to a new value $i$ such that $s_j=\chi[i]$. It is straightforward to see that if we can successfully maintain such a counter, then $\chi$ is a subsequence of $s_1,\dots,s_n$.

 Now we will define the rule for updating $\Lambda$. For each $\chi[i]$, let $p_i$  be the corresponding point in time when the token that was just in slot gadget $\chi[i]$ becomes the free item token (and let $p_0$ be the time initially). At each $p_i$, we will update $\Lambda$; let $\Lambda(i)$ be the value that we set $\Lambda$ to at $p_i$. 
 
 {\bf Rule for updating $\Lambda$:} $\Lambda(0)=0$ and for all $i\geq 1$, set $\Lambda(i)$ to be the smallest $j>\Lambda(i-1)$ such that before $p_i$ at least one token from $C(y_j)$ reaches the root for the first time, where $y_{j}$ is initially in slot gadget $\chi[i]$. 
 
 Note that the requirement that $y_j$ is initially in slot gadget $\chi[i]$ is equivalent to the above requirement that $s_j=\chi[i]$ (due to the correspondence between $y_j$ and $s_j$ described above). Thus, given this rule for updating $\Lambda$, proving the following claim completes the proof of Lemma~\ref{lem:free-item-token}(2).
 
 \begin{claim}
 For all $i\geq 1$, there exists $j>\Lambda(i-1)$ such that before $p_i$, at least one token from the core of some big segment $y_{j}$ that was initially in slot gadget $\chi[i]$ reaches the root for the first time.
 \end{claim}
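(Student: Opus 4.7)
The plan is to combine Claim~\ref{lem:ItemSwap} with the core property (Lemma~\ref{lem:segbyseg}) to locate a big segment $y_j$ initially in $\chi[i]$ whose core first begins reaching the root between $p_{i-1}$ and $p_i$, and to verify that its index exceeds $\Lambda(i-1)$. First I would apply Claim~\ref{lem:ItemSwap} to the interval during which the item token that becomes free at $p_i$ traverses slot gadget $\chi[i]$; it yields at least $k(1-\tfrac{1}{3n})$ slot tokens initially in $\chi[i]$ that reach the root for the first time in $(p_{i-1}, p_i]$. Of these, at most $n\cdot k/n^8 = k/n^7$ are padding tokens and at most $k^{1/2+5/c}$ are non-core tokens by Lemma~\ref{lem:noncore}, so strictly more than $k/2$ must be core tokens from big segments whose initial gadget is~$\chi[i]$.

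Next I would invoke Lemma~\ref{lem:segbyseg}: since cores of distinct big segments reach the root in the scaffold order, at every time $t$ the set of big-segment indices whose cores have at least one token at the root forms a prefix $\{1,\ldots,M_t\}$, and $\Lambda(i-1)\le M_{i-1}$ holds by the definition of~$\Lambda$. The core tokens counted above must therefore come from indices $j\in[M_{i-1},M_i]$ with $y_j$ initially in $\chi[i]$. If any such $j$ strictly exceeds $M_{i-1}$, then $j>\Lambda(i-1)$ and we are done. Otherwise all contributing indices equal $M_{i-1}$, so $y_{M_{i-1}}$ is initially in $\chi[i]$; if in addition $\Lambda(i-1)<M_{i-1}$, we still have $j=M_{i-1}>\Lambda(i-1)$. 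The only remaining case is $\Lambda(i-1)=M_{i-1}$ with $y_{M_{i-1}}$ initially in $\chi[i]$, which by the definition of~$\Lambda$ (forcing $y_{\Lambda(i-1)}$ to be initially in $\chi[i-1]$) would require $\chi[i-1]=\chi[i]$.

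To rule out this final case, I would upgrade the argument to a global cumulative count. Iterating Claim~\ref{lem:ItemSwap} over all indices $h\le i$ with $\chi[h]=g:=\chi[i]$, the total number of core-token first-arrivals from big segments initially in~$g$ by time $p_i$ is at least $c\cdot\bigl(k(1-\tfrac{1}{3n})-O(k/n^7+k^{1/2+5/c})\bigr)$, where $c$ is the number of occurrences of $g$ in $\chi[1..i]$. The supply of such core tokens is at most $n_g(M_i)\cdot(k-2k^{1/2+3/c})$, where $n_g(M_i)$ counts the big segments initially in $g$ with index at most $M_i$. Since $n_g(M_i)\le n$, the slack $\tfrac{1}{3n}$ suffices (by integrality) to conclude $c\le n_g(M_i)$. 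Feeding this bound into the greedy definition of~$\Lambda$ via an induction on~$i$ shows that even in the corner case there must exist a fresh index $j\in(\Lambda(i-1),M_i]$ with $y_j$ initially in $\chi[i]=g$, completing the proof.

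The main obstacle is precisely this corner case $\chi[i-1]=\chi[i]$: a single partially-depleted big-segment core has size close enough to $k$ that, a priori, it could single-handedly supply the $k(1-\tfrac{1}{3n})$ tokens demanded by one application of Claim~\ref{lem:ItemSwap}, so the clean two-interval comparison is inconclusive. Resolving it requires comparing the total demand over \emph{all} exchanges on slot~$g$ against the total supply of big-segment core tokens initially in~$g$, leveraging both the additive slack $\tfrac{1}{3n}$ in Claim~\ref{lem:ItemSwap} and the integrality of the segment counts.
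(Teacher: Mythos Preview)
Your reduction to the corner case $\chi[i-1]=\chi[i]$ is correct and essentially matches the paper's setup. The gap is in how you resolve that corner case.

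Your global cumulative count establishes $c \le n_g(M_i)$, where $c$ is the number of occurrences of $g$ in $\chi[1..i]$ and $n_g(M_i)$ is the number of big segments in $g$ whose core has started by~$p_i$. But to produce a fresh index you need $n_g(M_i) > n_g(\Lambda(i-1))$, and nothing in your argument bounds $n_g(\Lambda(i-1))$ in terms of~$c$. The greedy definition of $\Lambda$ does \emph{not} force $\Lambda(h_1),\ldots,\Lambda(h_{c-1})$ (the $\Lambda$-values at the earlier occurrences of~$g$) to be the \emph{first} $c-1$ big segments in~$g$: whenever $\chi$ visits another slot $g'$ between two occurrences of $g$, the value $\Lambda$ jumps to a segment in $g'$ and can leap past arbitrarily many big segments that happen to lie in~$g$. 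So $n_g(\Lambda(i-1))$ can exceed $c-1$, and the inequality $c \le n_g(M_i)$ is then useless. Concretely, take $s=[1,1,2,2,1,1]$ and imagine $\chi=[2,1,1,1]$: one computes $\Lambda(1)=3$, $\Lambda(2)=5$, $\Lambda(3)=6$, and at $i=4$ you get $c=3 \le n_1(M_4)=4$ while $n_1(\Lambda(3))=n_1(6)=4$, so no fresh $j$ follows from your bound. (This particular $\chi$ is in fact impossible, but your argument does not detect that.)

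The paper avoids this by restricting attention to the \emph{maximal consecutive run} $\chi[i-\ell]=\cdots=\chi[i]=g$ rather than to all occurrences of~$g$. The point is that within such a run, each $\Lambda(i-\ell+r)$ is the big segment in $g$ immediately following $\Lambda(i-\ell+r-1)$, so $\Lambda(i-\ell),\ldots,\Lambda(i-1)$ are the \emph{first} $\ell$ big segments in $g$ after $\Lambda(i-\ell-1)$. Since $\chi[i-\ell-1]\ne g$, the segment $y_{\Lambda(i-\ell-1)}$ is not in $g$, so the demand of $(\ell+1)k(1-\tfrac{1}{3n})-\beta > \ell k$ core tokens from $g$ in $(p_{i-\ell-1},p_i]$ must be supplied by big segments in $g$ with index strictly above $\Lambda(i-\ell-1)$; as each core has fewer than $k$ tokens, at least $\ell+1$ such segments contribute, and the $(\ell+1)$st one lies beyond $\Lambda(i-1)$. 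Your global count loses exactly this consecutiveness structure, which is what makes the supply--demand comparison conclusive.
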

 
 \begin{proof}
 Fix $i\geq 1$. Suppose inductively that $\Lambda(i')$ exists for all $i'<i$. By the definition of how we update $\Lambda$, before $p_{i-1}$, at least one token from $C(y_{\Lambda(i-1)})$ has reached the root for the first time, and $y_{\Lambda(i-1)}$ is initially in slot gadget $\chi[i-1]$. By the core property, this means that before $p_{i-1}$, every token in the core of every big segment $y_{j'}$ with $j'<\Lambda(i-1)$ has reached the root for the first time. 
 
Now, we consider what happens between $p_{i-1}$ and $p_i$. By Claim~\ref{lem:ItemSwap}, between $p_{i-1}$ and $p_i$, a set $S$ of at least $k(1-\frac{1}{3n})$ tokens that were initially in slot gadget $\chi[i]$ reach the root for the first time. 

We analyze the composition of the tokens in $S$. Since $S$ contains only tokens that reached the root for the first time after $p_{i-1}$, $S$ contains no tokens in the core of any big segment $y_{j'}$ with $j'<\Lambda(i-1)$. By Lemma~\ref{lem:noncore}, less than $k^{1/2+5/c}$ tokens in $S$ are not in the core of any segment. Also, at most $\frac{k}{n^7}$ tokens in $S$ are in a padding segment since there are $n$ padding segments in each slot gadget and each one is of length $\frac{k}{n^8}$. The remainder of the tokens in $S$ are in the core of a big segment $y_{j'}$ with $j'\geq \Lambda(i-1)$. Thus, letting $\beta=k^{1/2+5/c}+\frac{k}{n^7}$, we have that $S$ contains at least $|S|-\beta$ tokens that are in cores of big segments $y_{j'}$ with $j'\geq \Lambda(i-1)$.
This does not suffice, however, because we are trying to identify a token in $S$ in the core of a big segment $y_{j'}$ with $j'> \Lambda(i-1)$, where the inequality is strict. 

If $i=1$, then $\Lambda(i-1)=0$ so it is trivially true that any big segment $y_{j'}$ has $j'>\Lambda(i-1)$. If $i>1$, then we need to bound the number of tokens in $S$ that are in $C(y_{\Lambda(i-1)})$. 
If $\chi[i]\not=\chi[i-1]$, then $S$ contains no tokens in $y_{\Lambda(i-1)}$ since $y_{\Lambda(i-1)}$ was initially in slot gadget $\chi[i-1]$ while every token in $S$ was initially in slot gadget $\chi[i]$.

Thus, we are concerned with the case where $\chi[i]=\chi[i-1]$. Let $\ell\geq 1$ be the largest integer such that $\chi[i]=\chi[i-1]=\dots =\chi[i-\ell]$. Then by Claim~\ref{lem:ItemSwap}, between $p_{i-\ell-1}$ and $p_{i}$, a set $S'$ of at least $(\ell+1) k(1-\frac{1}{3n})$ tokens that were initially in slot gadget $\chi[i]$ reach the root for the first time.  To complete the proof, we will show that at least one token in $S'$ is in the core of a big segment $y_{j'}$ with $j'> \Lambda(i-1)$. 

We can analyze the composition of $S'$ in the same way as we analyzed the composition of $S$ previously. $S'$ contains no tokens in the core of any big segment $y_{j'}$ with $j'<\Lambda(i-\ell-1)$. Also $S'$ contains at least $|S'|-\beta$ tokens that are in the core of a big segment $y_{j'}$ with $j'\geq \Lambda(i-\ell-1)$. Since $\chi[i-\ell]\not=\chi[i-\ell-1]$, $S'$ contains no tokens in $y_{\Lambda(i-\ell-1)}$. Thus, the number of tokens in $S'$ that are in the core of a big segment $y_{j'}$ with $j'> \Lambda(i-\ell-1)$ is at least 
\begin{align}\label{eqn}
|S'|-\beta&\geq (\ell+1) k(1-\frac{1}{3n})-\beta\notag\\
&= \ell k+k-\frac{\ell k}{3n}-\frac{k}{3n}-\beta\notag\\
    &\geq \ell k +k -\frac{k}{3}-\frac{k}{3n} -\beta \text{\hspace{5mm} since $\ell\leq n$}\notag\\
    &=\ell k +\frac{2k}{3}-\frac{k}{3n} -k^{1/2+5/c}-\frac{k}{n^7}\notag\\
    &\geq \ell k+1.
\end{align}

By the core property, the tokens in $S'$ that are in the core of big segments come from big segments whose initial positions are \emph{consecutive} within slot gadget $\chi[i]$. Let $y^*$ be the big segment with the smallest index larger than $\Lambda(i-\ell-1)$ out of the big segments that were initially in slot gadget $\chi[i]$. Since $S'$ contains no tokens in the core of any big segment $y_{j'}$ with $j'<\Lambda(i-\ell-1)$, $y^*$ is the lowest indexed big segment for which $S'$ could contain tokens from its core. If $S'$ contains any token in $C(y^*)$, then Equation~(\ref{eqn}) implies that $S'$ contains the entire core of the next $\ell-1$ consecutive big segments initially in slot gadget $\chi[i]$, as well as at least one token from the core of the next big segment $y^{**}$ from slot gadget $\chi[i]$ directly after (to the right of in the initial configuration) these $\ell-1$ consecutive big segments. Since $y^*$ is the lowest indexed big segment for which $S'$ could contain tokens from its core, if $S'$ does not contain a token in $C(y^*)$, then $S'$ still contains at least one token from the core of some big segment with index at least that of $y^{**}$.

Together, the core property and the fact that the rule for updating $\Lambda$ chooses the \emph{smallest} such $j$ imply  that $y_{\Lambda(i-\ell)},\dots,y_{\Lambda(i-1)}$ are big segments whose initial positions are consecutive within slot gadget $\chi[i]$. Furthermore, $\Lambda(i-\ell)$ is set to be the minimum value larger than $\Lambda(i-\ell-1)$ such that $y_{\Lambda(i-\ell)}$ is initially in slot gadget $\chi[i]$. That is, $y_{\Lambda(i-\ell)}=y^*$. Then, because $y_{\Lambda(i-\ell)},\dots,y_{\Lambda(i-1)}$ are consecutive big segments initially in slot gadget $\chi[i]$, we have that $y^{**}=y_{j'}$ for $j'>\Lambda(i-1)$.

We have shown that $S'$ contains at least one token from the core of some big segment with index at least that of $y^{**}$. Thus, $S'$ contains at least one token from the core of some big segment $y_{j'}$ with $j'>\Lambda(i-1)$, which completes the proof. 
\end{proof}

Therefore we have proved Lemma~\ref{lem:free-item-token}(2).
\end{proof}

}

\section{Known techniques preclude
approximation factors less than 2}

\label{sec:inapprox}

Previous results about sequential token swapping on trees include three different polynomial time 2-approximation algorithms and some lower bounds on approximation factors.
The  three algorithms all have the property that if a token at a leaf is at its destination (i.e., it is a ``happy leaf token'') then the algorithm will not move it.  
Biniaz et al.~\cite{tree-token-swapping} proved that any algorithm 
with  this  property
has a (worst case) approximation factor at least $\frac{4}{3}$.  They also proved via ad-hoc arguments that the approximation factor is exactly 2 for two of the known 
2-approximation algorithms  (the ``Happy Swap Algorithm'' and the ``Cycle Algorithm'').  For the third 2-approximation algorithm, the Vaughan-Portier algorithm, they could not prove  a lower bound better than $\frac{4}{3}$.

We prove that the Vaughan-Portier algorithm has approximation factor exactly 2. We also extend the approximation lower bounds of Biniaz et al.~by proving that the approximation factor is at least 2 for a larger family of algorithms. %
We formalize this family as follows.
For token $t$ let $P_t$ be the  path from $t$'s initial position to  its final position.  
A sequence of token swaps is \defn{$\ell$-straying} if at all intermediate points along the sequence, every token $t$ is within distance $\ell$ of the last vertex of $P_t$ that it has reached up to this point.
A token swapping algorithm is \defn{$\ell$-straying} if it produces $\ell$-straying sequences.

Both approximation lower bounds will be proved for the same family of trees that was used by Biniaz et al.~\cite{tree-token-swapping}.  For any $k$ and  any  odd $b$ we  define a tree $T_{k,b}$ together with initial and  final positions of tokens. The tree $T_{k,b}$ has $b$ paths of length $k$ attached to a central vertex $c$,  and a set $L$ of $k$ leaves  also attached to $c$. 
See Figure~\ref{fig:branch-example}.
The tokens at $c$ and $L$ are \defn{happy}---they are at their final positions. The tokens in branch $i, 0 \le i \le b-1$, have their final positions in  branch $i+1$, addition  modulo $b$, 
with the initial and final positions equally far from the center $c$.

\begin{figure}[htb]
    \centering
    \includegraphics[width=.35\textwidth]{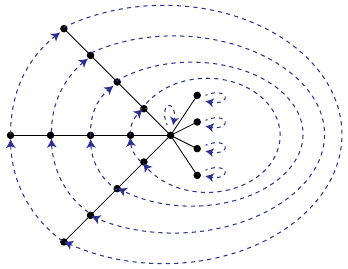}
    \caption{Tree $T_{k,b}$ with $b=3$ branches each of length $k=4$, and with $k=4$ leaves attached to the center node.  The dashed arrows go from a token's initial to final  position. The figure is from \cite{tree-token-swapping}.}
    \label{fig:branch-example}
\end{figure}

Biniaz et al.~\cite{tree-token-swapping} 
proved that the optimum number of swaps for $T_{k,b}$ is at most $(b+1) ({k+1 \choose 2} +2k)$.  The solution repeatedly exchanges the tokens in branch $i, 0 \le i \le b - 1 $, modulo  $b$,  with the tokens at $L$.  The first exchange moves the tokens initially at $L$ into branch 0, and the $(b+1)^{\rm st}$  exchange moves those tokens back to $L$.
We prove that for $T_{k,b}$ the approximation factor is not better than 2 for $\ell$-straying algorithms (Subsection~\ref{sec:ell-straying}) 
and for the Vaughan-Portier algorithm (Subsection~\ref{sec:Vaughan}). \ifabstract in the Appendix\fi.

\later{
\ifabstract
\section{Proofs that known techniques preclude $<2$-approximations}
\fi

\subsection{\texorpdfstring{$\ell$}{l}-straying algorithms do not achieve better than 2-approximation}
\label{sec:ell-straying}

Before proving the  main result, we compare $\ell$-straying algorithms to previous algorithms/properties.
It is easy to show that the Happy Swap Algorithm and the Cycle Algorithm are 1-straying   (see ~\cite{tree-token-swapping}  for descriptions  of these algorithms).  Thus our lower bound of 2 on the approximation factor replaces the ad hoc arguments  of Biniaz et al.~\cite{tree-token-swapping}.

We next compare the property of being $\ell$-straying to the  property of fixing happy leaves.
A swap sequence is \defn{minimal} if it never swaps the same two tokens more than once.  It is  easy to test this property and eliminate duplicate swaps. 

\begin{claim} A minimal 1-straying sequence does not move happy leaf tokens. 
\end{claim}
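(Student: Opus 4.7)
The plan is to argue by contradiction: assume the minimal $1$-straying sequence moves some happy leaf token $t$, and derive a repeated swap between the same pair of tokens, contradicting minimality. Let $v$ be $t$'s leaf position (which is also its destination, since $t$ is happy), and let $u$ be $v$'s unique neighbor in the tree. The crucial observation is that $P_t = \{v\}$ is a single vertex, so the $1$-straying hypothesis forces $t$ to occupy only $v$ or $u$ at every intermediate configuration.

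First I would isolate the very first swap of the sequence that involves $t$. Since $t$ starts at the leaf $v$, this swap must occur on the unique incident edge $(u,v)$; call the other token $s_1$. After this swap, $t$ sits at $u$ and $s_1$ sits at $v$. Next I would exploit the leaf structure to argue that, as long as $t$ remains at $u$, no other swap can alter the identity of the token currently at $v$: the only edge incident to $v$ is $(u,v)$, and any swap on that edge must involve whichever token is at $u$, namely $t$ itself. So between two consecutive swaps of $t$ on the edge $(u,v)$, the token sitting at $v$ is frozen, and it equals whatever $t$ left there when it departed.

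The final step closes the argument. Because $t$ is happy it must return to $v$ by the end of the sequence, and by $1$-straying the only way for $t$ to leave $u$ is a swap on $(u,v)$ taking $t$ back to $v$. Consider the first such return swap after $t$'s initial departure; by the previous paragraph the token at $v$ at that moment is still $s_1$, so $t$ and $s_1$ swap a second time, contradicting minimality. The main (indeed only) subtlety is making precise the claim that the token at $v$ cannot change while $t$ sits at $u$, which reduces to the observation that $v$ has no neighbor other than $u$.
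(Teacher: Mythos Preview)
Your proof is correct and follows essentially the same argument as the paper: both use the $1$-straying property to confine $t$ to $\{v,u\}$, observe that the token $s$ left at the leaf $v$ is trapped there until $t$ returns, and conclude that $t$ and $s$ must swap a second time, contradicting minimality. The only cosmetic difference is that you pick the \emph{first} swap involving $t$, whereas the paper picks an arbitrary one; this makes no material difference.
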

\begin{proof} Suppose there is a happy leaf vertex $v$ with a  token  $t$ that is at its destination. Let $u$ be the vertex  adjacent to $v$. The path $P_t$ from $t$'s initial to final vertex  consists only of vertex $v$ so in a 1-straying swap sequence, token $t$ can  only be at vertex $v$ or $u$.  Suppose that at some point  in the sequence there is a swap that moves $t$ to $u$ and moves a token, say $s$, from $u$ to $v$.  The next time token $t$ moves, it must move back to $v$, and since $v$  is a leaf, token $s$ is stuck at $v$ until then. %
Thus tokens $t$ and $s$ must re-swap on edge $uv$ at some later point in  the  sequence.  Removing the two swaps of tokens $t$ and $s$ yields a shorter equivalent 1-straying sequence.
\end{proof}

We now prove the main result of this section.

\begin{theorem} %
$\ell$-straying  algorithms do not achieve approximation factor less than 2.
More precisely, 
for every  small $\epsilon, \gamma > 0$, there exists an $n$ and an instance of token swapping on an $n$-node tree so that any $\ell$-straying algorithm for $\ell\leq n^{1-\gamma}$ cannot achieve an approximation factor better than $2-\epsilon$. 
\end{theorem}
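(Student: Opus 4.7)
The plan is to apply the theorem using the tree family $T_{k,b}$ defined above, choosing $b = \lceil 8/\epsilon\rceil$ as a constant depending on $\epsilon$, and $k$ sufficiently large that $n = \Theta(bk)$ satisfies $\ell \leq n^{1-\gamma} < k/2$. For constant $b$ we have $n^{1-\gamma} = \Theta(k^{1-\gamma}) = o(k)$, so the condition $\ell < k/2$ holds automatically for large $k$. Under these choices, the upper bound of Biniaz et al.~gives $\mathrm{opt}(T_{k,b}) \leq (b+1)\left(\binom{k+1}{2} + 2k\right) \leq (1 + \epsilon/4)\,(b+1)k(k+1)/2$ for $k$ large enough.

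The core claim to establish is that any $\ell$-straying algorithm on $T_{k,b}$ with $\ell < k/2$ uses at least $(1 - \epsilon/4)(b+1)k(k+1)$ swaps, approximately twice the optimum. Combined with the upper bound on opt, this gives an approximation ratio at least $2(1 - \epsilon/4)/(1+\epsilon/4) \geq 2 - \epsilon$ for small $\epsilon$. The intuition is that the optimum attains its efficiency by using the $L$-leaves as a buffer: the $b+1$ $L$-branch exchanges each cost only $\sim k^2/2$ swaps, because each $L$-token can freely travel deep into a branch and back. But when $\ell < k/2$, no happy token---neither an $L$-token nor the center token---can reach depth $> \ell$ in any branch, so the deep portion of each branch must be cyclically shifted using only branch tokens. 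This forces essentially a bubble-sort-like cascade through the congested neighborhood of $c$, roughly doubling the number of swaps required.

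To prove the lower bound, I would first establish that positions at depth $> \ell$ in any branch can contain only branch tokens during any $\ell$-straying execution: the $L$-tokens (whose $P_t$ is a single leaf) can stray only to depth $\leq \ell - 1$ in branches, and the center token (whose $P_t = \{c\}$) can stray only to depth $\leq \ell$. Consequently, the subproblem on deep positions is a cyclic shift of $b$ deep columns of length approximately $k$, realized using only branch tokens passing through the shallow region near $c$. The key technical step is a lower bound of approximately $bk^2$ swaps for this restricted cyclic shift. This can be approached via an edge-based argument: for each deep edge $e_{i,j}$ with $j > \ell + 1$, all swaps at $e_{i,j}$ must be between branch-$i$ tokens exiting and branch-$(i-1)$ tokens entering, and coordinating these matchings globally---given the single bottleneck at $c$ and the lack of any side-buffer---forces a number of swaps close to $\sum_{t} d_t = bk(k+1)$ rather than the trivial $\sum_t d_t/2$ bound achievable with full $L$-buffering.

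The main obstacle I anticipate is rigorously showing that this deep cyclic shift indeed forces approximately $bk^2$ swaps, since the $\ell$-straying algorithm retains some freedom to use the shallow region (including $L$-leaves and depths $\leq \ell$) to shortcut portions of the rearrangement. Addressing this requires carefully accounting for when a swap can be doubly productive (simultaneously moving two tokens toward their destinations) versus merely singly productive, and showing that in the absence of deep buffering, the fraction of doubly productive swaps is bounded away from~$1$. I expect that a potential function tracking the angular progress of each deep token around the cyclic structure of the branches, combined with the $\ell$-straying constraint limiting how far happy tokens can stray, will provide the needed bookkeeping to generalize the ad-hoc lower bound arguments that Biniaz et al.~gave for specific algorithms such as Happy Swap and Cycle.
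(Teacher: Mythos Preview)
Your high-level strategy and parameter choice are sound --- in fact, taking $b$ constant (of order $1/\epsilon$) is simpler than the paper's choice $b=k^\delta$ and works just as well, since with constant $b$ and $\ell\le n^{1-\gamma}=O(k^{1-\gamma})=o(k)$ the error terms vanish. Where your proposal has a genuine gap is exactly where you flag it: the lower bound on the number of swaps. Your plan to track ``angular progress'' via a potential function is vague, and the target bound $(1-\epsilon/4)(b+1)k(k+1)$ is slightly stronger than what is actually needed or proved.

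The paper closes this gap with a direct combinatorial count, splitting swaps into two disjoint types and lower-bounding each by roughly $\tfrac12 bk^2$. \emph{Type-A swaps} are between two tokens originating in the same branch $B$: a token $t$ starting at depth $i$ must, on its way through the center and out the next branch, swap with every one of the $i-1$ same-branch tokens that started closer to the center, except possibly those currently ``hiding'' within distance $\ell$ of the center. Since only about $b\ell$ non-happy tokens can hide there (the $k+1$ happy tokens always occupy the center neighborhood), each $t$ contributes at least $i-b\ell$ type-A swaps, summing to $\frac12 b(k-b\ell)^2$. \emph{Type-B swaps} are between tokens originating in different branches: the tail of each branch (the deepest $k-2\ell$ vertices) can only ever hold tokens from $I(B)\cup F(B)$ --- exactly your observation that happy tokens cannot reach deep positions --- and each of the $k-2\ell$ incoming $F(B)$ tokens, upon first entering the tail, must swap with all $I(B)$ tokens still there, giving another $\frac12 b(k-2\ell)^2$. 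Together this yields ${\rm ALG}\ge b(k-b\ell)^2$, which with constant $b$ and $\ell=o(k)$ is $bk^2(1-o(1))$. Against ${\rm OPT}\le\frac12(b+1)k^2(1+o(1))$ the ratio is $\frac{2b}{b+1}(1-o(1))\ge 2-\epsilon$ once $b\ge 2/\epsilon$ and $k$ is large. No potential function is needed; the two counts are elementary once you isolate the right ``hiding set'' and ``tail'' regions.
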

\begin{proof}
We will prove a lower bound on  the number of swaps used by  any   $\ell$-straying algorithm on $T_{k,b}$.
Then we will  choose values of $b$ and $k$ to get the ratio of the algorithm versus the optimum arbitrarily close to 2.

We count two kinds of swaps performed by  an $\ell$-straying algorithm: type-A swaps between tokens with initial positions in the same branch; and type-B swaps between tokens with initial positions in different branches.

\begin{lemma}
\label{lemma:typeA}
The number of type-A swaps is at least $\frac{1}{2}  b (k-b\ell)^2$.
\end{lemma}
\begin{proof} We count  the number of type-A swaps for one branch  $B$.
Fix a token $t$  whose initial position is branch $B$ and at distance $i$ from the center, $1 \le i \le k$.  Let $S(t)$ be the set of tokens whose initial position is in branch $B$ and closer to the center than $t$'s initial position.  Then $|S(t)| = i-1$.  Intuitively, we can think of $S(t)$ as a set of tokens that $t$ must swap with if we restrict to $t$'s initial and final branches.
Let $H$ be the set of vertices within distance $\ell$ of the center.   Observe that $|H| = b\ell + k + 1$. Let $H(t)$ be the tokens at vertices of $H$ at the point in time right before $t$ moves from the center vertex into  its final branch for the first time. 
Intuitively, the only tokens of $S(t)$ that can ``hide'' and avoid swapping with $t$ are in   $H(t)$.  

\begin{claim}
\label{claim:same-branch-swaps}
Token $t$ swaps with every token in $S(t) \setminus H(t)$.
\end{claim}

Before proving the claim we work out the bounds. 
The token initially at the center and the tokens at the $k$ leaves attached to the center were happy initially  so they must stay within distance $\ell$ of their initial positions, and thus within distance $\ell$  of the center, which  means that these tokens lie in $H(t)$. Also $t$ lies in $H(t)$.  Thus $|S(t) \cap  H(t)| \le b\ell -1$, and $|S(t) \setminus H(t)| \ge (i-1) - (b\ell-1) = i - b \ell$. 
Then the number of swaps per branch is at least 
$$\sum_{i=b\ell}^{k} (i-b\ell) = \sum_{i=0}^{k-b\ell} i  = \frac{1}{2}(k-b\ell +1)(k-b\ell).$$
Summing over all $b$ branches gives a total of at least $\frac{1}{2} b (k-b\ell )^2$ type-A swaps.

\begin{proof}[Proof of Claim~\ref{claim:same-branch-swaps}]
Fix a token $t' \in  S(t) \setminus H(t)$.  Note that $t'$ always lies in its initial or final branch or at a vertex of $H$.  At the point in time right before $t$ moves from the center vertex into  its final branch for the first time, $t'$ is outside $H$, so it must be more than distance $\ell$ from the center and either: (1) in its initial branch; or (2) in its final branch.
In case (1) $t'$ has never reached the center by the definition of $\ell$-straying.    But $t$ has reached the center, and was initially further from the center than $t'$, so $t$ and $t'$ must have swapped in the past.
In case (2) $t'$ 
will never leave its final branch in the future by the definition of $\ell$-straying. Meanwhile, $t$ is currently closer to the center than $t'$ and will end farther from the center than $t'$ and on the same branch. This means that $t$ and $t'$ must swap in the future. 
\end{proof}

This completes the proof of Lemma~\ref{lemma:typeA}. 
\end{proof}

\begin{lemma}
\label{lemma:typeB}
The number of type-B swaps is at least
$\frac{1}{2}b(k-2\ell)^2$.
\end{lemma}
\begin{proof}
Fix a branch $B$, and consider its \defn{tail} $T$ which consists of its last (furthest from the  center) $k-2\ell$ vertices.
Let $I(B)$ [$F(B)$] be the set of tokens whose initial [final, respectively] position is in the branch, and let $I(T)$ [$F(T)$] be the set of tokens whose initial [final] position is in the tail. If a token $t$ outside $I(B)$ enters the tail at some point in time, then, by the definition of $\ell$-straying, $t$'s final position 
cannot be outside $B$, nor among the first $\ell$ vertices of $B$.  This implies that,  
from this time on, $t$ must stay in $B$. 
Thus at all points in time, the  tokens in the tail are a subset of $I(B) \cup F(B)$.  
Every one of the $k  - 2\ell$ tokens in $F(T)$ must enter the tail at some point in time.  (Note that other tokens of $F(B)$ may  enter the tail and then leave the  tail.)   Let $t_1, t_2, \ldots, t_{k-2\ell}$ be a list of the first $k-2\ell$ tokens of $F(B)$ 
that enter the tail, ordered by their first entry time.   When  $t_i$ is about to enter the tail, there are at most $i-1$ tokens of $F(B)$ in the tail, so there are at least $k-2\ell-(i-1)$ tokens of $I(B)$ in  the tail.    Furthermore,  $t_i$ must swap with all of them, since they leave $B$ and $t_i$ stays inside $B$ from this point on.  
Thus the total number of swaps between a token of $F(B)$ and a token  of $I(B)$ is at least 
$$\sum_{i=1}^{k-2\ell} (k-2\ell-(i-1)) = \sum_{j=1}^{k-2\ell}  j  = \frac{1}{2}(k-2\ell+1)(k-2\ell).$$ 
The total over all $b$ branches is then at least  $\frac{1}{2} b (k  - 2\ell)^2$.

\end{proof}

To complete the proof of the theorem we calculate the approximation factor for $T_{k,b}$. 
Let ALG be the  number  of swaps performed by an $\ell$-straying algorithm, and  let OPT be the optimum number of swaps.
By the above two lemmas, ${\rm ALG} \ge \frac{1}{2}b (k-b\ell)^2 + \frac{1}{2}b(k-2\ell)^2  \ge b(k-b\ell)^2$ for $b \ge 2$. 
Biniaz et al.~\cite{tree-token-swapping} proved that ${\rm OPT} \le (b+1) ({k+1 \choose 2} +2k)$. 

Choose  $b =  k^\delta$ for any small $\delta$, $0 < \delta < \frac{1}{2}$.
Then ${\rm OPT} \le \frac{1}{2} k^{2+\delta} + O(k^{2})$ and ${\rm ALG} \ge k^{2+\delta} -  O(k^{1 + 2 \delta}\ell)$.
This gives ${\rm ALG}/{\rm OPT} \ge 2 - O(\frac{1}{k^\delta})$. 

To prove the  stronger claim in the theorem, let $\epsilon,\gamma>0$ be given. Pick $k$ large enough so that $\epsilon$ is bigger than $24/k^\delta$. Let $b=k^\delta$ as before.
However, now pick $\delta$ in terms of $\gamma$, 
so that $k^{1-2\delta}=(kb)^{1-\gamma}=k^{1+\delta-\gamma-\delta\gamma}$, and hence $\delta=\gamma/(1.9-\gamma)$ suffices.
\end{proof}

The fact that the lower bound on the approximation factor holds even if $\ell$ grows almost linearly in the size of the tree is interesting because it means that any approximation algorithm that performs better than the current best-known algorithms must be drastically different.

\subsection{The Vaughan-Portier algorithm does not achieve better than 2-approximation}
\label{sec:Vaughan}

The previous section shows that any $\ell$-straying algorithm for token swapping on a tree cannot achieve an approximation factor better than 2.
Of the three known approximation algorithms, two of them---the Happy Swap Algorithm and the Cycle Algorithm---are 1-straying. (This follows easily from  the descriptions of those algorithms  in ~\cite{tree-token-swapping}.)
The third algorithm, the Vaughan-Portier algorithm, does not have this property, since tokens can stray far from their shortest paths. In this section we prove that the Vaughan-Portier algorithm cannot achieve an approximation factor better than 2.

\subsubsection{Description of the Vaughan-Portier algorithm}

Most token swapping algorithms find the sequence of swaps in order  from first to last, i.e., based on the initial and final token configurations, the algorithm decides on the first swap to be made.  This swap changes the initial token configuration, and the algorithm is applied recursively to the new initial configuration and the unchanged final configuration.  The Vaughan-Portier algorithm is different.  It exploits the symmetry that the  reverse of a token swapping sequence changes the final configuration to the initial one. In particular, her algorithm may add a swap to  either the start or the end of the sequence. If a swap is added at the start of  the sequence, the initial configuration changes.   If a swap  is added at  the end  of the sequence, the final configuration changes.  In  either case, the  algorithm is applied recursively.

We describe and analyze her algorithm by thinking of each token as a pair of tokens, $t,t_f$, where $t$ is our usual token and $t_f$ is a ``destination token'' that is placed initially on $t$'s destination node.  The goal is to get $t$ and $t_f$ to the same node of the tree.  At any point  in time any tree node has one token and one destination token.
Swapping adjacent tokens $s$ and $t$ is the usual  operation.  Swapping adjacent destination tokens $s_f$ and $t_f$ means that the \emph{last} swap in the sequence will be of tokens $s$ and $t$. 

\paragraph*{Notation:} If a token and its destination are at the same node we say that the token and the destination token are \defn{happy}. We say that an unhappy token \defn{wants} to move along its incident edge towards its destination token. Symmetrically, we say that a destination token \defn{wants} to move along its incident edge towards its corresponding token.

The Vaughan-Portier algorithm uses the following three  operations.  See also Figure~\ref{fig:Vaughan-swaps}.
\begin{description}
\item[Happy swap.] If $s$ and $t$ are two adjacent tokens that want to move towards each other, then 
output the swap $(s,t)$ and recursively solve the problem that results from swapping $s$ and $t$.
\item[Happy destination swap.] If $t_f$ and $s_f$ are two adjacent destination tokens that want to move towards each other, then  recursively solve the  problem that results from swapping $t_f$ and $s_f$, and then output the swap $(s,t)$.
\item[Symmetric shove.]  Suppose $u$ and $v$ are adjacent nodes, where $u$ has  token $t$ and destination token $r_f$, and $v$ has token $s$ and destination $s_f$ (so $s$ and $s_f$ are happy). Suppose that $t$ and $r_f$ both want to move towards $v$.   
Then output the swap $(s,t)$, recursively solve the problem that results from swapping  
$s,t$ and swapping $r_f,s_f$, then output the swap $(r,s)$.  
\end{description}

\begin{figure}[h]
    \centering
    \includegraphics[width=.7\textwidth]{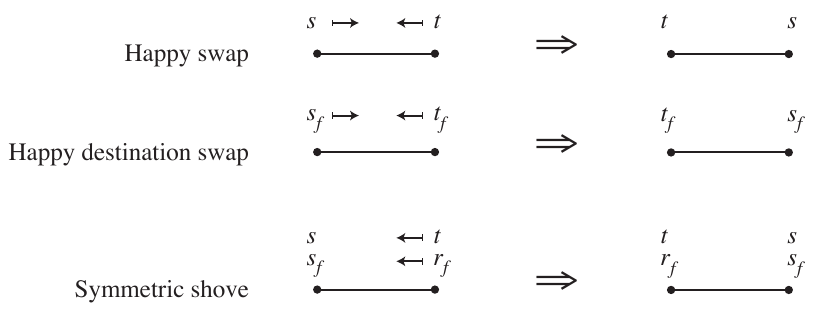}
    \caption{The three operations of the Vaughan-Portier algorithm.  An arrow beside a [destination] token indicates the direction it wants to move.}
    \label{fig:Vaughan-swaps}
\end{figure}

The Vaughan-Portier algorithm applies these operations in order.  If there is a happy  swap or a happy  destination swap that can be  performed, do  so.   Otherwise perform a symmetric shove---she  proves that one exists in this situation.   Then repeat.

The Vaughan-Portier algorithm has several important properties:
\begin{description}
\item{(P1)} Once a token is happy it stays happy.
\item{(P2)} Unhappy tokens and destination tokens only move in the direction they want to move (while happy tokens can move arbitrarily). 
\item{(P3)} If $\sigma$ is a sequence of swaps returned by the algorithm then the reverse of $\sigma$ is a possible output of the algorithm for the  instance where 
tokens and their destinations are flipped, i.e., each pair $t, t_f$ is replaced by  $t_f,  t$.
\end{description}

Property (P1) is true because the only operation that involves a happy token is a symmetric shove and the happy token $s$ remains happy. 

Property (P2) is true because every operation that moves an unhappy [destination] token moves it in the direction it wants to go.

Property (P3) is true because in the  reversed problem happy  swaps and happy  destination swaps exchange roles, and a symmetric shove remains a symmetric shove.

\begin{remark} We note that the Vaughan-Portier algorithm is not $\ell$-straying even if $\ell$ is $(n-1)/2$, and even if the tree is just a path. Say $n=2k+1$ is odd. The tokens in consecutive order on the path are $k+2,k+3,\ldots,2k+1,k+1,1,2,\ldots,k$, and the target of token $i$ is position $i$ of the path. Token $k+1$ is happy. A valid application of the Vaughan-Portier algorithm
repeats the following for each $i$ from $k$ down to $1$: perform a symmetric shove of token $k+1+i$ with token $k+1$, and then use only happy swaps to move token $k+1+i$ to position $k+1+i$. 
This causes token $k+1$ to end up in position $1$, which is at distance $(n-1)/2$ from its desired destination and path.
\end{remark}

\subsubsection{The Vaughan-Portier algorithm on \texorpdfstring{$T_{k,b}$}{T\_\{k,b\}}}

\begin{theorem}
\label{thm:Vaughan-bound}
The approximation factor of the Vaughan-Portier algorithm is not less than 2.
\end{theorem}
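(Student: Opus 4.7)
The plan is to use the same tree family $T_{k,b}$ as in Subsection~\ref{sec:ell-straying}, but to exploit a different weakness of Vaughan's algorithm: it never uses the $k$ happy leaves of $L$ as a buffer, and must therefore perform a direct rotation whose cost is asymptotically twice Biniaz et al.'s leaf-exploiting upper bound on $\mathrm{OPT}$.

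I would first show that in any execution of Vaughan on $T_{k,b}$ no token at a leaf of $L$ ever moves. Each leaf $\ell\in L$ begins algorithm-happy and remains so by (P1), and none of Vaughan's three operations can fire at the edge $(\ell,c)$. A happy swap or happy destination swap needs the happy token or destination at $\ell$ to want to move, which it does not. A symmetric shove with $\ell$ as $u$ would need $\ell$'s token to want to move to $c$, but again $\ell$'s token is happy. A symmetric shove with $\ell$ as $v$ would need the destination token at $c$ to want to move to $\ell$, which never happens since only $\ell$'s own stationary token has $\ell$ as its destination. Hence all activity is confined to the spider $T' = T_{k,b}\setminus L$.

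The core of the argument is to count Vaughan's output swaps level by level on $T'$. By~(P2), each unhappy branch token moves monotonically along its unique shortest path, so a token initially at branch-position $i$ contributes exactly $2i$ forward moves, for a total of $bk(k+1)$ forward moves across all branch tokens. Consider the level-$j$ rotation cycle: $b$ tokens each make $2j$ forward moves, i.e.\ $2jb$ forward moves in total. Each such move is produced either by a happy swap (one output swap, two forward moves, since both participants are level-$j$ tokens) or by the front or back swap of a symmetric shove (two outputs per shove, contributing at most two forward moves---one for $t$ in the front and at most one for $r$ in the back). Happy swaps at level $j$ can only occur between consecutive level-$j$ tokens in the cycle, because lower-level happy tokens and the center token refuse to swap and by minimality each pair swaps at most once; moreover the wrap-around pair cannot happy-swap either, since whenever they become adjacent their wanted directions oppose. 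So the number $H$ of level-$j$ happy swaps is at most $b-1$. If $S$ is the number of level-$j$ shoves, then $2H+2S\ge 2jb$, giving $S\ge jb-H\ge b(j-1)+1$; the number of output swaps at level $j$ is thus $H+2S = 2jb-H \ge b(2j-1)+1$. Summing, $V(k,b) \ge \sum_{j=1}^{k}[b(2j-1)+1] = bk^2+k$.

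Combining with $\mathrm{OPT}(T_{k,b}) \le (b+1)\bigl(\binom{k+1}{2}+2k\bigr)$ from Biniaz et al.~\cite{tree-token-swapping} and letting $b,k\to\infty$ (for example $b=k^{\delta}$ with $0<\delta<1$), the ratio $V/\mathrm{OPT}$ tends to $2$, proving the theorem. The main obstacle will be making the per-level shove count fully robust to Vaughan's nondeterminism: the algorithm has multiple shove and happy(-destination) swap choices at each step, back-swaps of one shove can serve as forward moves of tokens at different levels, and both the cap $H\le b-1$ and the forward-move equation have to be checked under all possible interleavings of the $k$ rotations.
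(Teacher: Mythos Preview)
Your observation that the leaves of $L$ never move under Vaughan's algorithm is correct and cleanly argued. However, the core of your plan---the per-level accounting---has a genuine gap.

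You assert that a happy swap contributing a forward move to a level-$j$ token must have \emph{both} participants at level $j$. This is false. Consider a level-$3$ token $t$ from branch $i$ that has already advanced into its destination branch $i{+}1$ and currently sits at position~$1$ there, wanting to move deeper. A level-$2$ token $s$ originally from branch $i{+}1$, still at position~$2$ and wanting to move toward the center, is adjacent to $t$ and wants to move toward it. This is a legal happy swap between tokens of different levels. Once cross-level happy swaps are possible, the equation $2H+2S\ge 2jb$ no longer counts what you want: a single happy swap may contribute one forward move to level $j$ and one to level $j'$, so your bound $H\le b-1$ (which only concerns same-level pairs) does not control the total number of happy swaps touching level~$j$. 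The whole per-level lower bound $H+2S\ge b(2j-1)+1$ collapses.

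A second omission: happy \emph{destination} swaps also produce output swaps (one each), and their output swap likewise moves two unhappy tokens forward. They must appear in your accounting on the same footing as happy swaps, yet they are absent from your case split.

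The paper avoids both problems by working globally rather than level by level. It tracks the potential $D=\sum_t d(t,t_f)$: happy swaps and happy destination swaps each decrease $D$ by~$2$ at the cost of one output swap, while a symmetric shove decreases $D$ by~$2$ at the cost of two output swaps. The crux is a structural lemma (using Claims about the relative order of $t,s,s_f,t_f$ along their common path) showing that in $T_{k,b}$ \emph{every} happy swap makes one of its two tokens happy. Since there are only $bk$ initially unhappy tokens, there are at most $bk$ happy swaps, and by the symmetry property (P3) at most $bk$ happy destination swaps. The remaining $bk(k+1)-4bk=bk(k-3)$ of the drop in $D$ must come from symmetric shoves, giving at least $bk(k-3)$ output swaps. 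Your level-by-level decomposition is not needed, and the obstacle you flag at the end---robustness to Vaughan's nondeterminism---is exactly what the global ``each happy swap creates a happy token'' lemma handles in one stroke.
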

\begin{proof}
We will show that on  the tree $T_{k,b}$ the  number of swaps, ALG, performed by the Vaughan-Portier algorithm is at least $bk(k-3)$ swaps.
Recall that the optimum number of  swaps, OPT, is at most $(b+1)({{k+1} \choose 2} +2k) $.
For $k=b$, we get  ${\rm OPT} \le \frac{1}{2} k^3 + \Theta(k^2)$, and ${\rm ALG} \ge k^3 - \Theta(k^2)$. 
Then ${\rm ALG} / {\rm OPT} \ge 2 - \Theta(\frac{1}{k})$, which proves the theorem.  

It remains to  show that the Vaughan-Portier algorithm performs at least $bk(k-3)$ swaps on $T_{k,b}$. 
We will consider the dynamically changing quantity $D=\sum_t d(t,t_f)$, that is, the sum over all tokens $t$ of the distance from the current position of $t$ to the current position of the destination token $t_f$. Note that a happy swap and a happy destination swap both decrease $D$ by 2. A symmetric shove also decreases $D$ by 2, however it performs two swaps so a symmetric shove decreases $D$ by an average of 1 per swap. Since happy swaps and happy destination swaps are ``cheaper'' in this sense than symmetric shoves, to prove a lower bound on the number of swaps, we will prove an \emph{upper bound} on the number of happy swaps and happy destination swaps. Specifically, we will prove that the total number of happy swaps is at most $bk$.

\begin{lemma}
\label{lemma:happy-swap-bound}
The total number of happy swaps is at most $bk$.
\end{lemma}

We first show that this lemma gives the bound we want.  By Property (P3) the lemma also  
implies that the total number of happy destination swaps is at most $bk$.
Now consider the quantity $D$ defined above. Initially, $D=b\sum_{i=1}^k 2i=bk(k+1)$ and at the end, $D=0$. Since each happy swap and each happy destination swap decreases $D$ by 2, these two operations can each decrease $D$ by a total of at most $4bk$. Thus, symmetric shoves account for decreasing $D$ by at least $bk(k+1)-4bk=bk(k-3)$. Since each swap in a symmetric shove decreases D by an average of 1, the total number of swaps is at least $bk(k-3)$, as required.

It remains to prove 
Lemma~\ref{lemma:happy-swap-bound}.  We begin with two claims about the relative ordering of tokens and their destination tokens during the course of the Vaughan-Portier algorithm on the tree of branches.

\noindent
\textbf{Notation:} For any token $t$, let $Q_t$ be the path consisting of the branch initially containing $t$ followed by the center node followed by the branch initially containing $t_f$.

\begin{claim}
\label{claim:ordering}
If $t$ is unhappy, then $t$ and $t_f$ are on path $Q_t$, and appear in the order $t,t_f$.  
\end{claim}
\begin{proof}
By Property (P2), since $t$ is unhappy, so far $t$ and $t_f$ have only moved in the direction they want to go, which is along $Q_t$. Thus, $t$, and $t_f$ are both on $Q_t$. Note that $t$ initially appears before $t_f$ on $Q_t$. We will show that $t$ and $t_f$ have not switched order. Since $t$ and $t_f$ have never left $Q_t$, the only way for them to have switched order is for them to have swapped or for them to have arrived at the same node. It is not possible for $t$ and $t_f$ to swap because every swap either involves two tokens or two destinations. If they arrive at the same node then $t$ becomes happy and, by Property (P1), stays happy, contradicting our assumption that $t$ is unhappy. 
\end{proof}

\begin{claim}
\label{claim:pair-ordering}
Suppose tokens $t$ and $s$ have the same initial branch, with $t$ further from the center.  If $t$ and $s$ are unhappy, then their  order along the path $Q_t=Q_s$ is $t, s, s_f, t_f$.  
\end{claim}
\begin{proof}
Initially, the  tokens appear in  the  order $t,s,s_f,t_f$ along the  path.  
By  Claim~\ref{claim:ordering}, $s$ and $s_f$ maintain that order while $s$ is unhappy.  Thus, the only  way the ordering could change is if $t$ and $s$  change order, or $s_f$ and  $t_f$ change order.   Consider $t$ and $s$.  They cannot lie at the same node, so  the only way  they can  change order is by swapping, but this involves $s$ moving away from $s_f$.  Similarly, if $s_f$ and $t_f$ change order then  $s_f$ moves away from $s$.  Thus the four tokens  remain  in their  initial order.
\end{proof}

In order to prove Lemma~\ref{lemma:happy-swap-bound} (that there are at most $bk$ happy swaps), we will prove that during every happy swap, one of the tokens involved becomes happy.  Since there are $bk$ unhappy  tokens in the initial  configuration, the result follows.  Thus it remains to  prove:

\begin{lemma}
\label{lemma:every-swap-happy}
Suppose there is a happy  swap  of tokens $t$ and $s$, with $s$ closer to  (or at) the center.  Then the swap causes $s$ to  become happy.
\end{lemma}
\begin{proof}
Suppose $t$ and $s$ are at nodes $u$ and $v$, respectively and that $u$ is part of branch $B$.  Node $v$ is either part of branch $B$ or the center node. 
Suppose the destination token at  $u$ is $r_f$.  Note that $r \ne t$ since $t$ is not happy.  
We aim to  prove that $r=s$ which implies that $s$ and $s_f$ are at the same node after the swap.   

If $r_f$ wants to move away  from the  center, then $r$ must lie in branch  $B$, further from the center.  Then tokens $r$ and $t$ both had their initial positions in $B$.  Along the path  $Q_r=Q_t$ we find $r$, then $t,r_f$ at the  same node, then $t_f$, a contradiction to Claim~\ref{claim:pair-ordering}.

Thus $r_f$ wants to move towards the center. Suppose $r \ne s$.  Because $s$ and $r$ both have their destinations in this branch, they had a  common initial branch.
Along the path $Q_r=Q_s$, the ordering is $r,s,r_f,s_f$, a contradiction to 
 Claim~\ref{claim:pair-ordering}.
 Therefore $r=s$.
\end{proof}
This completes the proof of Theorem~\ref{thm:Vaughan-bound}.
\end{proof}

}

\later{\section{Parallel token swapping on trees is NP-complete}
\label{sec:parallel}

The parallel token swapping problem is like sequential token swapping, except that swaps that do not involve the same token can occur simultaneously ``in parallel''. In particular, time is measured in rounds where in each round a matching of adjacent tokens
is selected and all of those swaps then occur simultaneously during that round. Instead of asking how many swaps are required to bring the tokens to their target destinations, we now ask how many rounds are needed.

Precisely, in the \defn{parallel token swapping} problem, the input consists of a graph $G$, an integer $K$, and a permutation $\pi$ on the vertices of $G$. Place a unique token on every vertex of $G$. During a \defn{round}, we choose a matching in $G$ and then for every edge in the matching, \defn{swap} the tokens at the endpoints. In the \defn{parallel token swapping} problem, the goal is to determine whether it is possible to rearrange the tokens via a sequence of $K$ rounds such that for every vertex $v$ in $G$, the token starting in $v$ ends in $\pi(v)$.

In this section, we will prove that the parallel token swapping problem remains hard even when the input graph is restricted to be a particular type of tree: a subdivided star. A \defn{subdivided star} is a graph consisting of any number of paths all joined at a single endpoint which we will call the \defn{root}. See Figure~\ref{figure:star_example} for an example. We will prove this by reducing from Star STS. As in previous sections, since we are reducing from a token swapping problem to another token swapping problem, we will introduce a difference in terminology: as before, we will call the tokens from the Star STS instance \defn{items}, and we will call the leaves of the graph in the Star STS instance \defn{slots}. 

\begin{figure}[!htb]
  \centering
  \includegraphics[scale=.4]{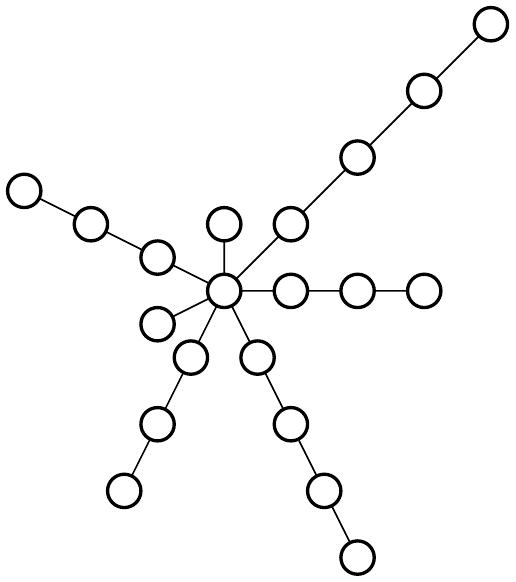}
  \caption{An example of a subdivided star}
  \label{figure:star_example}
\end{figure}

In Section~\ref{sec:parallel-intuition}, we will lay out the intuition behind the reduction. Then in Section~\ref{sec:parallel-reduction}, we will formally describe the reduction. Next, in Section~\ref{sec:parallel-forward} we will show how to solve the parallel token swapping instance produced by the reduction when the input Star STS instance was possible to solve. Finally, in Section~\ref{sec:parallel-backward} we will show the reverse direction: that any solution to a parallel token swapping instance produced by the reduction can be used to construct a solution to the input Star STS instance. Together, these last two sections combine to prove that the reduction is answer-preserving, allowing us to conclude:

\begin{theorem}
\label{thm:parallel-NP-hard}
Parallel token swapping on trees in NP-hard.
\end{theorem}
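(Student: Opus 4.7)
The plan is to reduce from Star STS to parallel token swapping on a subdivided star, exploiting the fact that the root is the unique high-degree vertex and therefore a severe bottleneck: at most one edge incident to the root can appear in any matching, so at most one token can cross between different arms in each round. I would construct a subdivided star with $m+1$ arms---one \emph{ordering arm} whose length is proportional to $n$ (the length of the Star STS sequence) and $m$ \emph{slot arms}, one per slot of the Star STS instance. As in the sequential reductions, \emph{item tokens} are placed at distinguished positions (one near the end of each slot arm at a ``nook''-like vertex, and one initially at the root), so that swaps among item tokens near the root correspond exactly to Star STS swaps.

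The core new ingredient, alluded to in the introduction, will be a collection of \emph{enforcement tokens}: tokens placed along the arms whose start and target vertices are chosen so that in any round-budget of exactly $K$ rounds each enforcement token has zero slack---it must move one step toward its destination in essentially every round, and in particular it must cross the root at a specific round. The enforcement tokens will be designed so that in round $j$ the root-edge that must carry an enforcement-token swap lies at the base of slot arm $s_j$. This leaves a window in round $j$ during which the only item-token interaction permitted at the root is on the slot-$s_j$ edge, exactly mimicking the $j$-th allowed Star STS swap. The parameter $K$ will be set tightly: roughly $n$ particular rounds are the only rounds during which an item-token swap at the root is possible, and those rounds correspond to the positions $1, \ldots, n$ of the given sequence.

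For the forward direction, given a subsequence of $s_1, \ldots, s_n$ realizing $\pi$ in Star STS, I would explicitly describe the matching used in each round: enforcement tokens advance one step along their forced trajectories; on arms not occupied by enforcement activity, padding-style tokens shift one step in parallel; and in round $j$ an item-token swap at the slot-$s_j$ root edge is inserted if and only if swap $s_j$ is chosen in the Star STS solution. A direct count of distances traveled by all non-item tokens will show this schedule fits in $K$ rounds, while the item tokens reach their target positions because of the chosen Star STS subsequence.

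The main obstacle, as in the sequential reductions, is the backward direction: showing that any $K$-round parallel schedule must induce a subsequence of $s_1, \ldots, s_n$ realizing $\pi$. The plan is a rigidity argument in two stages. First, a global accounting of the lower bound $\sum_t d_t$ (where $d_t$ is the start-to-target distance of token $t$) plus the bottleneck at the root will show that every enforcement token must move monotonically toward its target in every round, and in particular each enforcement token crosses the root in exactly its intended round. This pins down which root-edge is in the matching in each round $j$ up to the single degree of freedom of whether it carries an item-token swap or a forced enforcement swap whose shape is predetermined. Second, once root usage is locked in, reading off the rounds in which an item swap occurs yields a subsequence of $s_1, \ldots, s_n$, and the final item configuration is exactly $\pi$, so this subsequence solves Star STS. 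The hard part will be designing the enforcement tokens and their arm lengths so that this rigidity holds: one must rule out globally clever schedules where many tokens simultaneously detour or wait, which is the parallel analog of the ``contrary move'' counting machinery in the sequential proof.
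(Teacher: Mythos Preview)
Your high-level plan is on the right track---reduce from Star STS, build a subdivided star, and use zero-slack enforcement tokens to congest the root so that item-token transitions through the root are tightly controlled. However, the specific construction you sketch has a structural flaw that the paper identifies and works around with a genuinely new idea you are missing.

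The problem is your fixed correspondence between slots and branches. You propose $m$ slot arms, one per slot, and enforcement tokens that force the root to use the slot-$s_j$ edge in round $j$. But an enforcement token $e$ with zero slack that crosses the root at a prescribed time must start in one branch and end in another, and once $e$ enters its destination branch it marches inexorably inward. Now suppose slot $i$ appears at positions $j_1 < j_2$ in the sequence. Your scheme would need one enforcement token to enter (or exit) slot arm $i$ around round $j_1$ and another to do so around round $j_2$. But two zero-slack enforcement tokens in the same branch moving in opposite directions (one still heading out, one already heading in) have the same parity of distance from the root and can never swap past each other---they collide. The paper states this explicitly: ``for each branch, it cannot be the case that an enforcement token leaves that branch after another enforcement token enters the branch.'' With a fixed slot-to-branch map and repeated slots in the sequence, this is unavoidable.

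The paper's fix is to let the slot-to-branch correspondence \emph{change over time}: there are $m+n$ slot branches, and after the $t$-th Star STS swap the branch representing slot $s_t$ is retired and a fresh branch $s^{m+t}$ takes over. There are also $n$ dedicated \emph{swap branches} $w^t$: for each swap $t$, eight rounds are used to move the root item token and the item token from the current slot-$s_t$ branch into $w^t$, give them a round to optionally swap there, and then move them out to the root and to the \emph{new} slot-$s_t$ branch. This ``meet in a side branch, optionally swap, disperse to fresh locations'' mechanism replaces your nook idea (which is borrowed from the sequential reduction and does not transfer). Your rigidity argument for the backward direction is morally correct once the right construction is in place, and indeed the paper's backward proof is exactly the kind of forced-root-schedule argument you describe; the missing piece is the construction itself.
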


\subsection{Proof idea}
\label{sec:parallel-intuition}

The most basic concept of this reduction is the same as the prior reduction for Token Swapping on trees. We represent each item in the Star STS instance with a token in the parallel token swapping instance (an item token) and associate some of the branches of the subdivided star (slot branches) with the slots of the Star STS graph. If an item is located at the root of the star, then the corresponding item token will be located at the root of the subdivided star. If an item token is located in a slot, then the corresponding item token will be located somewhere in the corresponding slot branch.

In order to make that basic idea work, we need to be able to control which tokens can pass through the root and at which time. We want to restrict the possible transitions of an item token through the root rather than allow arbitrary traversal of the root. Using the root as a bottleneck for tokens swapping into their target branch will be a keystone idea in the proof. For this purpose, we introduce a second set of tokens, which we call \defn{enforcement tokens}. The enforcement tokens will enforce the behavior of the swaps near the root by creating ``congestion'' at the root. This will be achieved by placing the enforcement tokens a distance $K$ away from their targets, forcing the enforcement tokens to take the shortest path to their destination and thus needing to swap into the root at specific times.

Each enforcement token will start at some distance $d$ from the root and end at distance $K-d$ from the root in a different branch. The token's destination is a distance of $K$ from its starting location, so therefore the enforcement token will have to move one step towards its target location every round in order to reach its destination by the $K$ round deadline. In particular, during round $d$ the token will swap into the root from its starting branch and during round $d+1$ the token will swap out of the root into its destination branch.

Since the root vertex can only be involved in one swap on any given round, these swaps are going to be the only swaps involving the root during rounds $d$ and $d+1$. We want the root to have a swap involving an enforcement token at every round, so we can create one enforcement token, call it $e_1$, with $d=1$, another ($e_3$) with $d=3$, another ($e_5$) with $d=5$, and so on up to a final enforcement token, $e_{K-1}$, with $d=K-1$. This will make it so that there is a forced swap involving the root vertex during every single round (rounds $1$ and $2$ due to $e_1$, rounds $3$ and $4$ due to $e_3$, etc...).

So let's say we set this up exactly like that. When can an item token (or any other non-enforcement token) pass through the root? There is a swap involving an enforcement token and the root during every round, so a token can pass through the root only if it is swapping with an enforcement token. An enforcement token enters the root every odd numbered round and exits every even numbered round. Therefore a token could be able to enter the root on some round $2t$ while enforcement token $e_{2t-1}$ is exiting the root and then exit the root on round $2t+1$ while enforcement token $e_{2t+1}$ is entering the root. If this occurs, however, the token must start in the branch that is the destination branch of token $e_{2t-1}$ and end in the branch that is the source branch of token $e_{2t+1}$. Therefore, not only have we restricted the possible traversals through the root, we also have a way to fine tune exactly which branches we allow swaps between and when by setting the source and destination branches of the enforcement tokens.

Unfortunately, setting things up this way also gives us an additional constraint: for each branch, it cannot be the case that an enforcement token leaves that branch after another enforcement token enters the branch. The problem is that the two tokens, each on its own inexorable path towards its destination will end up colliding inside the branch. Two enforcement tokens can never swap with each other due to a parity argument (all enforcement tokens always have the same parity of distance from the root, as they all swap into the root on even numbered rounds), so there is no way for the two tokens to pass each other. In other words, if the constraint is violated then a solution to the parallel token swapping instance will be impossible.

The constraint that all enforcement tokens leave a branch before any other enforcement tokens enter is equivalent to the constraint that all non-enforcement tokens that are going to enter a branch do so before any non-enforcement tokens leave the branch. This seems like a problem. How can we use a branch to represent a slot if the branch has the constraint that item tokens never enter after other item tokens have exited? After all, in Star STS, slots are used and re-used repeatedly.

The solution to this problem is the second key idea of the reduction. Unlike in the sequential token swapping case, we will allow the correspondence between slots and branches to change over time. At any given time a particular branch will represent a slot, but which branch represents that slot can change. This makes it so that we don't have to keep returning tokens into the same branches again after tokens have already left the branch; instead, we just move those tokens into new branches and re-label the correspondence between branches and slots.

So how do we actually position the start and end branches of the enforcement tokens so that the desired swaps between item tokens are possible? We will use 8 rounds for every swap. If we are trying to swap some slot $s$, here's the effect of these 8 rounds on the non-enforcement tokens:

\begin{enumerate}
    \item the item token at the root moves into a branch $w$ dedicated to this swap
    \item the item token in the branch $b_s$ currently associated with slot $s$ moves into the root
    \item the item token now at the root moves into branch $w$; there are now two item tokens in branch $w$
    \item some non-enforcement token moves into the root 
    \item the token at the root moves into another branch (these two swaps provide an extra rounds in which the optional swap of the item tokens can occur)
    \item an item token from $w$ moves into the root
    \item the item token at the root moves into a new branch $b_s'$, which as the name suggests will become the new branch associated with slot $s$
    \item the remaining item token in $w$ moves into the root
\end{enumerate}

Of course, to achieve these effects we set the start and destination branches of the enforcement tokens, rather than setting the start and destination branches of the item tokens as described in the list, since the latter is not something we can directly control.

Notice that during rounds 4 and 5 in the above list, there are two item tokens in branch $w$: the item token that was at the root and the one which was in branch $b_s$ (corresponding with the item in slot $s$). During these rounds, the two tokens can (optionally) swap. Then the rest of the rounds move these two item tokens into their two new locations: branch $b_s'$ and the root. In other words, depending on which token ends up where, there may have been a swap.

Since we can simulate a swap, we can just use the target locations of the item tokens to encode the target permutation of the items. The only remaining question is what to do about all the other tokens we haven't introduced. For this we bring back the concept of the scaffold solution. We know that certain swaps, the swaps of the enforcement tokens, are forced. Therefore, we can consider what happens if we do only those swaps; that is the scaffold solution. Now consider what happens if we place a token on every vertex, and then go through the scaffold solution. The enforcement tokens will all reach their destinations. The item tokens might end up shuffled among themselves (relative to where they should end up), but as a group of tokens, the item tokens will end up occupying exactly the group's target locations. Then for all non-item non-enforcement tokens, we can just set the target destination to be whatever the destination ends up under the scaffold solution.

\subsection{Reduction}
\label{sec:parallel-reduction}

Suppose we have a Star STS instance consisting of: a star with center 0 and slots $1, \ldots, m$, each of which initially has an item of its same label; a permutation $\pi$ of the items; and a sequence $s_1, \ldots, s_n$ of slots that specify the allowed swaps.

We will construct a parallel token swapping instance consisting of: the round limit $K$, a subdivided star, and a permutation $\pi$ of the vertices indicating for every vertex $v$ the target location $\pi(v)$ for the token that starts at $v$. For the parallel token swapping construction we will generally use a superscript to refer to the index of the branch and a subscript to refer to a distance from the root. For example, we might discuss the $i^{th}$ branch $b^i$ or the vertex $b^i_j$ which is in branch $i$ and a distance of $j$ away from the root. 

\subsubsection{Time limit \texorpdfstring{$K$}{K} and the subdivided star}

As described in the proof idea section, each swap in the Star STS instance will be represented by $8$ rounds in the Parallel Token Swapping instance. Therefore, we set $K = 8n$.

For the subdivided star, we will use the root $r$, together with a total of $n+(m+n)+2$ branches, each of length $K$. For each branch $b$, we will refer to the vertices of the branch starting from the end closest to the root as $b_1, b_2, \ldots, b_K$. These are also shown in Figure~\ref{fig:ParallelTokenOverview}.

\begin{itemize}
    \item $n$ of these branches are \defn{swap branches} $w^t$ for $t=1,\ldots,n$. Each swap branch will be used for one of the swaps to temporarily hold the two swapping item tokens together (so that they can optionally swap). 
    \item $m+n$ of the branches are \defn{slot branches} $s^i$ with $i = 1, 2, \ldots, m+n$. Obviously there are more slot branches than slots, but this is because not every slot branch will represent a specific slot at any given time. Rather, at any given time, $m$ of the branches will be \defn{active}, and those will be the branches which currently represent a slot. Since a swap needs a new branch to occur, we need an additional $n$ branches some of which will only temporarily represent a specific slot. More details on how the mapping will change over time are given in the next section. 
    \item The final two branches are \defn{garbage branches} $g$ and $g'$. Tokens will move between these branches as a way of wasting time.
\end{itemize}

\begin{figure}
    \centering
    \includegraphics[scale=.5]{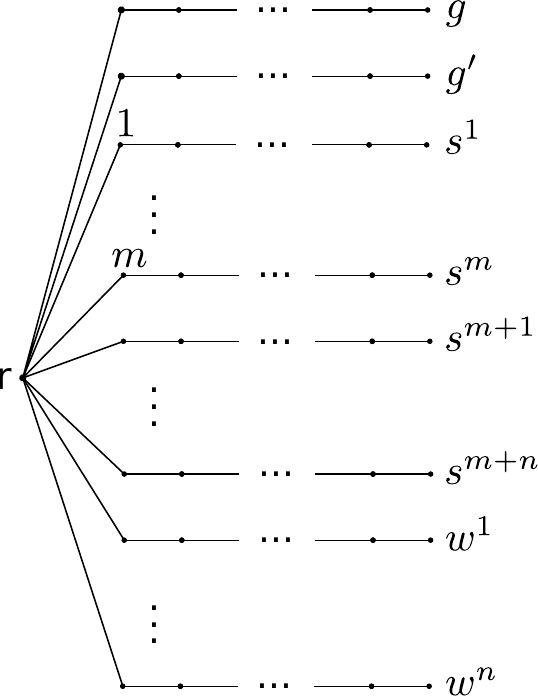}
    \caption{The overall structure of the Parallel Token Swapping instance. Each branch is labeled with its name on the right, and each vertex initially containing an item token is labeled with the name of that token.}
    \label{fig:ParallelTokenOverview}
\end{figure}

All that's left is to define the permutation $\pi$. We will do this a piece at a time by considering the three types of tokens described in the proof idea section: item tokens, enforcement tokens, and the remaining, non-item non-enforcement, tokens, which we will refer to as \defn{filler tokens}.

\subsubsection{Item tokens}

Before considering the item tokens, we first turn our attention to the active branches. We define a function $a$ that tells us which branches are active at any time. In particular, if $1 \le i \le m$ and $0 \le t \le n$ then $a(i, t)$ is defined such that $s^{a(i,t)}$ is the slot branch which represents slot $i$ at time $t$. We say that, at time $t$, $\{s^{a(i, t)} \mid 1\le i \le m\}$ are the active slot branches and the other slot branches are inactive. 

At time $t=0$, slot branches $1$ through $m$ will represent slots $1$ through $m$, and so $a(i, 0) = i$. Then for every $t > 0$, we define $a(\cdot, t)$ recursively. In particular, during swap $t$, the slot branch that used to represent slot $s_t$ becomes inactive and instead slot branch $m+t$ becomes active in order to represent that slot. In other words, $a(i, t) = a(i, t-1)$ for $i \ne s_t$ and $a(s_t, t) = m+t$.

With that done, we can define the initial and final positions of the item tokens. There are $m+1$ item tokens, which we refer to as $0, 1, \ldots, m$, named after their corresponding items.

Item $0$ starts at the root of the star, so we will initially place item token $0$ at the root $r$. Each other item $i$ starts in slot $i$, so we will initially place the corresponding item token $i$ in the active branch (at time $t=0$) representing slot $i$ next to the root. In other words, we place item token $i$ at vertex $s^{a(i, 0)}_1=s^i_1$.

For the final positions, consider the item which, according to $\pi$, should end up at the root of the star. We give this item token a final destination of $r$. For every other item $i$, the item is supposed to end up in slot $\pi(i)$. Therefore, we give the corresponding item token $i$ a final destination which is in the active branch (at time $t=n$) representing slot $\pi(i)$ next to the root. In other words, we give item token $i$ a final destination of $s^{a(\pi(i), n)}_1$.

Notice that since the initial branches assigned to each item token are disjoint (with the exception of one item token starting at the root, which is not assigned an initial branch at all), each item token is given a unique start location. Similarly, the final locations of the item token ends are in disjoint branches (with the exception of one item token at the root), so each item token is given a unique end location.

\subsubsection{Enforcement Tokens}

Now we will describe the \defn{enforcement tokens}. There will be $\frac{K}{2}$ enforcement tokens $e_1, e_3, \ldots, e_{K-1}$. Enforcement token $e_d$ will start in some branch $x^d$ at distance $d$ from the root and end in some other branch $y^d$ at distance $K-d$ from the root. In other words, the token will start at vertex $x^d_d$ and end at vertex $y^d_{K-d}$. All that's left to make these positions precise is to identify the start and end branches $x^d$ and $y^d$ for $d = 1, 3, \ldots, K-1$.

Here is how these branches are assigned:

For each $t$ with $1 \le t \le n$, we assign branches as follows
\begin{itemize}
    \item $x^{8t-7} = w^t$
    \item $y^{8t-7} = s^{a(s_t, t-1)}$
    \item $x^{8t-5} = w^t$
    \item $y^{8t-5} = g$
    \item $x^{8t-3} = g'$
    \item $y^{8t-3} = w^t$
    \item $x^{8t-1} = s^{a(s_t, t)}=s^{m+t}$
    \item $y^{8t-1} = w^t$
\end{itemize}

Note that these 8 branches in this order are exactly the 8 branches from the proof idea section list of token movements through the root corresponding with a swap.

It is simple to argue using the distance from the root that each enforcement token has a unique start location and a unique end location. Furthermore, we also wish to claim that no enforcement token shares a start location or an end location with an item token. Since item tokens start and end at distance $0$ or $1$ from the root, $e_1$ is the only enforcement token that could possibly share a start location with an item token and $e_{K-1}$ is the only enforcement token that could possibly share an end location with an item token. However, $e_1$ starts in branch $w^1$ and $e_{K-1}$ ends in branch $w^n$, neither of which are slot branches. No item token starts or ends in a non-slot branch, so we can conclude that the item and enforcement token start locations are all unique, as are their end locations.

\subsubsection{Filler Tokens}
\label{sec:Filler Tokens}

For every vertex that does not initially contain an enforcement token or an item token, we place a filler token initially at that vertex. In order to identify the target destination for these tokens, we make use of the scaffold solution.

Define the \defn{scaffold solution} to be a set of swaps for times $1 \le t \le K$. At every time $t$, we will include one swap for each enforcement token: the swap that brings that token one step towards its destination. \footnote{Note, scaffold solution only defines a subset of swaps which must occur based on the enforcement token. It is not based on an `intended solution' like the scaffold solution for the single swap reduction. Thus, although both set up a structure of moves which must be followed, they have somewhat different intuition and technical differences which should not be confused with each other. }

We will later prove the following lemma:

\begin{lemma}
\label{lemma:parallel-scaffold-legal}
The scaffold solution does not include any pair of simultaneous swaps that share a vertex.
\end{lemma}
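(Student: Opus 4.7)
The plan is to enumerate, round by round, the edges the scaffold solution asks the enforcement tokens to swap along and to verify that these edges are pairwise vertex-disjoint. Because $e_d$ must traverse the unique path $x^d_d\to r\to y^d_{K-d}$ of length $K$ in exactly $K$ rounds, its swap at round $\tau$ is forced to be $\{x^d_{d-\tau+1},x^d_{d-\tau}\}$ when $\tau\le d$ (with the convention $x^d_0=r$) and $\{y^d_{\tau-d-1},y^d_{\tau-d}\}$ when $\tau\ge d+1$ (with $y^d_0=r$). Two such edges can share a vertex only at the root or inside a common branch $b$, so I would split the verification into these two cases.

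First I would dispatch the root. A swap of $e_d$ is incident to $r$ exactly when $\tau\in\{d,d+1\}$. Since every enforcement index $d$ is odd, at an odd round $\tau$ only $e_{\tau}$ touches $r$, and at an even round $\tau$ only $e_{\tau-1}$ touches $r$. Thus at most (in fact exactly) one enforcement swap per round uses $r$, so no root conflict is possible.

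Next I would fix a branch $b$ and define $D_{\mathrm{start}}(b)=\{d:x^d=b\}$ and $D_{\mathrm{end}}(b)=\{d:y^d=b\}$. If $d_1\ne d_2$ both lie in $D_{\mathrm{start}}(b)$, then both are odd, so $|d_1-d_2|\ge 2$, and while both tokens are simultaneously in $b$ (i.e.\ $\tau\le\min(d_1,d_2)$) the edges $\{b_{d_i-\tau},b_{d_i-\tau+1}\}$ are disjoint, even taking into account that one endpoint may be $r$; the same reasoning handles two elements of $D_{\mathrm{end}}(b)$. The only remaining case has $d_1\in D_{\mathrm{start}}(b)$ and $d_2\in D_{\mathrm{end}}(b)$; here $e_{d_1}$ occupies $b$ exactly in rounds $\tau\le d_1$, while $e_{d_2}$ occupies $b$ exactly in rounds $\tau\ge d_2+1$, so these intervals overlap only when $d_1>d_2$. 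It therefore suffices to show that $d_1<d_2$ for every such pair.

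The last inequality is checked branch by branch from the explicit assignments of $x^d$ and $y^d$. For a swap branch $w^t$ we read off $D_{\mathrm{start}}(w^t)=\{8t-7,8t-5\}$ and $D_{\mathrm{end}}(w^t)=\{8t-3,8t-1\}$, so every start index is strictly smaller than every end index. For the garbage branches $g$ and $g'$, one of $D_{\mathrm{start}}$ and $D_{\mathrm{end}}$ is empty. For a slot branch $s^i$, $D_{\mathrm{start}}(s^i)$ is nonempty only when $i>m$, and then it consists of the single value $8(i-m)-1$; while $D_{\mathrm{end}}(s^i)$ contains at most the single value $8t^*-7$, where $t^*$ is the round at which $s^i$ is deactivated. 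The main point, and the only place where the combinatorial structure of $a(\cdot,\cdot)$ enters, is that a slot branch cannot be activated and deactivated in the same swap, so $t^*\ge(i-m)+1$ and hence $8t^*-7\ge 8(i-m)+1>8(i-m)-1$. This gives $d_1<d_2$ in the only case that required it, and the lemma follows. The remainder of the argument is routine parity bookkeeping.
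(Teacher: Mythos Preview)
Your proposal is correct and follows essentially the same approach as the paper. Both arguments reduce the question to verifying, branch by branch, that no enforcement token enters a branch before another enforcement token leaves it, and both carry out the identical casework: the garbage branches are one-directional, each swap branch $w^t$ has its two start indices $8t-7,8t-5$ below its two end indices $8t-3,8t-1$, and for a slot branch $s^i$ with $i>m$ the activation step $i-m$ necessarily precedes any deactivation step $t^*$ because the value $i$ does not appear in the range of $a(\cdot,t^*-1)$ until $t^*-1\ge i-m$. Your explicit edge-level formulation and separate handling of the two-start and two-end subcases (via $|d_1-d_2|\ge 2$) are presentational refinements of the paper's single parity observation that all enforcement tokens share the same parity of distance from $r$; the substance is the same.
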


Using this lemma and the fact that no two item or enforcement tokens share a start position, we can unambiguously execute the scaffold solution starting with the initial position of all the tokens. (Note that if we tried to execute a timeline of swaps including simultaneous swaps sharing a vertex, we would have an ambiguity; where would the token at that vertex go?). Define the target destination of each filler token to be the destination of that token when we execute the scaffold solution.

We showed previously that every vertex is the start position of at most one item or enforcement token. Since we started a filler token in every vertex not starting with an item or enforcement token, we now have that every token has a unique starting position and that every vertex is the starting location of some token.

What about ending locations? We saw previously that every item and enforcement token has a unique end location. In order to consider the end locations of the filler tokens we have to examine the behavior of the tokens under the scaffold solution. First of all, every enforcement token is clearly moved to its target location by the scaffold solution. This is because the moves included in the scaffold solution are exactly the ones that are necessary in order to move the enforcement tokens to their target destinations. For the item tokens, we use the following lemma:

\begin{lemma}
\label{lemma:parallel-scaffold-item}
The scaffold solution brings each item token to a final location that is the target location of some item token (though not necessarily the token's own target destination).
\end{lemma}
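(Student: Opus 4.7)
The plan is to prove the lemma by induction on the number of $8$-round \emph{chunks} executed by the scaffold solution, maintaining the invariant that after $8t$ rounds (for $0 \le t \le n$), item token $0$ is at the root $r$ and item token $i$ is at vertex $s^{a(i,t)}_1$ for each $i \in \{1, \ldots, m\}$. The base case $t=0$ is immediate from the construction of the initial item positions (recall $a(i,0)=i$). For $t=n$ the invariant yields that the item tokens occupy exactly $\{r\} \cup \{s^{a(i,n)}_1 : 1 \le i \le m\}$; since $\pi$ is a permutation of $\{1,\ldots,m\}$ and the target locations of the item tokens are $\{r\} \cup \{s^{a(\pi(i),n)}_1 : 1 \le i \le m\}$, the two sets are equal, which is exactly what the lemma asserts.

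For the inductive step from $t$ to $t+1$, set $\tau = t+1$ and trace rounds $8\tau-7, \ldots, 8\tau$ one by one. By definition of the enforcement tokens, the only ones that touch $r$ during this chunk are $e_{8\tau-7}, e_{8\tau-5}, e_{8\tau-3}, e_{8\tau-1}$, entering $r$ on the four odd rounds and leaving on the four even rounds. Their associated source/destination branches are $w^\tau, w^\tau, g', s^{m+\tau}$ and $s^{a(s_\tau,\tau-1)}, g, w^\tau, w^\tau$ respectively. I would argue sequentially: round $8\tau-7$ pushes item $0$ from $r$ into $w^\tau_1$; round $8\tau-6$ lifts item $s_\tau$ (at $s^{a(s_\tau,t)}_1$ by the inductive hypothesis) into $r$ while $e_{8\tau-5}$ simultaneously displaces item $0$ from $w^\tau_1$ to $w^\tau_2$; round $8\tau-5$ deposits item $s_\tau$ into $w^\tau_1$; rounds $8\tau-4$ and $8\tau-3$ cycle a filler token through $r$ via the garbage branches $g$ and $g'$, leaving all item tokens fixed; round $8\tau-2$ returns item $s_\tau$ to $r$ while $e_{8\tau-3}$ enters $w^\tau_1$; round $8\tau-1$ sends item $s_\tau$ out to $s^{m+\tau}_1 = s^{a(s_\tau,\tau)}_1$ while inside $w^\tau$ the token $e_{8\tau-3}$ swaps past item $0$ so that item $0$ is returned to $w^\tau_1$; finally round $8\tau$ brings item $0$ back to $r$.

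For every $i \in \{1, \ldots, m\}$ with $i \neq s_\tau$, I would separately check that no enforcement token swaps across the edge $(r, s^{a(i,t)}_1)$ during these eight rounds, since the only enforcement tokens whose trajectories traverse an active slot branch adjacent to $r$ in this chunk are $e_{8\tau-7}$ (using $s^{a(s_\tau,\tau-1)}$) and $e_{8\tau-1}$ (using $s^{m+\tau}=s^{a(s_\tau,\tau)}$), both involving the slot $s_\tau$. Because $a(i,\tau) = a(i,\tau-1)$ for $i \neq s_\tau$, item $i$ remains at $s^{a(i,\tau)}_1$, completing the inductive step.

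The main obstacle is the bookkeeping inside the branch $w^\tau$: both item $0$ and item $s_\tau$ briefly cohabit this branch at vertices $w^\tau_1$ and $w^\tau_2$ while the two enforcement tokens $e_{8\tau-5}$ and $e_{8\tau-3}$ traverse the same segment in opposite temporal orders. I would handle this by tabulating, round by round, the occupants of $r, w^\tau_1, w^\tau_2, s^{a(s_\tau,t)}_1, s^{m+\tau}_1$ and verifying that Lemma~\ref{lemma:parallel-scaffold-legal} (which gives that the scaffold swaps in each round form a valid matching) makes every intermediate configuration well-defined. Throughout the argument I would also need to confirm that no other enforcement token happens to be at distance $1$ or $2$ from the root in any of these five designated vertices during the chunk, which follows from the parity constraint that all enforcement tokens are at odd distance from $r$ at the start of every odd round.
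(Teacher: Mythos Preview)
Your proposal is correct and follows essentially the same approach as the paper: an induction on $t$ with the invariant that after round $8t$ the item tokens occupy $r$ and the vertices $s^{a(i,t)}_1$, proved by a round-by-round trace of the four enforcement tokens $e_{8\tau-7},e_{8\tau-5},e_{8\tau-3},e_{8\tau-1}$ through the chunk. The only difference is that you track a slightly stronger invariant (identifying \emph{which} item token is at each location, namely item~$0$ at~$r$ and item~$i$ at~$s^{a(i,t)}_1$), whereas the paper only tracks the set of occupied vertices and names the two moving tokens abstractly as $A$ and~$B$; both work, and your version is in fact what the paper later upgrades to when proving Lemma~\ref{lemma:parallel-forward-lemma-1}.
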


Then we can ask the question of where the scaffold solution brings the filler tokens. The enforcement tokens each end up in their own target locations. The $m+1$ item tokens each end up in a target location of an item token, and since there are only $m+1$ such locations (and the scaffold solution certainly doesn't bring two item tokens into the same location), we can conclude that the $m+1$ item tokens end up in the $m+1$ item token target locations. Therefore the end locations of the filler tokens under the scaffold solution are exactly those vertices that are not target locations of item or enforcement tokens. Since the target location of a filler token is defined to be its end location under the scaffold solution, we can conclude that every token is assigned a unique target location.

\subsubsection{Permutation \texorpdfstring{$\pi$}{σ}}

Since we have defined the start and end positions of the three types of tokens, we can define the permutation $\pi$. For every vertex $v$, let $\pi(v)$ be the end location of the token that starts at $v$. Since every vertex has exactly one token starting there, this is well defined. Since no two tokens share an end location, this is a permutation. 

\subsubsection{Proof of Lemmas~\ref{lemma:parallel-scaffold-legal} and \ref{lemma:parallel-scaffold-item}}

In Section~\ref{sec:Filler Tokens}, we left out the proofs of two lemmas. Provided that those two lemmas hold, the reduction is clearly well defined and furthermore can be computed in polynomial time. Thus, all that's left in order to conclude NP-hardness of parallel token swapping is to prove the lemmas, which we do here, and prove the answer-preserving property of the reduction, which we do in the following sections.

\begin{proof}[Proof of Lemma~\ref{lemma:parallel-scaffold-legal}]
\emph{The scaffold solution does not include any pair of simultaneous swaps that share a vertex.}

Every swap in the scaffold solution includes an enforcement token. All enforcement tokens always share the same parity of distance from $r$. Therefore, the only two ways for the scaffold solution to include a pair of simultaneous swaps that share a vertex is (1) two enforcement tokens both swap into the same vertex, or (2) two enforcement tokens at the same vertex both swap out of that vertex. In either case, there must be two enforcement tokens in the same place in order for the scaffold solution to have a pair of simultaneous swaps sharing a vertex.

Clearly no pair of enforcement tokens is ever at $r$ at the same time (since each $e_d$ is at $r$ immediately after round $d$). Therefore, in order to be in the same place at the same time, the enforcement tokens would have to meet inside some branch. But then, some enforcement token would have to enter a branch before a different enforcement token exits that branch. Note that this is the condition that the proof idea section explicitly said we would need to avoid. 

Checking that no branch has an enforcement token enter before another one leaves is tedious but not difficult. We now consider, in turn, each type of branch: the swap branch, the garbage branch, and the slot branch.

For each \textbf{swap branch} $w^t$ only four enforcement tokens ever enter or leave. The two that leave are $e_{8t-7}$, which exits during round $8t-7$, and $e_{8t-5}$, which exits during round $8t-5$. The two that enter are $e-{8t-3}$, which enters during round $8t-2$, and $e_{8t-1}$, which enters during round $8t$. Clearly, enforcement tokens never enter before another enforcement token leaves.

The \textbf{garbage branch} $g$ only ever has enforcement tokens enter it, while the other garbage branch $g'$ only ever has enforcement tokens exit it.

Each \textbf{slot branch} $s^i$ has at most one enforcement token enter it. This is because the only enforcement tokens to enter a slot branch are the ones of the form $e_{8t-7}$, which enter branch $y^{8t-7} = s^{a(s_t,t-1)}$ during round $8t-6$. For every $t$, $a(s_t,t-1)$ is a unique value (since by the definition of $a$, once $a(s_t,t-1)$ is replaced by $a(s_t,t)=m+t$, the value $a(s_t,t-1)$ can never occur again as an output of $a$). Therefore each such enforcement token enters exactly one such slot branch.

Similarly, each slot branch $s^i$ has at most one enforcement token exit it. This is because the only enforcement tokens to exit a slot branch are the ones of the form $e_{8t-1}$, which exit branch $x^{8t-1} = s^{a(s_t,t)}=s^{m+t}$ during round $8t-1$. 

In order for an enforcement token to enter a slot branch before a different enforcement token exits the branch, it would have to be the case that $i = a(s_t,t-1)$ and $i = m+t'$ with $8t-6 < 8t'-1$. This is equivalent to $t \le t'$. But notice that the maximum value of $a(\cdot, \tau)$ is always $m+\tau$. Therefore $i = a(s_t,t-1) \le m + t - 1 \le m+t' - 1 = i - 1$. Clearly it cannot be that $i \le i-1$, and so we conclude that an enforcement token enters a slot branch $s^i$ only after other  enforcement tokens have exited the branch.

This concludes the casework, proving that two enforcement tokens never collide in the scaffold solution and therefore that the scaffold solution never includes two simultaneous swaps sharing a vertex.
\end{proof}

\begin{proof}[Proof of Lemma~\ref{lemma:parallel-scaffold-item}]
\emph{The scaffold solution brings each item token to a final location that is the target location of some item token (though not necessarily the token's own target destination).}

In order to prove this, we will prove by induction that for $0 \le t \le n$, after round $8t$, the item tokens are located at $r$ and at vertices $s^{a(i, t)}_1$ for $1 \le i \le m$.

By definition this is true for $t = 0$:

After round $0$ (that is, right at the beginning), the item tokens were defined to be at locations $r$ and $s^{a(i,0)}_1$ for $1 \le i \le m$, exactly where we wanted to show they would be.

Next suppose that for some $1\le t \le n$, it was the case that after round $8(t-1)$, the item tokens are located at $r$ and at vertices $s^{a(i, t-1)}_1$ for $1 \le i \le m$. Let's see what happens in the scaffold solution over the course of rounds $8t-7$ through $8t$. This is depicted in Figure~\ref{fig:ParallelSwap}. The item tokens are all located within distance $1$ of $r$, and so the only enforcement tokens that can move the item tokens are the ones that get close to $r$ during these 8 rounds. In particular, it is easy to verify that the only relevant enforcement tokens are $e_{8t-7}$, $e_{8t-5}$, $e_{8t-3}$, and $e_{8t-1}$.

Token $e_{8t-7}$ moves from $w^t_1$ to $s^{a(s_t, t-1)}_7$ over these 8 rounds. $e_{8t-5}$ moves from $w^t_3$ to $g_5$. $e_{8t-3}$ moves from $g'_5$ to $w^t_3$. $e_{8t-1}$ moves from $s^{a(s_t, t)}_7$ to $w^t_1$.

\begin{figure}
    \centering
    \includegraphics[width=.8\linewidth]{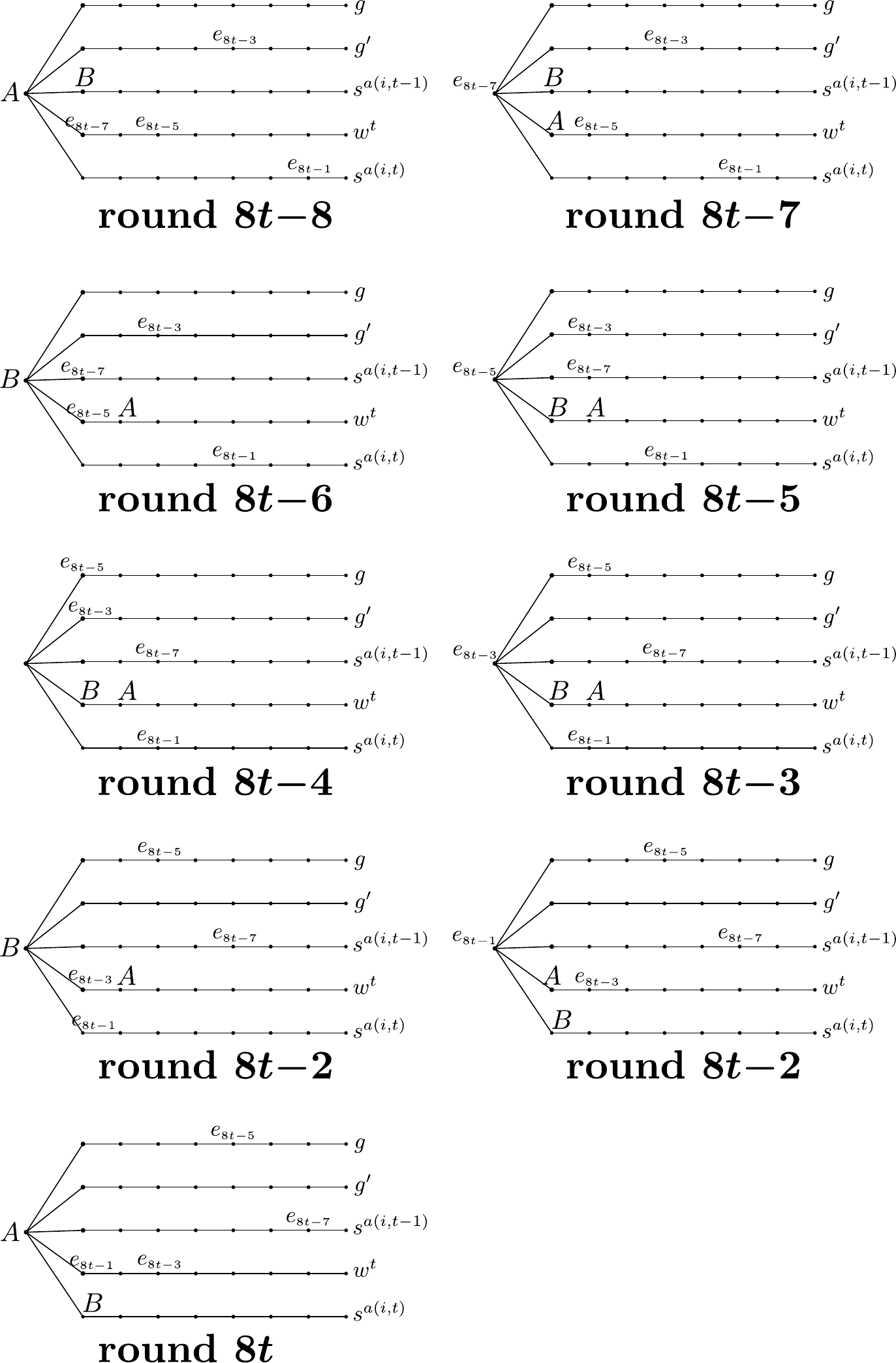}
    \caption{Position of item and enforcement tokens during rounds $8t-7$ to $8t$. Branches as well as vertices containing enforcement and item tokens are labeled.}
    \label{fig:ParallelSwap}
\end{figure}

Here's the effect of these enforcement tokens on the item tokens. On round $8t-7$, token $e_{8t-7}$ moves the item token at $r$, call it $A$, into $w^t_1$. Next, on round $8t-6$, token $e_{8t-5}$ moves that same item token, $A$, into $w^t_2$, while simultaneously $e_{8t-7}$ moves the item token at $s^{a(s_t, t-1)}_1$, call it $B$, into $r$. On round $8t-5$, token $e_{8t-5}$ moves $B$ into $w^t_1$. During rounds $8t-4$ and $8t-3$, no item tokens move. On round $8t-2$, token $e_{8t-3}$ moves into $w^t_1$ from $r$, and as a result $B$ moves back into $r$. Next, on round $8t-1$, token $e_{8t-3}$ moves $A$ into $w^t_1$, while simultaneously token $e_{8t-1}$ moves $B$ from $r$ into $s^{a(s_t,t)}_1$. Finally, on round $8t$, token $e_{8t-1}$ moves $A$ into $r$.

All other movements of non-enforcement tokens due to the scaffold solution are not movements of item tokens. The result on item tokens is that $A$, which started at $r$, is moved back into $r$, while $B$, which started at $s^{a(s_t, t-1)}_1$, is moved into $s^{a(s_t, t)}_1$.

Note that $s^{a(i, t-1)}_1 = s^{a(i,t)}_1$ whenever $i \ne s_t$. Therefore, all the other item tokens, which started at $s^{a(i, t-1)}_1$ with $i \ne s_t$, end up at $s^{a(i,t)}_1$. And as we saw, the item token ($A$) that was previously at $r$ stays at $r$ and the item token ($B$) that was previously at $s^{a(i, t-1)}_1$ with $i = s_t$ ends up at $s^{a(i, t)}_1$. Thus, as desired, after round $8t$, the item tokens are located at $r$ and at vertices $s^{a(i, t)}_1$ for $1 \le i \le m$.

By induction, this same statement holds for every $t$ with $0 \le t \le n$. In particular, applying it to $t = n$, we see that the item tokens are located at $r$ and the vertices $s^{a(i, n)}_1$ for $1 \le i \le m$ after $8n$ rounds of the scaffold solution (i.e. at the end of it). But those are exactly the target locations of item tokens, which is what we wanted to prove.
\end{proof}

\subsection{Star STS solution \texorpdfstring{$\to$}{-->} parallel token swapping solution}
\label{sec:parallel-forward}

Suppose we have a solution to a Star STS instance. Then consider the parallel token swapping instance produced by the reduction from the previous section.

Let $S$ be the scaffold solution. We will build a different solution $S'$ by adding some swaps to $S$. Consider the solution to the Star STS instance. For $1\le t \le n$, if swap $t$ is used in the Star STS solution, then add a swap of vertices $w^t_1$ and $w^t_2$ during round $8t-4$ to $S'$.

First of all, it is easy to verify that no enforcement token swaps into or out of either $w^t_1$ or $w^t_2$ during round $8t-4$. Therefore, $S'$ is a valid schedule of swaps for the parallel token swapping instance. In fact, we will show that $S'$ solves the instance.

The main idea of this proof is that $S'$ maintains the following invariant:

\begin{lemma}
\label{lemma:parallel-forward-lemma-1}
If $0 \le t \le n$ then after round $8t$ of $S$, one of the item tokens is located at $r$ and the other $m$ item tokens are in the active slot branches $s^{a(\cdot, t)}$, each in the vertex of that branch which is adjacent to the root. In particular, consider the state of the Star STS instance after the first $t$ swaps are each either used or not according to the solution. The item token located at $r$ will correspond to the item at the root, while the item token in active slot branch $s^{a(i, t)}$ will correspond to the item in slot $i$.
\end{lemma}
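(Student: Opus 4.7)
The plan is to prove the invariant by induction on $t$, using the detailed round-by-round analysis already established in the proof of Lemma~\ref{lemma:parallel-scaffold-item} as the scaffold, and showing that the single optional swap added at round $8t-4$ has exactly the effect of applying (or not applying) the Star STS swap $s_t$.

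The base case $t=0$ is immediate: the initial positions of the item tokens were defined to be $r$ and $s^{a(i,0)}_1 = s^i_1$ for $1 \le i \le m$, which is precisely the initial state of the Star STS instance (item $0$ at the root, item $i$ at slot $i$), and no swaps have occurred on either side.

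For the inductive step, assume the invariant holds after round $8(t-1)$. Let $A$ denote the item token at $r$ and $B$ the item token at $s^{a(s_t,t-1)}_1$ at that moment. First I would argue that during rounds $8t-7$ through $8t$, the only item tokens touched by any swap (in $S$ or $S'$) are $A$ and $B$: every other item token sits at some $s^{a(i,t-1)}_1$ with $i \ne s_t$, and by inspection of the enforcement-token movements for this block, no enforcement token swaps into or out of any such vertex during these eight rounds; moreover the optional swap is on edge $(w^t_1,w^t_2)$, which is disjoint from these vertices. Since $a(i,t) = a(i,t-1)$ for $i \ne s_t$, these tokens end round $8t$ at the correct positions $s^{a(i,t)}_1$.

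Next I would trace $A$ and $B$ through the eight rounds, reusing the positions already computed in the proof of Lemma~\ref{lemma:parallel-scaffold-item}. That analysis shows that after round $8t-5$ the scaffold has placed $A$ at $w^t_2$ and $B$ at $w^t_1$, and from round $8t-2$ onward the enforcement tokens $e_{8t-3}$ and $e_{8t-1}$ route whatever is at $w^t_1$ back to $r$ and whatever is at $w^t_2$ out through $r$ to $s^{a(s_t,t)}_1$. Thus the final destinations of the two tokens in $\{A,B\}$ occupying $\{w^t_1, w^t_2\}$ at the end of round $8t-5$ are determined by which of them sits at $w^t_1$ versus $w^t_2$ going into round $8t-2$. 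If swap $s_t$ is not used in the Star STS solution, no optional swap is inserted, so $A$ stays at $w^t_2$ and $B$ at $w^t_1$, giving final positions $A\mapsto r$, $B\mapsto s^{a(s_t,t)}_1$, matching the Star STS state in which the items at the root and slot $s_t$ are unchanged. If swap $s_t$ is used, the optional swap at round $8t-4$ exchanges $A$ and $B$ on the edge $(w^t_1,w^t_2)$, so $A$ ends at $w^t_1$ and $B$ at $w^t_2$, yielding $A\mapsto s^{a(s_t,t)}_1$ and $B\mapsto r$ — precisely the effect of swapping the items at the root and at slot $s_t$.

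Combining the two traces with the observation that the other $m-1$ item tokens remain in place at $s^{a(i,t)}_1$, the invariant holds after round $8t$ and the induction is complete. The main technical obstacle is the bookkeeping in the trace: one must carefully verify, for each of the eight rounds, which enforcement tokens touch $w^t_1$, $w^t_2$, and $r$ and in what direction, and check that inserting a swap on $(w^t_1,w^t_2)$ at round $8t-4$ does not conflict with any enforcement-token swap scheduled at the same round — both of which follow by direct inspection of the start/end assignments $x^{8t-j}, y^{8t-j}$ made in the reduction.
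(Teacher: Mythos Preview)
Your proposal is correct and follows essentially the same approach as the paper: induction on $t$, reusing the round-by-round trace from the proof of Lemma~\ref{lemma:parallel-scaffold-item} and observing that the optional swap inserted at round $8t-4$ exchanges the two item tokens $A$ and $B$ sitting in $w^t$. One small slip to fix when you write it up: your sentence about the routing from round $8t-2$ onward has the two destinations interchanged --- in the scaffold, the token at $w^t_1$ is the one that ends at $s^{a(s_t,t)}_1$ (it first goes to $r$, then out), while the token at $w^t_2$ is the one that ends at $r$ --- though your stated final conclusions in both the ``used'' and ``not used'' cases are correct.
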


In the process of proving this, we will also show the following:

\begin{lemma}
\label{lemma:parallel-forward-lemma-2}
During round $8t-4$ of $S'$, the two tokens in vertices $w^t_1$ and $w^t_2$ will always be item tokens. 
\end{lemma}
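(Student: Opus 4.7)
The plan is to prove Lemmas~\ref{lemma:parallel-forward-lemma-1} and \ref{lemma:parallel-forward-lemma-2} simultaneously by induction on $t$, since Lemma~\ref{lemma:parallel-forward-lemma-2} drops out of the inductive step for Lemma~\ref{lemma:parallel-forward-lemma-1} at round $8t-4$. The base case $t=0$ is immediate from the construction: the item tokens initially occupy $r$ and $s^{a(i,0)}_1 = s^i_1$, and the correspondence with the Star STS initial configuration is the identity.

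For the inductive step, assume Lemma~\ref{lemma:parallel-forward-lemma-1} holds after round $8(t-1)$. I would retrace the 8-round block $8t-7,\dots,8t$ exactly as in the proof of Lemma~\ref{lemma:parallel-scaffold-item}. The only difference between $S'$ and $S$ inside this block is the optional swap on edge $w^t_1 w^t_2$ at round $8t-4$, so rounds $8t-7, 8t-6, 8t-5$ proceed identically: the item token $A$ initially at $r$ ends up at $w^t_2$, and the item token $B$ initially at $s^{a(s_t,t-1)}_1$ ends up at $w^t_1$. This is precisely the statement of Lemma~\ref{lemma:parallel-forward-lemma-2}, so it falls out of the induction at this point.

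Next I would check that inserting the swap $(w^t_1, w^t_2)$ into round $8t-4$ is legal in $S'$: by the casework in the proof of Lemma~\ref{lemma:parallel-scaffold-legal}, the only enforcement tokens that can touch vertices at distance $1$ or $2$ from $r$ in branch $w^t$ during round $8t-4$ are $e_{8t-7}, e_{8t-5}, e_{8t-3}, e_{8t-1}$, and a quick tabulation of their positions in round $8t-4$ shows that none of them is incident to $w^t_1$ or $w^t_2$. Hence the matching in round $8t-4$ remains valid, and the effect of the added swap is simply to exchange $A$ and $B$ (or not). The remaining rounds $8t-3,\dots,8t$ of the scaffold then transport the item tokens currently at $w^t_1$ and $w^t_2$ to $s^{a(s_t,t)}_1$ and $r$, respectively. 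Combined with $s^{a(i,t-1)}_1 = s^{a(i,t)}_1$ for $i\neq s_t$, this gives the configuration claimed by Lemma~\ref{lemma:parallel-forward-lemma-1} after round $8t$.

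Finally, the correspondence with the Star STS state follows from the inductive hypothesis together with the rule governing the added swap: if swap $t$ was not used in the Star STS solution, then $A$ returns to $r$ and $B$ moves into the new active branch $s^{a(s_t,t)}$, matching the unchanged Star STS configuration; if swap $t$ was used, then $A$ and $B$ are exchanged on $w^t_1 w^t_2$, so $B$ ends at $r$ and $A$ ends at $s^{a(s_t,t)}_1$, exactly mirroring the effect of swapping the item at the center with the item in slot $s_t$. The main obstacle is purely bookkeeping: keeping the active-branch relabeling $a(\cdot,t)$ consistent with the scaffold movement and verifying that the inserted swap never conflicts with an enforcement swap, both of which are routine once Lemmas~\ref{lemma:parallel-scaffold-legal} and \ref{lemma:parallel-scaffold-item} are in hand.
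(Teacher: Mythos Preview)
Your proposal is correct and follows essentially the same approach as the paper: both obtain Lemma~\ref{lemma:parallel-forward-lemma-2} as a byproduct of the inductive proof of Lemma~\ref{lemma:parallel-forward-lemma-1}, which in turn mirrors the round-by-round trace in the proof of Lemma~\ref{lemma:parallel-scaffold-item}, noting that after round $8t-5$ the item tokens $A$ and $B$ occupy $w^t_2$ and $w^t_1$ respectively. Your added check that the inserted swap at round $8t-4$ is compatible with the scaffold matching is exactly what the paper asserts separately just before stating the lemma.
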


We can use these two lemmas to prove that $S'$ brings every token to its target destination (within $K$ rounds), and therefore that $S'$ is a solution to the parallel token swapping instance.

Consider first any token that is not an item token. We know that the scaffold solution $S$ brings every such token to its target location. But $S'$ is just $S$ with some swaps added. In particular, by Lemma~\ref{lemma:parallel-forward-lemma-2}, we know that every swap that was added is between two item tokens. Therefore, the result of $S'$ will be the same as the result of $S$, but with the item tokens permuted in some way. In particular, just as $S$ brings the non-item tokens to their target locations, $S'$ does the same.

Now consider the item tokens. Let $i$ be an item token. There are two cases. Either the final position of the corresponding item $i$ is $\pi(i) = 0$ or the final position of item $i$ is some slot $\pi(i)$. 

In the first case, item $i$ ends up at the root ($\pi(i) = 0$). Therefore, in the Star STS solution, item $i$ is at the root after the entire solution. Phrased a little more verbosely, we have that after the first $t=n$ swaps are each either used or not according to the Star STS solution, item $i$ is at the root.  But then by Lemma~\ref{lemma:parallel-forward-lemma-1} applied to $t = n$, we know that the item token at $r$ after $8t = 8n = K$ rounds of $S'$ (the entirety of $S'$) is exactly the item token $i$. But when we were defining the target location of an item token, we gave the item token $i$ for which $\pi(i)=0$ the vertex $r$ as its target location. Therefore in this case the item token is brought to its target location by $S'$. 

In the second case, the final position of item $i$ is some slot $\pi(i)$. Then in the Star STS solution, item $i$ is in slot $\pi(i)$ after the entire solution, or equivalently after the first $t=n$ swaps are each either used or not according to the Star STS solution. Then by Lemma~\ref{lemma:parallel-forward-lemma-1} applied to $t = n$, we know that the item token at $s^{a(\pi(i),t)} = s^{a(\pi(i),n)}$ after $8t = 8n = K$ rounds of $S'$ (the entirety of $S'$) is exactly the item token $i$. If we look back at the target location that was assigned to an item token $i$ with $\pi(i) \ne 0$, the target location was $s^{a(\pi(i), n)}$. Thus, in this case also we see that the item token is brought to its target location by $S'$.

Since $S'$ consists of $K$ legal rounds of swaps which bring every token to its target location, we can conclude that as desired, $S'$ solves the parallel token swapping instance.

\subsubsection{Proof of Lemma~\ref{lemma:parallel-forward-lemma-1} and~Lemma~\ref{lemma:parallel-forward-lemma-2}}

\begin{proof}[Proof of Lemma~\ref{lemma:parallel-forward-lemma-1}]
\emph{
If $0 \le t \le n$ then after round $8t$ of $S$, one of the item tokens is located at $r$ and the other $m$ item tokens are in the active slot branches $s^{a(\cdot, t)}$, each in the vertex of that branch which is adjacent to the root. In particular, consider the state of the Star STS instance after the first $t$ swaps are each either used or not according to the solution. The item token located at $r$ will correspond to the item at the root, while the item token in active slot branch $s^{a(i, t)}$ will correspond to the item in slot $i$.
}

This Lemma can be proved by an inductive argument almost identical to the proof of Lemma~\ref{lemma:parallel-scaffold-item}. The only difference is that instead of tracking the positions of the item tokens as a group, we track the position of each item token individually. Also, due to the swap that may have been added, the two item tokens that were called $A$ and $B$ in the proof may end up swapped (this is because the swap that might have been added to $S$ to get $S'$ depending on the Star STS solution, is a swap of two vertices which contain $A$ and $B$ at the time). Figure~\ref{fig:ParallelSwap} remains an accurate depiction, except that $A$ and $B$ will be swapped after round $8t-4$.

In this new induction, if the Star STS solution does not involve a swap, then the analysis is exactly the same as in the previous proof: $A$ starts at $r$ and ends at $r$, $B$ starts at $s^{a(s_t, t-1)}_1$, and ends at $s^{a(s_t, t)}_1$, and every other item token starts at $s^{a(i, t-1)}_1$ and ends at $s^{a(i, t)}_1$ (which happens to be the same vertex). Under the correspondence with the Star STS instance, this corresponds with none of the items moving.

If, on the other hand, the Star STS solution involves a swap, then as previously mentioned, the end result is that item tokens $A$ and $B$ are swapped. $A$ and $B$ are the tokens which were previously at $r$ and $s^{a(s_t, t-1)}_1$, which means that they corresponded to the items previously at the root and in slot $s_t$. Then the result of these 8 rounds is that the two item tokens are swapped relative to their positions in the previous case, which exactly corresponds to the fact that during step $t$ of the Star STS solution, the item in slot $s_t$ and the item at the root swap.
\end{proof}

\begin{proof}[Proof of Lemma~\ref{lemma:parallel-forward-lemma-2}]
\emph{
During round $8t-4$ of $S'$, the two tokens in vertices $w^t_1$ and $w^t_2$ will always be item tokens.}

This follows directly from the previous proof, where we saw that the two tokens in these positions at that time are the item tokens referred to as $A$ and $B$.
\end{proof}

\subsection{Parallel Token Swapping solution \texorpdfstring{$\to$}{-->} Star STS solution}
\label{sec:parallel-backward}

Suppose we started with a Star STS instance and used the reduction to construct the corresponding parallel token swapping instance, and that this instance had a solution $S'$.

We will show that $S'$ can be used to construct a solution to the Star STS instance, which together with the results of the previous section shows that the reduction is answer-preserving.

The key idea will be to consider the location of each item token after rounds $0,8,16,\ldots, 8n=K$ of $S'$. In particular, we will care only about whether the token is at the root or otherwise which branch the token is in, not where in a branch the item token can be found.

We know that each enforcement token must move a distance of $K$ from its start location to its target destination. Therefore, the token must be swapped towards its destination during every round of $S'$. Thus every swap from the scaffold solution $S$ must be present in $S'$.

We wish to know how the branches of the item tokens change over time. But a token switches branches only by passing through $r$. Therefore, we can consider which swaps of $S'$ involve $r$. In fact, the enforcement tokens were specifically chosen so that $S$ would include a swap involving $r$ on every round. Therefore, the only swaps involving $r$ in $S'$ are the swaps involving $r$ from $S$. What are those swaps?

Well if $d$ is odd, then on round $d$, enforcement token $e_d$ swaps into $r$ from vertex $x^d_1$, and on round $d+1$, enforcement token $e_d$ swaps out of $r$ and into vertex $y^d_1$. 

Phrased differently, if $d$ is odd, then on round $d$, a non-enforcement token moves from $r$ into branch $x^d$, and on round $d+1$, a non-enforcement token moves from branch $y^d$ into $r$.

Then let us consider the 8 such movements starting after round $8t-8$ and ending after round $8t$:

\begin{itemize}
    \item On round $8t-7$, the token at $r$ moves into branch $x^{8t-7}=w^t$
    \item On round $8t-6$, some non-enforcement token in branch $y^{8t-7}=s^{a(s_t, t-1)}$ moves into $r$
    \item On round $8t-5$, the token at $r$ moves into branch $x^{8t-5}=w^t$
    \item On round $8t-4$, some non-enforcement token in branch $y^{8t-5}=g$ moves into $r$
    \item On round $8t-3$, the token at $r$ moves into branch $x^{8t-3}=g'$
    \item On round $8t-2$, some non-enforcement token in branch $y^{8t-3}=w^t$ moves into $r$
    \item On round $8t-1$, the token at $r$ moves into branch $x^{8t-1}=s^{a(s_t, t)}$
    \item On round $8t$, some non-enforcement token in branch $y^{8t-1}=w^t$ moves into $r$
\end{itemize}

As before, these correspond to swaps depicted in Figure~\ref{fig:ParallelSwap}. Some of these bullet points can be combined together. Overall, the effect of these 8 rounds (starting after round $8t-8$ and ending after round $8t$) on the branches of non-enforcement tokens is the following:

\begin{enumerate}
    \item the token at $r$ moves into branch $w^t$
    \item some non-enforcement token in branch $s^{a(s_t, t-1)}$ moves into branch $w^t$
    \item some non-enforcement token in branch $g$ moves into branch $g'$
    \item some non-enforcement token in branch $w^t$ moves into branch $s^{a(s_t, t)}$
    \item some non-enforcement token in branch $w^t$ moves into $r$
\end{enumerate}

In fact, we will show that most of these movements are movements of not just non-enforcement tokens, but more specifically item tokens. 

\begin{lemma}
\label{lemma:parallel-backward-lemma}
Every movement of non-enforcement tokens between branches in $S'$ (as described by the above list) is either a movement from branch $g$ into branch $g'$ or a movement of an item token.
\end{lemma}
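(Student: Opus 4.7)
The plan is to prove the lemma by induction on the block number $t$, with the invariant that at every time $8t$ (for $0 \le t \le n$) the $m+1$ item tokens lie exactly at $r$ and at $s^{a(i,t)}_1$ for $1 \le i \le m$, and at no other vertex. The base case $t=0$ holds by construction. Before the inductive step, observe that every enforcement token must travel distance $K$ in $K$ rounds, so every enforcement token in $S'$ is forced to follow its scaffold trajectory; in particular every scaffold swap appears in $S'$, and since every round already contains a scaffold swap involving $r$, no additional swap in $S'$ can involve $r$. Hence the cross-$r$ movements in $S'$ are exactly the eight scaffold swaps per block, and $S'$'s only freedom is to perform additional within-branch swaps that permute non-enforcement tokens within their branches.

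First I would dispose of the $g$-to-$g'$ exception. Since no scaffold swap moves a token from $r$ into branch $g$, and no item starts in $g$, no item is ever in $g$; consequently the swap at round $8t-4$ (which pushes the token at $g_1$ into $r$) always carries a filler. Symmetrically, no scaffold swap moves a token out of $g'$, so any item that ever entered $g'$ would be trapped and could not reach its target, contradicting the fact that $S'$ is a valid solution. Hence no item ever lies in $g'$, and the swap at round $8t-3$ (which pushes the token at $r$ into $g'_1$) also carries a filler. Together these two rounds realize list movement~3 from the excerpt, performed entirely by a single filler transported from $g$ via $r$ to $g'$, matching the exception in the lemma.

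For the inductive step I would verify that list movements 1, 2, 4, and 5 each transport an item. List movement~1 begins at $r$, which holds an item by the inductive hypothesis. For list movement~2, the crucial observation is that the slot branch $s^{a(s_t,t-1)}$ has exactly one outgoing cross-$r$ swap in the entire solution (at round $8t-6$) and is never active for any slot at a later time, so an item inside this branch at the start of block $t$ must exit at round $8t-6$; in particular, any additional swap of $S'$ that moves this item away from $s^{a(s_t,t-1)}_1$ without restoring it before round $8t-6$ would strand it in a dead-end, contradicting validity. By a symmetric argument, the two items deposited into $w^t$ by list movements 1 and 2 must both exit via the two outgoing cross-$r$ swaps at rounds $8t-2$ and $8t$ (since $w^t$ has no cross-$r$ swaps in any other block), forcing list movements 4 and 5 to carry items as well. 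The main obstacle in executing this plan is a careful case analysis of the within-branch swaps that $S'$ may perform during block $t$: one must show that every such perturbation either reduces to a harmless permutation of the two item tokens inside $w^t$ (notably the optional swap between $w^t_1$ and $w^t_2$ used in the forward direction to encode a Star STS swap) or strands an item somewhere, contradicting the assumption that $S'$ is a valid solution. Once these perturbations are accounted for, the two items involved in block $t$ end up at $r$ and at $s^{m+t}_1 = s^{a(s_t,t)}_1$, the other items remain at their $s^{a(i,t-1)}_1 = s^{a(i,t)}_1$ positions, and the induction closes at time $8t$.
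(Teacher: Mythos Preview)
Your dead-end/stranding argument is the right idea, and it is essentially the mechanism the paper uses too. However, there is a genuine gap in your inductive framework: the invariant you state --- that at every time $8t$ the item tokens sit \emph{exactly} at $r$ and at the vertices $s^{a(i,t)}_1$ --- is stronger than what your argument establishes, and it is in fact false for some valid solutions $S'$. Concretely, take any slot $i \ne s_t$ in block $t$ and an $S'$ that, in addition to the scaffold, swaps $(s^{a(i,t)}_1, s^{a(i,t)}_2)$ at round $1$ and swaps them back at round $9$; no scaffold swap touches these two vertices at those rounds, and all tokens still reach their targets, yet at time $8$ the item for slot $i$ sits at $s^{a(i,t)}_2$, not $s^{a(i,t)}_1$. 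Your closing claim that ``the other items remain at their $s^{a(i,t-1)}_1$ positions'' therefore does not follow, and your induction does not close as stated.

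The fix is simply to weaken the invariant to branch membership: at time $8t$ there is one item at $r$ and one item somewhere in each branch $s^{a(i,t)}$. All of your stranding arguments for movements $1$, $2$, $4$, $5$ only use (and only prove) this weaker statement --- e.g., for movement~$2$ you correctly argue that the unique item in $s^{a(s_t,t-1)}$ must be at position $1$ at round $8t-6$ because otherwise it is trapped there, but this forces position $1$ \emph{at that round}, not at time $8(t-1)$. With the weakened invariant the induction closes cleanly. The paper sidesteps this issue by a different inductive structure: rather than inducting on block number and carrying a positional invariant, it first records three global ``degree'' facts (each slot branch has at most one cross-$r$ entry and at most one cross-$r$ exit; each swap branch has exactly two entries followed by two exits) and then inducts directly along the ordered list of cross-$r$ movements, showing each one must carry an item because earlier movements already deposited items in the source branch. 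Both routes rest on the same stranding logic; the paper's decomposition just avoids ever committing to vertex-level positions.
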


Assuming that can be proven, we can then show that the movement of item tokens between branches in $S'$ corresponds to the movement of items between slots in the Star STS instance. In particular, after round $8t$ for $0 \le t \le n$, there will be one item token at $r$, and one item token in slot branch $s^{a(i, t)}$ for each $i \in \{1, \ldots, m\}$. Furthermore, the change between round $8(t-1)$ and round $8t$ will always correspond to either using or not using the swap $s_t$.

Certainly, for $t=0$, this holds: there is one item token, $0$, at $r$, and each item token $i$ for $i\in \{1, \ldots, m\}$ is in slot branch $s^{a(i, 0)} = s^i$. In the Star STS instance, item $0$ is at the root, while each other item $i$ is in slot $i$.

Next, suppose by the inductive hypothesis that for some $1 \le t \le n$, the above statement holds with $t-1$: after round $8(t-1)$ of $S'$, there was one item token at $r$, and one item token in slot branch $s^{a(i, t)}$ for each $i \in \{1, \ldots, m\}$. Let's see what movements of item tokens happen between rounds $8(t-1)$ and $8t$. According to the list above, (1) the item token at $r$ moves into branch $w^t$, (2)
the item token in branch $s^{a(s_t, t-1)}$ moves into branch $w^t$, (3) some non-enforcement token in branch $g$ moves into branch $g'$, (4) an item token in branch $w^t$ moves into branch $s^{a(s_t, t)}$, and (5) the final item token in branch $w^t$ moves into $r$. Note that since there are no item tokens in $g$, the token moving in (3) must be a filler token. Thus, the overall movement of item tokens is this: first the item tokens at $r$ and in branch $s^{a(s_t, t-1)}$ are moved into $w^t$, then one of those tokens moves into branch $s^{a(s_t, t)}$, and the other moves to $r$. 

Every other item token, in branch $s^{a(i, t-1)}$ with $i \ne s_t$, stays in that branch. But by the definition of $a$, $a(i, t) = a(i, t-1)$ precisely when $i \ne s_t$.
So the item token from branch $s^{a(i, t-1)}$ with $i \ne s_t$ ended up in branch $s^{a(i, t)}$, while the two item tokens at $r$ and in branch $s^{a(s_t, t-1)}$ ended up at $r$ and in branch $s^{a(s_t, t)}$, in some order.

Then the inductive hypothesis holds: after round $8t$ of $S'$, there is one item token at $r$, and one item token in slot branch $s^{a(i, t)}$ for each $i \in \{1, \ldots, m\}$. Furthermore, as we wanted to prove, the change between round $8(t-1)$ and round $8t$ always corresponds to either using or not using the swap $s_t$, depending on the arrangement of the two item tokens that moved from vertex $r$ and branch $s^{a(s_t, t-1)}$ to vertex $r$ and branch $s^{a(s_t, t)}$.

Thus, we can simply read off the movement of the items in a solution to the Star STS instance from the movement of the item tokens in $S'$. Now all that remains is to prove the needed lemma.

\begin{proof}[Proof of Lemma~\ref{lemma:parallel-backward-lemma}]
\emph{
Every movement of non-enforcement tokens between branches in $S'$ (as described by the above list) is either a movement from branch $g$ into branch $g'$ or a movement of an item token.}

The list is reproduced below:

\begin{enumerate}
    \item the token at $r$ moves into branch $w^t$
    \item some non-enforcement token in branch $s^{a(s_t, t-1)}$ moves into branch $w^t$
    \item some non-enforcement token in branch $g$ moves into branch $g'$
    \item some non-enforcement token in branch $w^t$ moves into branch $s^{a(s_t, t)}$
    \item some non-enforcement token in branch $w^t$ moves into $r$
\end{enumerate}

For the moment, assume the following three statements which will be argued at the end of the proof:

\begin{enumerate}
    \item For every slot branch $s^i$, there is exactly one way that an item token can get into the branch; either the slot token starts there or there is one movement according to the list which can move a non-enforcement token into the branch.
    \item Similarly, for every slot branch $s^i$, either there is an item token that has that branch as a target location, or there is exactly one movement according to the list which can move a non-enforcement token out of the branch, but not both.
    \item Finally, every swap branch $w^t$ has two movements on the list that can move a non-enforcement token into $w^t$, followed by two movements that can move a non-enforcement token out of $w^t$.
\end{enumerate}

We can combine these facts into a proof by induction. Consider all of the movements from the list other than those from $g$ to $g'$ in order according to $S'$. We will prove that all of these movements are movements of an item token.

Suppose this is true up to a certain movement, call it $M$ in the list. 

Case 1: $M$ moves a non-enforcement token from a slot branch $s^i$ into some other branch. By applying fact 2 to $s^i$, we know that $M$ is the only time that a non-enforcement token can leave branch $s^i$ and also that no item token has a target location in $s^i$. We can also apply fact 1 from the list above to $s^i$, yielding two cases.

Case 1a: $s^i$ started with an item token in it. Since no item token has a target location in $s^i$ and $M$ is the only opportunity for the item token to leave $s^i$, the token must take that opportunity in order to be able to reach its target location. Thus, $M$ must have been used to move an item token.

Case 1b: there is one movement $M'$ somewhere in the list which can move a non-enforcement token into $s^i$. $M'$ must have occurred before $M$. This is because $M'$ corresponds to an enforcement token leaving $s^i$, $M$ corresponds to an enforcement token entering $s^i$, and as we saw in Lemma~\ref{lemma:parallel-scaffold-legal}, an enforcement token never enters a branch before another enforcement token leaves that same branch. Since $M'$ occurred before $M$, the inductive hypothesis applies: $M'$ must have moved an item token. But then there is an item token currently in $s^i$. As in case 1a, $M$ is the only opportunity for that item token to leave $s^i$, and so the token must take that opportunity in order to be able to reach its target location. Thus, $M$ must have been used to move an item token.

Case 2: $M$ moves a non-enforcement token from a swap branch $w^t$ into another branch. Applying fact 3 from above, we see that $M$ must be one of the two movements $M_{out}^1$ or $M_{out}^2$ out of $w^t$ and must be preceded by exactly two movements $M_{in}^1$ and $M_{in}^2$ into $w^t$. The inductive hypothesis applies to both $M_{in}^1$ and $M_{in}^2$ since they precede $M$. Therefore, each of them moved an item token into $w^t$. As a result, after $M_{in}^1$ and $M_{in}^2$, there were two item tokens in $w^t$, neither of which has a target location in $w^t$. Since there are only two opportunities for an item token to leave $w^t$ ($M_{out}^1$ and $M_{out}^2$), it must be the case that the two item tokens make use of those two opportunities to leave the swap branch, since otherwise they would be unable to reach their target locations. Thus, both $M_{out}^1$ and $M_{out}^2$ must move an item token, and so $M$, which is one of the two, must have been used to move an item token.

Since in all cases we conclude that $M$ moved an item token, the proof by induction goes through; every movement on the list other than those from $g$ to $g'$ must move an item token.

All that's left is to confirm the three facts listed at the top of this proof.

Proof of Fact 1: The only movements on the list that are directed into a slot branch $s^i$ are the movements from $w^t$ to $s^{a(s_t, t)}$ for $1\le t \le n$. There are two cases. If $i \in \{a(j, 0) \mid 1 \le j \le m\}=\{1, \ldots, m\}$, then $1 \le i \le m$. On the other hand $a(s_t, t) = m+t > m$, and so in this case the branch $s^{a(s_t, t)}$ will never equal $s^i$. We also see that in this case $s^i$ is exactly one of the slot branches that initially contains an item token. On the other hand, if $i \not\in \{a(j, 0) \mid 1 \le j \le m\}$ then $i > m$, and so $a(s_{i-m}, {i-m}) = m + (i-m) = i$. Thus, exactly one of the movements from $w^t$ to $s^{a(s_t, t)}$ (in particular the one with $t = i-m$) will move a non-enforcement token into $s^i$. Notice, however, that in this case $s^i$ does not start with an item token. Thus we have shown exactly fact 1: for every slot branch $s^i$, there is exactly one way that an item token can get into the branch; either the slot token starts there or there is one movement according to the list which can move a non-enforcement token into the branch.

Proof of Fact 2: The only movements on the list that are directed out of a slot branch $s^i$ are the ones from $s^{a(s_t, t-1)}$ to $w^t$. Recall the recursive definition of $a$: $a(\cdot, 0)$ takes on the values $1$ through $m$, $a(\cdot, 1)$ replaces one of those values, $a(s_1,0)$ with $a(s_1, 1) = m+1$, then $a(\cdot, 2)$ replaces one of the remaining values, $a(s_2, 1)$, with $a(s_2, 2) = m+2$, etc... If $i \in \{a(j, n) \mid 1 \le j \le m\}$, then the value $i$ was never replaced out of the range of $a(\cdot, t)$ all the way until $t = n$. Therefore, in this case we see that $a(s_t, t-1)$ was not equal to $i$ for any $t$, and so none of the movements from $s^{a(s_t, t-1)}$ to $w^t$ are directed out of $s^i$. Note, however, that since $i \in \{a(j, n) \mid 1 \le j \le m\}$, we have that $s^i = s^{a(j, n)}$ for some $j$, and therefore that $s^i$ contains the target destination of some item token. On the other hand, if $i \not\in \{a(j, n) \mid 1 \le j \le m\}$, then the value $i$ was replaced out of the range of $a(\cdot, t)$ at some $t$. Therefore, in this case we see that $a(s_t, t-1)$ was equal to $i$ for exactly one $t$, and so exactly one of the movements from $s^{a(s_t, t-1)}$ to $w^t$ must have been directed out of $s^i$. In other words, we have demonstrated fact 2: for every slot branch $s^i$, either there is an item token that has that branch as a target location, or there is exactly one movement according to the list which can move a non-enforcement token out of the branch, but not both.

Proof of Fact 3: Consider any swap branch $w^t$, the only occurrences of that branch are for the instance of the list of movements which corresponds with this specific value of $t$. In this instance of the list, there are four movements involving $w^t$. There are two movements on the list that can move a non-enforcement token into $w^t$, followed by two movements that can move a non-enforcement token out of $w^t$.
\end{proof}
}

\section{Open Problems}

Many interesting related problems remain open.
For sequential token swapping on trees, where is the divide between NP-complete and polynomial-time?  
Our reduction's tree is 
a subdivided star
(as
for parallel token swapping)
with the addition of one extra leaf (the nook) per path.
By contrast, there is a polynomial-time algorithm for
the case of a broom (a star with only one edge subdivided)
\cite{vaughan1999broom,kawahara2019time,tree-token-swapping}. What about a star with two subdivided edges?

\ifabstract
\fi

\iffull
\fi
For parallel token swapping, even the case of a single long path is open.
Kawahara et al.~\cite{kawahara2019time} gave an additive approximation
algorithm that uses at most one extra round.  Is there an optimal algorithm,
or is parallel token swapping NP-hard on paths?

There are also open problems in approximation
algorithms.  In parallel token swapping, we know that there is no PTAS~\cite{kawahara2019time}. 
Is there an $O(1)$-approximation for trees or general graphs? For sequential token swapping, 
there is a 4-approximation algorithm~\cite{miltzow2016approximation}. 
Is 4 a lower bound on the approximation factor of this algorithm?
Is $4$-approximation the best possible for general graphs?  

Although this paper focused on the best reconfiguration sequence, much research is devoted to understanding the worst-case behavior for a given graph; is it NP-hard to determine the diameter of the Cayley graph, i.e., the maximum number of reconfiguration steps that can be
required for any pair of token configurations?  This problem is open for
both sequential token swapping (implicit in~\cite{chitturi2019sorting})
and parallel token swapping~\cite{alon1994routing}.

\section*{Acknowledgments}

This research was initiated at the 34th Bellairs Winter Workshop on
Computational Geometry, co-organized by Erik Demaine and
Godfried Toussaint, held on March 22--29, 2019 in Holetown, Barbados.
We thank the other participants of that workshop for providing
a stimulating research environment.

\bibliography{refs}
\appendix

\section{Star Subsequence Token Swapping Reachability (Star STS) is NP-complete}
\label{appendixA}

Our starting point is the NP-complete problem called 
``permutation generation'' in Garey and Johnson~\cite[MS6]{Garey-Johnson}, and called 
WPPSG (Word Problem for Products of Symmetric Groups) in~\cite{garey1980complexity}:

\begin{quote}
\textbf{WPPSG.} Given sets $X_1, X_2,  \ldots, X_n$, where $X_i \subseteq \{1, 2, \ldots , m\}$, and a permutation 
$\pi$ of  $\{1, \ldots, m\}$,
can $\pi$ be written as $\pi = \pi_1 \cdot \pi_2 \cdots \pi_n$ where
$\pi_i$ is a permutation of $X_i$ for $1 \le i \le n$?
\end{quote}

As noted  in~\cite[Corollary, p.~224]{garey1980complexity} WPPSG remains NP-complete when each subset $X_i$ has size 2, i.e.,~the  input is a sequence of swaps (transpositions) and the given permutation must be expressed as a subsequence of  the given sequence of  swaps.
This  special case is called WPPSG2 in~\cite{garey1980complexity}, and called ``Swap or not Reachability'' in~\cite{THEGAME_JCDCGGG}.
By  considering the transpositions as edges of a (complete) graph, the problem can be expressed as the following token swapping reachability problem.

\begin{quote}
\textbf{Subsequence Token Swapping Reachability (STS):}  Given a graph on $m$ vertices where vertex $i$ initially has a token $i$, and given a  target permutation $\pi$ of the tokens  and a sequence  of swaps $s_1, s_2, \ldots, s_{n}$, where $s_j$ is an  edge of the graph, is there a subsequence of the swaps that realizes $\pi$?
\end{quote}

We prove that STS remains NP-complete when the underlying graph is a star.  

\begin{lemma} 
\label{lemma:StarSTS}
Star STS is NP-complete,  even  in the special case where the target permutation places token 0 back on the center.
\end{lemma}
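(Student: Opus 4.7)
The plan is to reduce from WPPSG2, which the excerpt cites as NP-complete. Given a WPPSG2 instance with transpositions $(u_1, v_1), \ldots, (u_n, v_n)$ on $\{1, \ldots, m\}$ and target permutation $\pi$, I construct the Star STS instance whose star has center $0$ and leaves $\{1, \ldots, m\}$, whose target $\pi'$ extends $\pi$ by $\pi'(0) = 0$, and whose swap sequence is the length-$3n$ string $u_1, v_1, u_1, u_2, v_2, u_2, \ldots, u_n, v_n, u_n$. The reduction clearly runs in polynomial time, and Star STS is in NP because any subsequence of length at most $3n$ serves as a polynomial-size certificate.

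The forward direction is straightforward. Given a WPPSG2 solution $S \subseteq \{1, \ldots, n\}$, I pick all three swaps of triple $i$ in the Star STS sequence exactly when $i \in S$ (and no swap from triple $i$ otherwise). The identity $(0, u_i)(0, v_i)(0, u_i) = (u_i, v_i)$ shows that each full triple realizes the desired transposition while sending token $0$ on the round trip center $\to u_i \to u_i \to$ center (so the triple fixes $0$). Composed in the given order, the chosen subsequence realizes $\prod_{i \in S}(u_i, v_i) = \pi$ on the leaves and fixes $0$, hence realizes $\pi'$.

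The backward direction is where the main work lies. Given any Star STS subsequence $\sigma$ realizing $\pi'$, I need to exhibit a WPPSG2 subset realizing $\pi$. The structural handle is the constraint $\pi'(0) = 0$: token $0$'s trajectory in $\sigma$ must start and end at the center, canonically partitioning $\sigma$ into rounds, where each round is a maximal sub-sequence that opens and closes with two swaps on the same leaf (the leaf to which token $0$ makes its excursion during that round), with the intervening chosen swaps affecting only the other tokens. My plan is to show, by induction on $n$, that every such $\sigma$ can be rewritten without changing the realized permutation so that every round aligns with a full triple of the sequence and each triple is either entirely included or entirely excluded; the fully included indices then form the sought WPPSG2 subset $S$. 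The main obstacle will be handling cross-triple rounds, where the opening and closing swap of a single round come from different triples (which must share a leaf label). To neutralize them I intend to apply local group-theoretic identities such as $(0, a)(0, a) = e$ and conjugation relations, together with induction on the number of triples, to rearrange $\sigma$ until only complete triples remain.
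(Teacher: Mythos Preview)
Your reduction is not answer-preserving: the backward direction fails. Here is a concrete counterexample. Take the WPPSG2 instance on ground set $\{1,2,3,4\}$ with transpositions $\tau_1=(1,2)$, $\tau_2=(3,4)$, $\tau_3=(1,2)$ and target $\pi=(1\,2\,3)$. The only products of subsequences of $\tau_1,\tau_2,\tau_3$ are $e$, $(1,2)$, $(3,4)$, and $(1,2)(3,4)$, so the WPPSG2 answer is NO. Your Star STS instance has swap sequence $1,2,1,3,4,3,1,2,1$ and target $\pi'$ fixing $0$ and $4$ with $\pi'|_{\{1,2,3\}}=(1\,2\,3)$. But the subsequence at positions $1,2,4,7$ --- namely $1,2,3,1$ --- realizes exactly $(1\,2\,3)$ while returning token $0$ to the center, so the Star STS answer is YES. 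The obstacle you identified (``cross-triple rounds'') is fatal, not merely a technicality to be massaged away: once a round's opening $u_i$ and closing $u_j$ come from different triples, the leaves visited in between can come from arbitrary intervening triples, producing cycles that no subset of the original transpositions can realize.

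The paper avoids this by adding two \emph{fresh} leaves $s_i^{\mathrm{in}}, s_i^{\mathrm{out}}$ for each swap $s_i=(a_i,b_i)$, replacing $s_i$ by the six-element block $s_i^{\mathrm{in}}, a_i, b_i, a_i, s_i^{\mathrm{out}}, s_i^{\mathrm{in}}$, and requiring the target to swap tokens $s_i^{\mathrm{in}}$ and $s_i^{\mathrm{out}}$. Because these leaves appear nowhere else, all three of $s_i^{\mathrm{in}}, s_i^{\mathrm{out}}, s_i^{\mathrm{in}}$ must be used, which pins token $s_i^{\mathrm{in}}$ at the center across the inner $a_i,b_i,a_i$ portion; this forces the inner portion to be taken either entirely or not at all, and prevents any interaction between blocks. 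Your triples $u_i,v_i,u_i$ lack such a guard, so nothing stops a round from straddling multiple triples.
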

\begin{proof}[Proof of Lemma~\ref{lemma:StarSTS}]
We give a polynomial time reduction from STS to Star STS.
Suppose the input to STS consists of the target  permutation $\pi \in S_m$  and the swaps $s_i = (a_i, b_i)$, $1 \le i \le n$, where $a_i, b_i \in  \{1, \ldots, m\}$.

Construct an  instance of Star STS as follows.
Make a leaf of the star for each of the $m$ vertices of the input graph. Also, for each swap $s_i$ in the input sequence, make two leaves $s^{\rm in}_i$  and  $s^{\rm out}_i$. 
Each leaf  has a token of its same  label, and the center vertex of the star  has token  $0$. 
Define the target permutation such that token  $0$ remains on the center vertex, tokens $1,  \ldots, m$ permute according to the input permutation $\pi$, and tokens $s^{\rm in}_i$  and  $s^{\rm out}_i$ must switch positions for $1 \le i \le n$. 

To construct the swap sequence, replace each $s_i= (a_i, b_i)$ by the 6-element sequence $s^{\rm in}_i,  a_i, \allowbreak b_i, \allowbreak a_i, \allowbreak s^{\rm out}_i,  s^{\rm in}_i$ (recall that an element $x$ indicates a swap on edge $(0,x)$). 
Observe that this construction  can be carried out in polynomial time.

If the original instance of STS has a solution, then we get a solution for the constructed instance of Star STS as follows.
If $s_i$ is chosen in the original, then choose the  corresponding 6-element sequence, and otherwise, choose only $s^{\rm in}_i$, $s^{\rm out}_i$, $s^{\rm in}_i$. 

In  the other direction,  suppose the instance of  Star STS has a solution.  For each $i$, we must use the subsequence $s^{\rm in}_i$, $s^{\rm out}_i$, and $s^{\rm in}_i$ because 
no other swaps operate on vertices $s^{\rm in}_i$  and  $s^{\rm out}_i$, and all three of these swaps are needed to re-position the tokens correctly, i.e.~there
is no other way to switch the  positions of tokens $s^{\rm in}_i$  and  $s^{\rm out}_i$. 
Note that the first swap places token $s^{\rm in}_i$ at the  center vertex  of the star.  Now consider the swaps $a_i, b_i, a_i$.  If we use all three, then this corresponds to using $s_i$,  and  if we use none  of them,  then this corresponds to not using $s_i$.  
We claim that these are the only possibilities because, if we
choose $a_i,a_i$ it does nothing, and if we choose any other proper non-empty subsequece  of $a_i, b_i, a_i$ then token $s^{\rm in}_i$, would  not be  at the  center  vertex when we perform $s^{\rm out}_i$, and $s^{\rm in}_i$,  and  thus would not reach its target vertex. 
\end{proof}


%




\end{document}